\theoremstyle{plain}
\DeclareMathOperator{\card}{card}
\newcommand{\execs}{\mathtt{execs}}
\newcommand{\DEL}{\ensuremath{\mathtt{DEL}}}
\newcommand{\Nexts}{\textit{Nexts}}
\newcommand{\HO}{\mathcal{HO}}
\newcommand{\HOProd}{\mathtt{HOProd}}
\newcommand{\obliv}{\textit{obliv}}
\newcommand{\cons}{\textit{cons}}
\newcommand{\rint}[1]{[#1[} 
\newcommand{\combi}{\bigotimes}
\keywords{Message-passing Models, Asynchronous Rounds, Failures, Heard-Of Model}
\begin{document}

\title[Characterization and Derivation of Heard-Of Predicates]{Characterization and Derivation of Heard-Of Predicates for Asynchronous Message-Passing Models}

\author{Adam Shimi}
\author{Aurélie Hurault}
\author{Philippe Queinnec}

\address{IRIT --- Université de Toulouse, 2 rue Camichel, F-31000 Toulouse, France}
\email{\{adam.shimi,aurelie.hurault,philippe.queinnec\}@irit.fr}

\begin{abstract}
In distributed computing, multiple processes interact to solve a problem
together. The main model of interaction is the message-passing model, where processes
communicate by exchanging messages. Nevertheless, there are several
models varying along important dimensions: degree of synchrony,
kinds of faults, number of faults\ldots{} This variety is compounded by the lack of
a general formalism in which to abstract these models, for translating results from
one to the other.
One way to bring order to these models is to constrain them further, to communicate in
rounds. This is the setting of the Heard-Of model, which captures many models
through predicates on the messages sent in a round and received on time (at
this round or before, where the round is the one of the receiver).
Yet, it is not easy to define the
predicate that best captures a given operational model. The question is even
harder for the asynchronous case, as unbounded message delay means the
implementation of rounds must depend on details of the model.

This paper shows that characterising asynchronous models by heard-of
predicates is indeed meaningful. This characterization relies on the
introduction of delivered predicates, an intermediate abstraction between
the informal operational model and the heard-of predicates. Our approach splits
the problem into two steps: first extract the delivered model capturing the
informal model, and then characterize the heard-of predicates that can be
generated by this delivered model.
For the first part, we provide examples of delivered predicates, and an
approach to derive more. It uses the intuition that complex models are
a composition of simpler models. We thus define operations like
union, succession or repetition that make it easier to derive complex
delivered predicates from simple ones while retaining expressivity. For the
second part, we formalize and study strategies for when to change rounds.
Intuitively, the characterizing predicate of a model is the one generated by
a strategy that waits for as much messages as possible, without blocking
forever.
\end{abstract}

\maketitle

\section{Introduction}%
\label{sec:chap2Intro}

\subsection{Motivation}

Distributed computing studies how multiple processes can accomplish computational tasks
by interacting with each other. Various means of communication are traditionally studied;
we focus here on message-passing, where processes exchange messages. Yet,
message-passing models still abound: they might have various degrees of synchrony (how
much processes can drift of from each other in term of processing speed or communication),
different kinds of faults (processes crashing, processes crashing and restarting,
message loss, message corruption\ldots{}), different network topologies\ldots{}
Although some of these differences are quantitative, such as the number of faults,
others are qualitative, like the kinds of faults. This in turn means that
abstracting most message-passing models into one formal framework proves difficult.
Indeed, most works in the literature limit themselves to either a narrow subset of such models
that can easily be abstracted together, or consider a handful of models one by one.
Hence it is difficult to compare, organize and extend results from
the literature to other message-passing models.
Formal verification is also made significantly
harder by the need to derive a formal version of each model.

One step towards unifying a broad range of message-passing models is to
abstract them through the messages which will be received. A crashed process would thus
be considered as a silent process, for example. Yet this information --- which messages
can be received --- is hard to capture in a simple mathematical object. This is one reason
why these proposals further constrain communication to explicitly use rounds: each
process repeatedly broadcasts a message with its current round number, waits
for some messages bearing this round number, and changes round by computing
its next state and its next message. The distributed algorithm tells each process
how to change its state at the end of each round, and which message to broadcast
in the next round.
Then the model can be abstracted through which message will be received by which
process at each round, and each such possibility is
represented either through a function or a sequence of graphs.
At first glance, forcing rounds onto these models might seem like a strong constraint,
severely limiting the generality of this unification. But rounds are actually ubiquitous
in distributed computing: they are present in the complexity analyses of synchronous
message-passing models, in the fault-tolerant algorithms for asynchronous models,
in the algorithms using failure-detectors\ldots{}
Since the use of rounds for this abstraction doesn't force them to be either
synchronous or asynchronous, it captures both.

Combining abstraction through received messages and rounds yields
the Heard-Of model of Charron-Bost and Schiper~\cite{CharronBostHO}. It abstracts
message-passing models through heard-of predicates, predicates over heard-of collections.
A heard-of collection is a function/sequence of graphs capturing
which messages are received across rounds in one execution.
The Heard-Of model thus boils down the problem of comparing and ordering message-passing
models to the one of comparing the corresponding heard-of predicate. But which heard-of
predicate captures a given message-passing model?
Heard-of predicates are properties of the rounds that can be implemented on top of
the original model. Hence this question requires a system-level approach: looking for rules
for when to change rounds that ensure that no process is blocked forever at a round.

In the synchronous case, where there is a bound on communication delay, every implementation
of rounds used in the literature implements rounds by waiting for the bound. This ensures
both the progress of rounds and the reception of every message from the round at that round.
Since this gives the maximal information to each process at each round, this is
the implementation of rounds on which the most problems can be solved.
It thus makes sense to define the heard-of predicate characterizing a synchronous model as
the one capturing this implementation: it simply specifies which messages
can be lost or never sent, and have all the others received on the round they were sent.

On the other hand, by definition, asynchronous models lack any upper bound on communication
delays. This results in a lot of variations on how rounds are implemented, depending on
the model and the problem one is trying to solve.
Waiting so little that the rule doesn't block processes on
any asynchronous model results in trivial heard-of predicates that capture nothing
specific about the model --- for example the rule that always allows to change rounds.
Conversely, rules that are tailored to a specific model, like waiting for $n$ messages, might
block processes in other models (where there might be less than $n$ messages received at
a round).
Finding a corresponding heard-of predicate for an asynchronous model thus requires
a model-specific analysis, as well as some way to choose among the possible rules.
This paper proposes a formalisation of this question, as well as an approach
for answering it for concrete message-passing models.

\subsection{Overview}%
\label{subsec:chap2Overview}

As mentioned above, abstracting message-passing models under one formalism would significantly
help with comparing results across models and formally verifying them. The Heard-Of model
provides such an abstraction, but only if we can define and compute a corresponding
heard-of predicate for asynchronous message-passing models, which is the hardest case due to
lack of an upper bound on communication delay.
We compute this heard-of predicate in two steps, as shown in
Figure~\ref{modelBase}.
We start with the operational model, derive a ``delivered predicate'', and
then find the heard-of predicates that can be implemented by some rule for changing rounds (called a strategy)
for this specific delivered predicate.
Among such predicates, we propose a criterion to choose the one characterizing
the asynchronous message-passing model.

\begin{figure}[t]
  \centering
  \begin{tikzpicture}
    \tikzstyle{link}=[->,>=stealth, line width=.45mm]
        \node[draw, circle, align=center, text width=2cm, inner sep=0pt, minimum size=1cm ] (start) at (-4,0) {\footnotesize Operational Model};
        \node[draw, circle, align=center, text width=2cm, inner sep=0pt, minimum size=1cm ] (mid) at (0,0) {\footnotesize Delivered Predicate};
        \node[draw, circle, align=center, text width=2cm, inner sep=0pt, minimum size=1cm ] (end) at (4,0) {\footnotesize Heard-Of Predicate};

        \draw[link] (start) to node[label={\footnotesize Rounds}] {} (mid);
        \draw[link] (mid) to node[label={\footnotesize Asynchrony}] {} (end);
  \end{tikzpicture}
  \caption{From Classical Model to Heard-Of Predicate}\label{modelBase}
\end{figure}
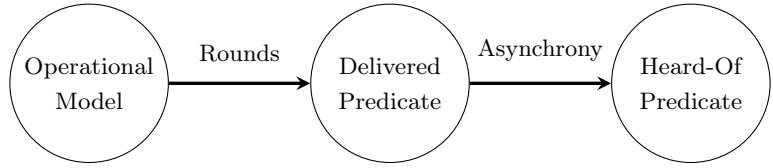

The first step goes from the original asynchronous message-passing model to
a delivered predicate, an abstraction that we introduce.
A delivered predicate captures the messages that are eventually delivered from each round $r$,
without considering the round of delivery. This is to contrast with heard-of
collections, that only capture messages tagged by $r$ if
delivered at the receiver when its local round counter is $\leq r$.
A delivered predicate  makes the original model formal
in the form of a delivered predicate, but it avoids dealing
with the main issue for getting a heard-of predicate: asynchrony.
Because the round of delivery is not considered in a delivered predicate,
computing the corresponding one for our original model does not require
a strategy for when to change rounds. Intuitively, the delivered predicate
of a model is the heard-of predicate of the same model if it was synchronous,
which are relatively straightforward to define and
compute.


The second step goes from the delivered predicate to the heard-of predicate
corresponding to the original model. To do so, we define strategies: rules
that tell when processes can change rounds. The main constraint on such strategies
is to never block a process at a round forever. As mentioned above, the difficulty here
comes from the fact that there are in general many different strategies for implementing
rounds that don't block forever. Which of these should be the corresponding
heard-of predicate?
Our answer relies on the following intuition: the predicate precisely capturing
the asynchronous model is the predicate satisfied by the fewer heard-of collections,
among the set of predicates that can be implemented on top of the original model.
Formally, this translates to being included in
every other heard-of predicate that can be implemented
(considering predicates as sets in the standard way).
Such a predicate, if it exists, intuitively constrains communication the most,
by allowing fewer possibilities for which messages are heard by who at which round.
This choice stems from the relevance of uncertainty for distributed computability.
A distributed algorithm has to give the correct answer independently of the specific
scheduling, the specific failures, and every other source of uncertainty and non-determinism
specified by the model. This means that if everything
that could happen according to model $M_1$ could also happen according to model $M_2$,
then every correct algorithm for $M_2$ is also correct for $M_1$. In the Heard-Of model,
this means that for two heard-of predicates $HO_1$ and $HO_2$ such that
$HO_1 \subseteq HO_2$, then every correct algorithm for $HO_2$ is also
correct on $HO_1$.


With this formalization in hand, what is left is a way to compute the
resulting predicate and prove that it is indeed the strongest.
This problem becomes tractable by introducing operations on predicates (delivered
and heard-of) and strategies.
Operations capture the intuition that it's often easier to build complex models
by composing simple ones together. Hence if the original model can be framed as such
a composition, its delivered predicate can similarly be constructed from the
delivered predicates of the building blocks, thanks to the operations we define.
For some families of strategies (strategies that only depend
on some limited part of the local state of a process, here the messages of the current round or
the messages of past and current rounds respectively), the strategy that
implements the heard-of predicate corresponding to the full model can be
built from the strategies of the building blocks using analogous operations on strategies
that we define. Then the heard-of predicate of this built strategy is linked
with the composition of the heard-of predicates implemented by the building block strategies.

\subsection{Contributions}

The contributions of this article are the following:

\begin{itemize}
  \item The definition of delivered predicates and strategies, in Section~\ref{sec:chap2defPdel}.
  \item Operations on delivered predicates and strategies, to build complex predicates, in Section~\ref{sec:chap2buildingDEL}.
  \item The formalization of the derivation of heard-of predicates
    from a delivered predicate and a strategy, in Section~\ref{sec:chap2fromDelToHO}.
    This comes with a complete example: the asynchronous message-passing model
    with reliable communication and at most $F$ permanent crashes.
  \item The study of oblivious strategies, the strategies only looking
    at messages for the current round, in Section~\ref{sec:chap2obliv}.
    We provide a technique to extract a strategy dominating the oblivious
    strategies of the complex predicate from the strategies of its building blocks;
    exact computations of the generated heard-of predicates;
    and a sufficient condition on the building blocks for the
    result of the operations to be dominated by an oblivious strategy.
  \item The study of conservative strategies, the strategies
    looking at all messages from previous and current round, as well
    as the round number,
    in Section~\ref{sec:chap2conserv}.
    We provide
    a technique to extract a strategy dominating the conservative
    strategies of the complex predicate
    from the strategies of its building blocks;
    upper bounds on the generated heard-of predicates;
    and a sufficient condition on the building blocks for the
    result of the operations to be dominated by a conservative strategy.
  \item A preliminary exploration of strategies using messages
    from future rounds, and an extended example where these strategies
    build stronger heard-of predicates than oblivious and conservative
    strategies, in Section~\ref{sec:chap2future}.
\end{itemize}

\subsection{Related Work}%
\label{subsec:related}

Rounds in message-passing algorithms date at least back
to their use by Arjomandi et al.~\cite{ArjomandiFirstRound} as
a synchronous abstraction of time complexity. Since then, they are
omnipresent in the literature. First, the number of rounds taken by a
distributed computation is a measure of its complexity.
Such round complexity was even developed into a full-fledged
analogous of classical complexity theory by Fraigniaud et
al.~\cite{FraigniaudComplexity}. Rounds also serve as stable
intervals in the dynamic network model championed by Kuhn and
Osham~\cite{KuhnDynamic}: each round corresponds
to a fixed communication graph, the dynamicity following
from possible changes in the graph from round to round. Finally,
many fault-tolerant algorithms are structured in rounds,
both synchronous~\cite{FisherAlgo} and asynchronous ones~\cite{ChandraCHT}.

Although we only consider message-passing models in this article, rounds are
also widely used in shared-memory models. A classic example is the structure
of executions underlying the algebraic topology approach pioneered by
Herlihy and Shavit~\cite{HerlihyTopology}, Saks and
Zaharoglou~\cite{SaksTopology}, and Borowsky and
Gafni~\cite{BorowskyTopology}.

Gafni~\cite{GafniRRFD} proposed a unification
of all versions of rounds with
the Round-by-Round Fault Detector abstraction,
a distributed module analogous to a failure detector which outputs
a set of suspected processes. In a system using RRFD, the end condition
of rounds is the reception of a message from every process not suspected
by the local RRFD module;
communication properties are then defined as predicates on the
output of RRFDs. Unfortunately, this approach fails to guarantee one property
that will prove necessary in the rest of this paper: the termination of
rounds.

Charron-Bost and Schiper~\cite{CharronBostHO} took a dual
approach to Gafni's work with the Heard-Of Model, by combining the concept of a fault
model where the only information is which
message arrives, from Santoro and Widmayer~\cite{SantoroLoss}. Instead of specifying
communication by predicates on a set of suspected processes, they
used heard-of predicates: predicates on a collection of
heard-of sets, one for each round $r$ and each process $j$, containing
every process from which $j$ received the message sent in
round $r$ before the end of this same round.
This conceptual shift brings two advantages: a purely
abstract characterization of message-passing models
and the assumption of infinitely many rounds, thus of round termination.

This model was put to use in many ways. Computability
and complexity results were proven: new algorithms for consensus
in the original paper by Charron-Bost and Schiper~\cite{CharronBostHO};
characterizations for consensus solvability by
Coulouma et al.~\cite{CouloumaConsensus} and Nowak et
al.~\cite{NowakTopoConsensus}; a characterization
for approximate consensus solvability by Charron-Bost et
al.~\cite{CharronBostApprox}; a study of $k$ set-agreement
by Biely et al.~\cite{BielyKSet}; and more.
The clean mathematical abstraction of the Heard-Of model also
works well with formal verification. The rounds provide
structure, and the reasoning can be less operational than
in many distributed computing abstractions. For instance, there exist
a verification with the proof assistant Isabelle/HOL of consensus algorithms
in Charron-Bost et al.~\cite{CharronBostHOL}, cutoff
bounds for the model checking of consensus algorithms
by Mari{\'{c}} et al.~\cite{MaricCutoff},
a DSL to write code following the structure of the Heard-Of model
and verify it with inductive invariants by Dr\u{a}goi et
al.~\cite{DragoiPsync}.

Yet, determining which model implements a given heard-of
predicate is an open question. As mentioned in
Mari{\'{c}}~\cite{MaricCutoff},
the only known works addressing it, one by Hutle and
Schiper~\cite{HutleComPredicates} and the other by Dr\u{a}goi et
al.~\cite{DragoiPsync}, both limit themselves
to specific predicates and partially synchronous system models.

\section{The Heard-Of Model}%
\label{sec:defho}

In the Heard-Of model of Charron-Bost and Schiper~\cite{CharronBostHO}, algorithms are defined by rounds, and an execution is an infinite sequence of rounds. At each round, a process broadcasts, receives, and does a local computation.
The Heard-Of model ensures communication-closedness~\cite{ElradDecomp}: in an algorithm,
processes at a given round only interact with processes at the same round ---
they only consider messages from this round. In the Heard-Of model, executions are necessarily infinite, and processes change rounds infinitely. Nevertheless, this does not prevent an algorithm from terminating, and, for instance to achieve consensus or election: the system reaches a configuration where the local state of each process does not change anymore.

The Heard-Of model constrains
communication through heard-of predicates, which are themselves
predicates on heard-of collections. These predicates play the role of
assumptions about synchrony, faults, network topology, and more.
Usually, heard-of collections are represented as functions
from a round $r$ and a process $p$
to a set of processes --- the processes
from which $p$ heard the message sent
at round $r$ before or during its own round $r$.

\begin{defi}[Heard-Of Collection and Predicate]%
\label{defHO}
  Let $\Pi$ a set of processes.
  A \textbf{heard-of collection} is an element $h$ of $(\mathbb{N}^* \times \Pi) \mapsto \mathcal{P}(\Pi)$.
  The \textbf{heard-of sets} of a heard-of collection are the outputs of this collection.
  A \textbf{heard-of predicate} $\HO$
  for $\Pi$ is a set of heard-of collections, that is an element of
  $\mathcal{P}((\mathbb{N}^* \times \Pi) \mapsto \mathcal{P}(\Pi))$.
\end{defi}

From another perspective, heard-of collections are infinite sequences of
communication graphs --- directed graphs which capture who hears from whom on time,
in that $q \in h(r,p) \iff (q,p)$ is an edge of the $r$-th communication graph.

\begin{defi}[Collection as a Sequence of Directed Graphs]
    Let $Graphs_{\Pi}$ be the set of directed graphs whose nodes are
    the elements of $\Pi$. Then $gr \in {(Graphs_{\Pi})}^{\omega}$
    is a \textbf{heard-of collection}.
    A function $h$ and a sequence $gr$ represent the same collection
    when $\forall r > 0, \forall p \in \Pi: h(r,p)
    = In_{gr[r]}(p)$, where $In(p)$ is the set of incoming neighbors of $p$.
\end{defi}

In general, which perspective to use in a theorem or a proof naturally follows
from the context. For example, $h[r]$ makes sense for a sequence
of directed graphs, while $h(r,p)$ makes sense for a function.

\section{Delivered Predicates: Rounds Without Timing}%
\label{sec:chap2defPdel}

Our concern is asynchronous message-passing models.
What makes the synchronous case easier than the asynchronous case boils down to the equivalence
between the messages that are received at all, and those that are received on time.
This cannot be replicated in the asynchronous case, as each asynchronous model
requires a different rule for which messages to wait for before changing round.
For example, in an asynchronous model with at most $F$ crashes, a process can wait for $n-F$ messages before changing round without risking waiting forever, as at least $n-F$ processes will never crash.
In the asynchronous model with at most $F+1$ crashes,
doing so will get processes blocked in some cases.
Nevertheless, it might be impossible to wait for all the messages that will be
delivered and not block forever. As explained before, in an
asynchronous model with at most $F$ crashes, process wait for $n-F$ messages
before changing round. If less than $F$ processes crash, not all the messages
will be waited for: a process may change round as soon as it has received
$n-F$ messages, and some messages will be received too late and therefore be ignored.

\subsection{Delivered Predicates}

Delivered sets are introduced to distinguish between the set of all
delivered messages (on time or late) and the messages delivered before
changing round (a heard-of set). A delivered collection is a sequence of
delivered sets, and a delivered predicate is a predicate on delivered
collections that defines which messages are received, ignoring changes of
rounds. Observe that a delivered predicate has the same formal definition as
Definition~\ref{defHO} of heard-of predicates --- only the interpretation
changes --, and the graph-based notation similarly applies.

\begin{defi}[Delivered Collection and Predicate]
  Let $\Pi$ a set of processes.
  A \textbf{delivered collection} is an element $c$ of $(\mathbb{N}^* \times \Pi) \mapsto \mathcal{P}(\Pi)$.
    The \textbf{delivered sets} of a delivered collection are the outputs of this collection.
    A \textbf{delivered predicate} $\DEL$
    for $\Pi$ is a set of delivered collections, that is an element of
    $\mathcal{P}((\mathbb{N}^* \times \Pi) \mapsto \mathcal{P}(\Pi))$.

\end{defi}

For examining the difference between heard-of and delivered collections, recall that we're considering the Heard-Of model at a
system level: we're implementing it. Let's take an execution of
some implementation (which needs to satisfy some constraints, defined later in Section~\ref{executions}):
a linear order of
emissions, receptions and changes of rounds (a step where the local round counter
is incremented) for each process. Then if each
process changes round infinitely often, there's a delivered collection $d$
and a heard-of collection $h$ corresponding to this execution --- just look
at which messages sent to $j$ tagged with round $r$ where received at all by $j$
(for $d$), and which were received when the round counter at $j$ was $\leq r$
(for $h$).
That is, for a round $r>0$ and processes $k,j \in \Pi$,
$k \in d(r,j)$ means that $j$ received
at some point the message of $k$ annotated by $r$.
On the other hand, $k \in h(r,j)$ means that
$j$ received  the message of $k$ annotated by $r$ while its round counter was $\leq r$.
Hence the heard-of collection extracted from this execution
captures which messages were waited for (and thus could be used at the algorithm
level --- that's not treated here), whereas the delivered collection extracted
from this execution captures which messages were received at all.
For example, the following scenario:
\begin{center}
  \begin{tikzpicture}[scale = 0.7]
    \tikzstyle{link}=[->,>=stealth, line width=.45mm]
        \node[align=center] at (0,0) {$k_1$};
        \node[align=center] at (0,-1) {$k_2$};
        \node[align=center] at (0,-2) {$k_3$};
        \node[align=center] at (0,-3) {$j$};

     	\draw[] (1,0) -- (13,0);
	\draw[] (1,-1) -- (13,-1);
	\draw[] (1,-2) -- (13,-2);
	\draw[] (1,-3) -- (13,-3);

	\draw[] (5,0.2) -- (5,-0.2);
	\draw[] (7,0.2) -- (7,-0.2);
	\node[align=center] at (6,0.2) { $r$};

	\draw[] (2,-0.8) -- (2,-1.2);
	\draw[] (4.5,-0.8) -- (4.5,-1.2);
	\node[align=center] at (3.25,-0.8) { $r$};

	\draw[] (8,-1.8) -- (8,-2.2);
	\draw[] (12,-1.8) -- (12,-2.2);
	\node[align=center] at (10,-1.8) { $r$};

	\draw[] (4,-2.8) -- (4,-3.2);
	\draw[] (9,-2.8) -- (9,-3.2);
	\node[align=center] at (6.5,-2.8) { $r$};

	\draw[->] (6.5,0) -- (8,-3);
	\draw[->] (2.5,-1) -- (3.5,-3);
	\draw[->] (9,-2) -- (12.5,-3);

  \end{tikzpicture}
\end{center}
leads to: $\{k_1,k_2,k_3\} \subseteq d(r,j)$ and $\{k_1,k_2\} \subseteq h(r,j)$.

To find the delivered predicate corresponding to an asynchronous model, the intuition
is to take the synchronous version of the model, and then take the heard-of predicate that would be
implemented by the rule for changing rounds in synchronous models.
This is the delivered predicate for the model. This
captures the strongest heard-of predicate that could be
implemented on top of this asynchronous model, if processes could
wait for all messages that will be delivered.
In general, they can't, since it requires knowing exactly what's happening
over the whole distributed system.
Nonetheless, the delivered predicate exists,
and it plays the role of an ideal to strive for.
The characterizing heard-of predicate of a model will be the closest
overapproximation of the delivered predicate that can actually be implemented.

\subsection{A Delivered Predicate for at Most \texorpdfstring{$F$}{F} Crashes}

As a first example, we consider the asynchronous model with reliable communication,
and at most $F$ crash failures (where crashes can happen at any point).

\begin{defi}[$\DEL^{crash}_{F}$]%
    \label{delCrash}
    The delivered predicate $\DEL^{crash}_{F}$ for the asynchronous model
    with reliable communication and at most $F$ permanent crashes $\triangleq$
    \[
    \left\{ c, \textit{a delivered collection}
    ~\middle|~
        \forall r > 0, \forall p \in \Pi:
        \begin{array}{ll}
            & \card(c(r,p)) \geq n-F\\
            \land & c(r+1,p) \subseteq K_c(r)\\
        \end{array}
        \right\},\]
        where $\ensuremath{K_c(r)}$ is the kernel of $c$ at $r$: $K_c(r) \triangleq \bigcap\limits_{p \in \Pi} c(r,p)$, and $\card$ is the cardinality function.

\end{defi}

Charron-Bost and
Schiper~\cite[Table 1]{CharronBostHO} define it as the heard-of predicate of
the synchronous version of this model. We now give an argument for why,
if you take the asynchronous model with reliable communication and at most $F$
permanent crashes, and implement communication-closed rounds in any way that
ensures an infinite number of rounds for every process, the messages received
will form a delivered collection of $\DEL^{crash}_F$.
In the other direction, every collection of $\DEL^{crash}_F$ captures
the messages received in an execution of the implementation of rounds on top
of the aforementioned asynchronous model.

\begin{itemize}
  \item Let $t$ be an execution of an implementation of communication-closed
    rounds on top of the asynchronous model above, with the condition of
    ensuring an infinite number of rounds.
    We consider that a crashed process is modelled as a silent process: a
    crashed process will still receive all messages after it crashes, but
    will never do anything else after it crashed. This is observationally
    indistinguishable for the other processes.
    Since every process that has not crashed
    broadcasts, this entails that every process will eventually hear
    the message from every non-crashed process at this round.
    Since there's at most $F$ crashes, that's at least $n-F$ messages
    per round. Hence $\card(c(r,j)) \geq n-F$ for every process $j$ and round~$r$.

    Also, if $p$ hears from $k$ at round $r+1$, then $k$ sent the message
    before crashing. This means $k$ did not crash at its own round $r$,
    and thus that the message it broadcast at that round $r$ was sent,
    and eventually received by all processes.
    Hence $c(r+1,j) \subseteq K_{\textit{c}}(r)$.
  \item Let $c$ be a collection such that
    $\forall r > 0, \forall p \in \Pi: \card(c(r,p)) \geq n-F
    \land c(r+1,p) \subseteq K_{\textit{c}}(r)$. This collection
    corresponds to the execution where the crashed processes are
    the ones that stop broadcasting. Because communication is
    reliable, $k \notin c(r,p)$ means that $k$
    never sent its message to $p$ tagged with $r$. From the model
    this means that it crashed during its broadcast at round $r$ or earlier.
    Each crash thus happens at the first round where the crashed process
    is not heard by everyone, after sending
    the messages that are actually received at this round.
\end{itemize}

\noindent
Later in this article, the heard-of predicate characterizing this delivered predicate
(the most constrained one) is derived.
We mention it here, as a comparison~\cite{CharronBostHO}:
\[ \HO_F  \triangleq \{ h, \textit{a heard-of collection} \mid
\forall r > 0, \forall p \in \Pi: \card(ho(r,p)) \geq n-F\}\]
The difference lies in the kernel condition:
$\DEL^{crash}_{F}$ ensures that if any message sent by $p$
at round $r$ is not eventually delivered, then no message will be delivered
from $p$ at rounds $>r$. Intuitively, $p$ not broadcasting means that it
crashed during round $r$ or earlier,
and that it will never send messages for the next
rounds. However, this is not maintained by $\HO_F$,
as the $n-F$ messages that are waited for are not necessarily
the same for each process. So $k$ might wait for a message from $p$ at round $r$,
but $j$ might receive at least $n-F$ messages at round $r$ without the one from $p$.
$j$ cannot conclude that $p$ has crashed,
because the message from $p$ might just be late.

\subsection{A Delivered Predicate for at Most \texorpdfstring{$L$}{L} Losses}

As a second example, we consider the asynchronous
model without crashes, but with at most $L$ messages lost in the whole execution
(of the system level implementation of rounds).

\begin{defi}[$\DEL^{loss}_{L}$]
    The delivered predicate $\DEL^{loss}_{L}$ for the asynchronous model
    without crashes, and with at most $L$ message losses
    $\triangleq\\
    \left\{ c, \textit{a delivered collection}
    ~\middle|~
        \sum\limits_{r > 0, p \in \Pi}(n - \card(c(r,p))) \leq L
    \right\}$.
\end{defi}

This one is not from Charron-Bost and Schiper~\cite{CharronBostHO}, but
we can apply the same reasoning as for the previous delivered predicate. Here
the sum counts the number of messages that are never delivered.
Since all the processes are correct, this corresponds to the number of lost messages.
For $L = 1$, the best known strategy (to our knowledge) implements the heard-of predicate
\[\{ h, \textit{a heard-of collection} \mid \forall r > 0, \sum\limits_{p \in \Pi} \card(\Pi \setminus ho(r,p)) \leq 1\}.\]
What is lost in implementing a heard-of predicate on top of the delivered predicate $\DEL^{loss}_1$
is that instead of losing only one message over the whole execution,
there might be one loss per round.
This is explained in Section~\ref{sec:chap2future}.

\section{Composing Delivered Predicates}%
\label{sec:chap2buildingDEL}

Finding the delivered predicate for a complex model is difficult. However,
simple models are relatively easy to characterize by a delivered predicate.
This motivates the following proposal to solve the plain part of Figure~\ref{modelDEL}:
composing simple delivered predicates to derive complex delivered predicates.
That way, there will be no need to define by hand the
delivered predicates of complex models.

\begin{figure}[t]
  \centering
  \begin{tikzpicture}
    \tikzstyle{link}=[->,>=stealth, line width=.45mm]
        \node[draw, circle, align=center, text width=2cm, inner sep=0pt, minimum size=1cm ] (start) at (-4,0) {\footnotesize Operational Model};
        \node[draw, circle, align=center, text width=2cm, inner sep=0pt, minimum size=1cm ] (mid) at (0,0) {\footnotesize Delivered Predicate};
        \node[dashed, draw, circle, align=center, text width=2cm, inner sep=0pt, minimum size=1cm ] (end) at (4,0) {\footnotesize Heard-Of Predicate};

        \draw[link,align=center] (start) to node[label={\footnotesize Rounds \\ \footnotesize (Operations)}] {} (mid);
        \draw[link,dashed] (mid) to node[label={\footnotesize Asynchrony}] {} (end);
  \end{tikzpicture}
  \caption{From Classical Model to Delivered Predicate}\label{modelDEL}
\end{figure}
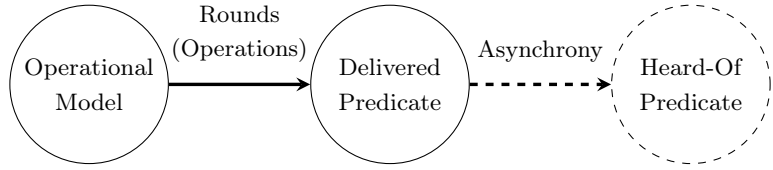

\subsection{Introductory Example}

Consider a system where one process might crash and may
or may not recover later on\footnote{If it does, we can assume that
its memory is intact and no messages received in the meantime are lost,
but that's not important for the system level implementation.}.
In some sense, this behavior is defined by having the delivered collections
for one possible crash that never recovers, and the delivered
collections for one possible crash that must recover.
This amounts to a union (or a disjunction);
we write it $\DEL^{canrecover}_1 \triangleq
\DEL^{crash}_1 \cup \DEL^{recover}_1$.
The first predicate of this union is $\DEL^{crash}_1$, the delivered
predicate for at most one crash that never recovers. Let's consider the second
one, $\DEL^{recover}_1$.
Intuitively, a process that can crash but must recover afterward is
described by the behavior of $\DEL^{crash}_1$ which is shifted to
the behavior of $\DEL^{total}$ (the predicate where all the messages
are delivered) after some time.
We call this the succession of these predicates,
and write it $\DEL^{recover}_1 \triangleq \DEL^{crash}_1 \leadsto \DEL^{total}$.
Finally, imagine adding to this system another permanent crash. The full
behavior is such that there might be one crashed process as constrained by
$\DEL^{crash}_1$, and another crashed process as constrained by
$\DEL^{canrecover}_1$. We call it the combination (or conjunction)
of these predicates, and write it $\DEL^{crash}_1 \combi \DEL^{canrecover}_1$.
The complete system is thus described by $\DEL^{crash}_1 \combi ((\DEL^{crash}_1
\leadsto \DEL^{total}) \cup \DEL^{crash}_1)$.

In the following, we also introduce an operator $\omega$ to express
repetition. For example, a system where, repeatedly, a process can crash and recover is
${(\DEL^{crash}_1 \leadsto \DEL^{total})}^\omega$.

\subsection{Operations on predicates}

Let's now formally define these operations.

\begin{defi}[Operations on predicates]%
    \label{defOpsPred}
    Let $P_1, P_2$ be two delivered or heard-of predicates.
    \begin{itemize}
      \item The \textbf{union} of $P_1$ and $P_2$ is $P_1 \cup P_2$.
      \item The \textbf{combination} $P_1 \combi P_2 \triangleq
        \{c_1 \combi c_2 \mid c_1 \in P_1, c_2 \in P_2 \}$,
        where if $c_1$ and $c_2$ are two collections,
        $\forall r > 0, \forall p \in \Pi: (c_1 \combi c_2)(r,p) =
        c_1(r,p) \cap c_2(r,p)$.
      \item The \textbf{succession} $P_1 \leadsto P_2 \triangleq
        \bigcup\limits_{c_1 \in P_1, c_2 \in P_2} c_1 \leadsto c_2$,\\
        with $c_1 \leadsto c_2 \triangleq \{ c \mid
        \exists r \geq 0 : c = c_1[1,r].c_2\}$ ($c_1[1,0]$ is the empty
        sequence).
      \item The \textbf{repetition} of $P_1$, ${(P_1)}^{\omega} \triangleq
        \{c \mid \exists {(c_i)}_{i \in \mathbb{N}^*},
        \exists {(r_i)}_{i \in \mathbb{N}^*}:
        r_1 = 0 \land
        \forall i \in \mathbb{N}^*:
        (c_i \in P_1 \land r_{i} < r_{i+1} \land
        c[r_i+1,r_{i+1}]=c_i[1,r_{i+1} - r_i]) \}$.
    \end{itemize}
\end{defi}

\noindent
The intuition behind these operations is the following:
\begin{itemize}
  \item The union of two delivered predicates is equivalent
    to an OR on the two communication behaviors. For example,
    the union of the delivered predicate for one crash at round $r$
    and of the one for one crash at round $r+1$ gives
    a predicate where there is either a crash at round $r$
    or a crash at round $r+1$.
  \item The combination of two behaviors takes
    every pair of collections, one from each predicate, and
    computes the intersection of the graphs at each round. Meaning, it
    adds the loss of messages from both, to get both behaviors
    at once.
    For example, the combination of the delivered predicate for one crash at round $r$
    and of the one for one crash at round $r+1$ gives
    a predicate where there is a crash at round $r$
    and a crash at round $r+1$.
    Observe that combining $\DEL^{crash}_1$ with itself gives
    $\DEL^{crash}_2$, the predicate with at most two crashes.
  \item For succession, the system
    starts with one behavior, then switches to another. The definition is
    such that if $r=0$, then no prefix of $c_1$ is used (the first
    behavior never happens), but the second one must always happen.
    For example, the succession of $\DEL^{crash}_1$ (one possible crash) with $\DEL^{total}$ (no crash) is a possible crash that recovers.
  \item Repetition is the next logical step after succession: instead of following
    one behavior with another, the same behavior is repeated again and again.
    For example, taking the repetition of at most one crash results
    in a potential infinite number of crash-and-restart,
    with the constraint of having at most one crashed process at any time.
\end{itemize}

\subsection{Basic blocks}

The usefulness of these operations comes from allowing the construction of interesting
predicates from few basic ones. Let's take a simple family
of basic blocks: $\DEL^{crash}_{1,r}$, the delivered predicate of the
model with at most one crash, at round $r$.

\begin{defi}[At most $1$ crash at round $r$]%
  \label{crash-round}
  $\DEL^{crash}_{1,r} \triangleq$
  \[\left\{
  c, \textit{a delivered collection}
  \middle| \exists~ \Sigma \subseteq \Pi:
  \begin{array}{ll}
    & \card(\Sigma) \geq n-1\\
    \land & \forall p \in \Pi
      \left(
      \begin{array}{lll}
        & \forall r' \in \rint{1,r}:
          & c(r',p) = \Pi\\
        \land & & c(r,p) \supseteq \Sigma\\
        \land & \forall r' > r:
          & c(r',p) = \Sigma\\
      \end{array}
      \right)
  \end{array}
  \right\}.\]
\end{defi}

In these predicates, before round $r$, every process receives every message.
At round $r$ a crash might happen, which means that processes only
receive messages from a subset $\Sigma$ of $\Pi$ of size $\card(\Pi)-1$ from round $r+1$
onwards. The subtlety at round $r$ is that the crashed process
(the only one in $\Pi \setminus \Sigma$) might crash while sending messages,
and thus might send messages to some processes and not others.

Another fundamental predicate is the total one: the predicate containing
a single collection $c_{total}$, the one where every process receives
every message at every round.

\begin{defi}[Total delivered predicate]%
  \label{total}
  $\DEL^{total} \triangleq \{c_{total}\}$, where
  $c_{total}$ is the collection defined by $\forall r > 0, \forall p \in \Pi:
  c(r,p) = \Pi$.
\end{defi}

Using these building blocks, many interesting and important delivered predicates
can be built, as shown in Table~\ref{tab:examples}.
For example, let's take $\DEL^{crash}_{1}$, the predicate with at most one crash.
If a crash happens, it happens at one specific round $r$.
$\DEL^{crash}_{1}$ is a disjunction
for all values of $r$ of the predicate with at most one crash at round $r$;
that is, the union of $\DEL^{crash}_{1,r}$ for all $r$.

\begin{table}[t]
\centering
\begin{tabular}{ll}
    Description & Expression\\\toprule
    At most 1 crash &
        $\DEL^{crash}_{1} =
        \bigcup\limits_{i=1}^{\infty} \DEL^{crash}_{1,i}$\\\midrule
    At most $F$ crashes &
        $\DEL^{crash}_{F} =
        \combi\limits_{j=1}^F \DEL^{crash}_1$\\\midrule
        At most 1 crash, which will restart &
        $\DEL^{recover}_{1} =
        \DEL^{crash}_1 \leadsto \DEL^{total}$\\\midrule
        At most $F$ crashes, which will restart &
        $\DEL^{recover}_{F} =
        \combi\limits_{j=1}^F \DEL^{recover}_{1}$\\\midrule
        At most $1$ crash, which can restart &
        $\DEL^{canrecover}_{1} =
        \DEL^{recover}_{1} \cup \DEL^{crash}_1$\\\midrule
        At most $F$ crashes, which can restart &
        $\DEL^{canrecover}_{F} =
        \combi\limits_{j=1}^F \DEL^{canrecover}_{1}$\\\midrule
    \begin{tabular}{l}
      No bound on crashes and restart,\\
      with only $1$ crash at a time
    \end{tabular}&
        $\DEL^{recovery}_1 =
        {(\DEL^{crash}_1)}^{\omega}$\\\midrule
    \begin{tabular}{l}
      No bound on crashes and restart,\\
      with max $F$ crashes at a time
    \end{tabular}&
        $\DEL^{recovery}_F =
        \combi\limits_{j=1}^F \DEL^{recovery}_1$\\\midrule
        At most $1$ crash, after round $r$ &
        $\DEL^{crash}_{1,\geq r} =
            \bigcup\limits_{i=r}^{\infty} \DEL^{crash}_{1,i}$\\\midrule
        At most $F$ crashes, after round $r$ &
        $\DEL^{crash}_{F,\geq r} =
            \bigcup\limits_{i=r}^{\infty} \DEL^{crash}_{F,i}$\\\midrule
    \begin{tabular}{l}
        At most $F$ crashes with no more than\\
        one per round\\
    \end{tabular}&
        $\DEL^{crash\neq}_{F} =
            \bigcup\limits_{i_1 \neq i_2 \neq \dots \neq i_F}
            \combi\limits_{j=1}^{F} \DEL^{crash}_{1,i_j}$\\\bottomrule
\end{tabular}
\caption{Delivered Predicates Built Using our Operations}%
\label{tab:examples}
\end{table}

\section{From Delivered Predicates to Heard-Of Predicates}%
\label{sec:chap2fromDelToHO}

After defining delivered predicates and discussing how to find and/or build them,
the next step is to study the heard-of predicates that can be implemented
over a given delivered predicate. This is the plain part of Figure~\ref{model}, which
works between two mathematical abstractions, and is formal.
To do so, we start by defining executions on the delivered predicate: traces of
the system's behavior that can be analysed. Next, we define strategies,
which capture the rules for changing rounds, and thus constrain possible executions
to those where changes of rounds happen only when allowed by the strategy.
If such a strategy never generates an execution
where some process is blocked forever at a round, it is called valid, and implements
a heard-of predicate (one collection per execution). Finding the heard-of predicate
corresponding to a given delivered predicate then boils down to defining a partial order
of valid strategies by way of their heard-of predicate, and taking the greatest element.

\begin{figure}[t]
\centering
\begin{tikzpicture}
  \tikzstyle{link}=[->,>=stealth, line width=.45mm]
      \node[dashed, draw, circle, align=center, text width=2cm, inner sep=0pt, minimum size=1cm ] (start) at (-4,0) {\footnotesize Operational Model};
      \node[draw, circle, align=center, text width=2cm, inner sep=0pt, minimum size=1cm ] (mid) at (0,0) {\footnotesize Delivered Predicate};
      \node[draw, circle, align=center, text width=2cm, inner sep=0pt, minimum size=1cm ] (end) at (4,0) {\footnotesize Heard-Of Predicate};

      \draw[link,dashed,align=center] (start) to node[label={\footnotesize Rounds \\ \footnotesize (Operations)}] {} (mid);
      \draw[link,align=center] (mid) to node[label={\footnotesize Asynchrony \\ \footnotesize (Strategies)}] {} (end);
\end{tikzpicture}
\caption{From Delivered Predicate to Heard-Of Predicate}\label{model}
\end{figure}
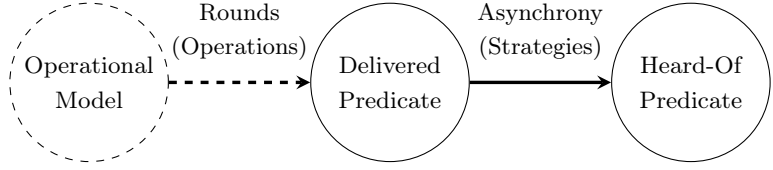

\subsection{Executions}%
\label{executions}

The theory of distributed computing relies on the concept of executions:
traces of a system's behavior, with enough detail to be formally analysed.
Here,
as we study the system-level implementation of the Heard-Of model, the executions
we consider are not executions of an algorithm solving a distributed
computing problem, but the executions of the implementation of a specific
heard-of predicate. Hence, these executions only track emissions, receptions
and changes of rounds. Because the content of each message is not important
for the implementation of rounds,
and we care about which messages will be received on time, the emissions
are implicit: as long as a process changed round $r-1$ times, it sent
its messages for round $r$ (which messages will depend on the delivered collection
used, as explained in a few paragraphs).
As for the local state of each process during this implementation,
it contains a local round counter and the set of received messages.

The last thing that is missing here is the implementation algorithm:
the rule that specifies when to change rounds. This is called
a strategy, and is defined below. First, we define executions
as sequences of events
that satisfy some basic constraints on the ordering of events. We
then constrain them by requiring the delivery of exactly the messages
from some delivered collection. The
introduction of strategies constrains them some more, so that the executions
allowed are the executions of an implementation of
rounds using this strategy.

Executions are infinite sequences of events,
either delivery
of messages ($deliver(r,k,j)$, which represents the delivery
at $j$ of the message from $k$ tagged with $r$), change to the next
round for some process $j$ ($next_j$), which also includes the broadcast for the
next round, or a deadlock ($stop$).
An execution must satisfy three rules:
no message is delivered before it is sent, no message
is delivered twice, and once there is a $stop$, the rest of
the sequence can only be $stop$.

\begin{defi}[Execution]%
  \label{defExec}
  Let $\Pi$ be a set of $n$ processes.
  Let the set of transitions $T = \{ \textit{next}_j \mid j \in \Pi \} \cup
  \{ \textit{deliver}(r,k,j) \mid r \in \mathbb{N}^* \land k,j \in \Pi\}
  \cup \{ stop \}$.
  Then, $t \in T^{\omega}$ is an \textbf{execution}
  $\triangleq$
  \begin{itemize}
      \item (Delivery after sending)\\
        $\forall i \in \mathbb{N}:
        t[i] = deliver(r,k,j) \implies
        \card(\{l \in \rint{0,i} \mid t[l] = next_k\}) \geq r-1$
      \item (Unique delivery)\\
        $\forall \langle r, k, j \rangle \in
        (\mathbb{N}^* \times \Pi \times \Pi):
        \card(\{i \in \mathbb{N} \mid t[i] = deliver(r,k,j)\}) \leq 1$
      \item (Once stopped, forever stopped)\\
        $\forall i \in \mathbb{N}:
        t[i] = stop \implies \forall l \geq i : t[l] = stop$
  \end{itemize}
\end{defi}

\noindent
Executions can be constrained by a delivered collection $c$: if $k$ changes round at least $r-1$ times in the execution,
then it sends all the messages tagged with $r$ to processes $j$ satisfying
$k \in c(r,j)$, and these messages are delivered in the execution.
Moreover, every delivery must be of such a message.
The executions of a delivered predicate are the executions of the collections of the predicate.

\begin{defi}[Execution of a delivered collection (and of a predicate)]%
  \label{defExecColl}
  Let \textit{c} be a delivered collection.
  Then, $\execs(c)$, the \textbf{executions of}
  \textit{c} $\triangleq$
  \[\left\{
      t, \textit{an execution} ~\middle|~
      \begin{array}{l}
          \forall \langle r, k, j \rangle
          \in \mathbb{N}^* \times \Pi \times \Pi:\\
          \quad(
              k \in c(r,j)
              \land \card(\{ i \in \mathbb{N} \mid t[i] = next_k\}) \geq r-1
          )\\
          \quad\iff\\
          \quad(
              \exists i \in \mathbb{N}: t[i] = deliver(r,k,j)
          )
      \end{array}
      \right\}.\]
  For a delivered predicate $\DEL$, we write
  $\execs(\DEL) = \bigcup\limits_{c \in \DEL} \execs(c)$.
\end{defi}

Definition~\ref{defExec} above casts behavior in term of changes to the system ---
the deliveries and changes of rounds. A dual perspective interprets behavior
as the sequence of successive states of the system. In a distributed system,
such states are the product of local states.
The local state of a process is the pair of its current
round and all the received messages up to this point\footnote{Recall that this is the
local state of the system-level implementation of rounds, not of the algorithm
running on top of the Heard-Of model. Hence, this doesn't constrain the internal
structure of the algorithms.}.
Notably, such a local
state doesn't contain the identity of the process. This is both because we
never need this identity, and because not dealing with it allows an easier
comparison of local states, since two distinct processes can have the same
local state. A message is represented
by a pair $\langle round, sender \rangle$ (instead of triplet like in deliver events,
because the receiver is implicit --- it's the process whose local state we're looking
at). For a state $q$, $q(r)$ is the set of peers from which the process (with state $q$)
has received a message tagged with round $r$.

\begin{defi}[Local State]%
  \label{defLocalState}
  Let $Q = \mathbb{N}^* \times \mathcal{P}(\mathbb{N}^* \times \Pi)$.
  Then $q \in Q$ is a \textbf{local state}.

  For $q = \langle r, mes \rangle$, we write $q.round$ for $r$,
  $q.mes$ for $mes$ and $\forall r' > 0: q(r') \triangleq
  \{k \in \Pi \mid \langle r', k \rangle \in q.mes\}$.

  Let $t$ be an execution, $p \in \Pi$ and $i \in \mathbb{N}$.
  Then the local state of $p$ in $t$ after the prefix of $t$ of size $i$ is
  $q_p^t[i] \triangleq
  \langle \card(\{ l < i \mid t[l] = next_p\})+1,
  \{\langle r, k \rangle \mid \exists l < i:
  t[l] = deliver(r,k,p)\} \rangle$.
\end{defi}

Notice that such executions do not allow a process to ``jump'' from say
round $5$ to round $9$ without passing by the rounds in-between. Indeed,
the Heard-Of model doesn't let processes decide when to change rounds:
processes specify only which messages to send depending on
the state, and what is the next state depending on the current state and
the received messages. So it makes sense for a system-level implementation
of heard-of predicates to do the same.
Nevertheless, the algorithm running on top of the Heard-Of model can ``jump'' rounds,
by not doing anything for a certain number of rounds.

\subsection{Strategies and Composition of Strategies}

An execution of a delivered collection where all processes change round
infinitely often defines a heard-of collection. This is done by looking, for
each round $r$ and process $p$, at the set of processes such that $p$
received their message tagged by $r$ when the round counter at $p$ was $\leq
r$. However, not all of the executions, as defined in~\ref{defExecColl},
ensure an infinite number of rounds for each process. For example, for a
delivered collection $c$, the execution where all messages from round $1$
are delivered according to $c$ (whatever the order) and then $stop$
transitions happen forever is an execution of $c$. Yet it blocks all
processes at round $1$ forever.
Strategies are introduced to solve this problem: they constrain executions.
A strategy is a set of states from which a process is allowed to change
round. It can also be seen as a predicate on local states. It captures rules
such as ``wait for at least $F$ messages from the current round'', or ``wait
for these specific messages''. Again, not all strategies lead to executions
with an infinite number of rounds. We then consider valid strategies, which
are strategies that ensure the resulting executions always contain an
infinite number of rounds for each process.

\begin{defi}[Strategy]
    $f \subseteq Q$ is a \textbf{strategy}.
\end{defi}

Strategies as defined above are predicates on states\footnote{One limiting case
is for the strategy to be empty --- the predicate being just $False$. This strategy
is useless, and will be weeded out by the constraint of validity from
Definition~\ref{defVal}.}. This makes them incredibly
expressive, but this expressivity creates difficulty in reasoning
about them. To address this problem, we define families of strategies. Intuitively,
strategies in a same family depend on a specific part
of the state --- for example the messages of the current round.
Equality of these parts of the state defines an equivalence relation;
the strategies of a family are strategies such that if a state $q$ is in the strategy,
then all states in the equivalence class of $q$ are in the strategy.

\begin{defi}[Families of strategies]%
  \label{defFam}
  Let $\approx$ be an equivalence relation on $Q$. The family of strategies defined
  by $\approx$, $\textit{family}(\approx) \triangleq \{ f, \textit{a strategy} \mid \forall
  q_1,q_2 \in Q: q_1 \approx q_2 \implies  (q_1 \in f \iff q_2 \in f)\}$.
\end{defi}

Let's define the executions of a strategy. The intuition is simple: every
change of rounds (an event $next_k$ for $k$ a process) happens only if the
local state of the process is in the strategy. There is a fairness
assumption that ensures that if the local state of some process $k$ is
eventually continuously in the strategy, then it will eventually change
round (have a $next_k$ event)\footnote{This is a weak fairness assumption,
which requires a constant availability of the transition after some point to
ensure it will happen. This is to be contrasted with a strong fairness
assumption, which requires the availability infinitely often after some
point to ensure that the transition will happen.}.
A subtlety hidden in this obvious intuition is that the check for changing
round (whether the local state is in the strategy) doesn't necessarily happen
at each reception; it can happen at any point. This captures an asynchronous
assumption where processes do not decide when they are executed.

\begin{defi}[Executions of a Strategy]%
  \label{defExecStrat}
  Let $f$ be a strategy and $t$ an execution.
  Then $t$ is an \textbf{execution of} $f$ if
  $t$ satisfies:
  \begin{itemize}
      \item (All nexts allowed)
          $\forall i \in \mathbb{N}, \forall p \in \Pi:
          (t[i] = next_p  \implies q_p^t[i] \in f)$,
      \item (Fairness)
          $\forall p \in \Pi:
          \card(\{i \in \mathbb{N} \mid t[i] = next_p\}) < \infty
          \implies \card(\{i \in \mathbb{N} \mid q_p^t[i] \notin f\})
          = \infty$.
  \end{itemize}
  For a delivered predicate $\DEL$, $\execs_f(\DEL)
  \triangleq
  \{t \mid \textit{t an execution of f } \land t \in \execs(\DEL) \}$.
\end{defi}

The first property states that processes only change round (the $next$
transition) when their local state is in the strategy. Fairness ensures that
the only way for a process $p$ to be blocked at a round $r$ is for $p$'s
local state to not be in $f$ an infinite number of times. If the local state
of $p$ is outside of $f$ only a finite number of times, the local state of
$p$ is eventually always in $f$, and the execution must contain another
$next_p$.

Going back to strategies, not all of them are equally valuable.
In general, strategies with executions where processes are blocked forever
at some round are less useful (to implement infinite sequences of
rounds) than strategies without such executions.
The validity of a strategy captures the absence of such an infinite wait.

\begin{defi}[Validity]%
  \label{defVal}
  An execution $t$ is \textbf{valid} if
  $\forall p \in \Pi, \forall N > 0, \exists i \geq N: t[i] = next_p$.

  Let $\DEL$ a delivered predicate and $f$ a strategy.
  Then $f$ is a \textbf{valid strategy} for $\DEL$
  iff $\forall t \in \execs_f(\DEL), t$ is a valid execution.
\end{defi}

Finally, analogous to how we can combine complex predicates through
operations, we can also compose complex strategies through similar operations:

\begin{defi}[Operations on strategies]%
  \label{defOpsStrat}
  Let $f_1, f_2$ be two strategies. The following operations are defined:
  \begin{itemize}
    \item Their \textbf{union} $f_1 \cup f_2$.
    \item Their \textbf{combination} $f_1 \combi f_2 \triangleq
      \{ q_1 \combi q_2 \mid q_1 \in f_1 \land q_2 \in f_2
      \land q_1.round = q_2.round\}$,
      where for $q_1$ and $q_2$ at the same round $r$,
      $q_1 \combi q_2 \triangleq
      \langle r,
          \{ \langle r', k \rangle
          \mid r' > 0 \land k \in q_1(r') \cap q_2(r')\}
      \rangle$.
    \item Their \textbf{succession} $f_1 \leadsto f_2 \triangleq
      f_1 \cup f_2 \cup \{q_1 \leadsto q_2 \mid
      q_1 \in f_1 \land q_2 \in f_2 \}$ where $q_1 \leadsto q_2 \triangleq {}$
      \[ \left\langle q_1.round+q_2.round,
          \left\{ \langle r, k \rangle
          \mid r > 0 \land
          \left(
          \begin{array}{ll}
              k \in q_1(r) & \text{if } r \leq q_1.round\\
              k \in q_2(r-q_1.round) & \text{if } r > q_1.round\\
          \end{array}
          \right) \right\}
      \right\rangle.\]
    \item The \textbf{repetition} of $f_1$, $f_1^{\omega} \triangleq
      \{q_1 \leadsto q_2 \leadsto \dots \leadsto q_k \mid
      k \geq 1 \land q_1,q_2, \dots,q_k \in f_1\}$.
  \end{itemize}
\end{defi}

The intuition behind these operations is analogous to the ones
for predicates:
\begin{itemize}
  \item The union of two strategies is equivalent
    to an OR of the two conditions. For example,
    the union of waiting for at least $n-F$ messages and waiting for all messages but
    the ones from $p_1$ gives a strategy that accepts change
    of round when more than $n-F$ messages are received or when all
    messages except the one from $p_1$ are received.
  \item The combination of two strategies takes all intersections
    of local states in the first strategy and local states in the second.
    For example, combining the strategy that waits at least $n-1$ messages
    for the current round with itself will wait for at least $n-2$ messages.
  \item For succession, the states accepted are those where messages
    up to some round correspond to an accepted state of the first strategy,
    and messages from this round up correspond to an accepted state of the
    second strategy.
  \item Repetition is the next logical step after succession: instead of following
    one strategy with another, the same strategy is repeated again and again.
\end{itemize}

\subsection{Extracting Heard-Of Collections of Executions}

If an execution is valid, then all processes go through an infinite number
of rounds. That is, it captures the execution of a system-level implementation
of rounds where no process blocks forever at some round.
It thus makes sense to speak of the heard-of collection implemented by this execution:
at the end of round $r$ for process $p$, the messages from round $r$ that were
received by $p$ define the heard-of set for $r$ and $p$.

\begin{defi}[Heard-Of Collections Generated by Executions and
                   Heard-Of Predicate Generated by Strategies]%
  \label{defHOExec}
  Let $t$ be a valid execution. Then
   the \textbf{heard-of collection generated by $t$}, $h_t \triangleq {}$
  \[\forall r \in \mathbb{N}^*, \forall p \in \Pi : h_t(r,p) =
  \left\{ k \in \Pi ~\middle|~ \exists i \in \mathbb{N}:
  \left(
  \begin{array}{ll}
      & q_p^t[i].round = r\\
      \land & t[i] = next_p\\
      \land & \langle r,k \rangle \in q_p^t[i].mes\\
  \end{array}
  \right)
  \right\}.\]

  Let $\DEL$ be a delivered predicate, and $f$ be a valid strategy for \DEL\@.
  Then the \textbf{heard-of predicate generated by $f$ on $\DEL$}
  $\triangleq \HO_f(\DEL)
  \triangleq \{ h_t \mid t \in \execs_f(\DEL) \}$.
\end{defi}

As the strategy is valid, this definition is well-founded and the strategy
generates a heard-of predicate from the delivered predicate.

\subsection{Dominating Predicate}

The way to go from a delivered predicate
to a heard-of one is to design a valid strategy for the former that generates
the latter. With different strategies, different heard-of predicates can be generated.
Which one should be considered as the
characterization of the delivered predicate (and of the corresponding
operational model)?
A heard-of predicate generated from a delivered predicate is
an over-approximation of the latter.
To be able to solve as many problems as possible, as many messages as possible
should be received on time. The characterizing heard-of predicate
is thus the smallest such over-approximation of the delivered
predicate, if it exists.
This intuition is formalized by defining a partial order on valid strategies
for a delivered predicate, capturing the implication of
the generated heard-of predicates (the inclusion when considered as sets).
One strategy dominates another if the
heard-of collections it generates are also generated by the other.
Dominating strategies are then the greatest elements for this order. By
definition of domination, all dominating strategies generate the same
dominating heard-of predicate, which characterizes the delivered predicate.

\begin{defi}[Domination Order, Dominating Strategy and Dominating Predicate]%
  \label{defDom}
  Let $\DEL$ be a delivered predicate and let
  $f$ and $f'$ be two valid strategies for \DEL\@.
  Then, $f$~\textbf{dominates} $f'$ for $\DEL$,
  written $f' \prec_{\DEL} f$, if
  $\HO_{f'}(\DEL) \supseteq \HO_f(\DEL)$.
  A greatest element for $\prec_{\DEL}$, if it exists,
  is called a \textbf{dominating strategy} for \DEL\@. Given
  such a strategy $f$, the \textbf{dominating predicate}
  for $\DEL$ is then $\HO_f(\DEL)$.
\end{defi}

\subsection{Standard and Canonical Executions}

In the following sections, we prove properties about dominating strategies, their
invariance by the operations, and the heard-of predicates that they generate.
These proofs rely on reasoning by contradiction:
assume the theorem or lemma is false, and derive a contradiction.
These contradictions take the form of proving that a valid strategy
has an invalid execution; constructing specific executions is therefore
the main technique in these proofs.
This section introduces two patterns for constructing executions: one from a delivered
collection and a strategy, the other from a heard-of collection.

To do so, let's fix $ord$ as some function taking a set and returning
an ordered sequence of its elements --- the specific ordering doesn't matter. This
will be used to ensure the uniqueness of the executions, but the order has no
impact on the results.

\subsubsection{Standard Execution}

The standard execution extracts an execution from a delivered collection. It
follows a loop around a simple pattern: deliver all the messages that
were sent according to the delivered collection,
then change round for all the processes which are allowed to do so by $f$. The intuition
is that it's the simplest way to specify an execution where strategies that look only at
messages from current and previous rounds (as studied latter) always have all the information
available to them. This means that if at one point a first process fails to change round
while using such a strategy, it will be blocked at this round for the rest of the
standard execution.

Given elements $x_1, x_2, x_3, \dots$, the notation $\prod\limits_{i \in \mathbb{N}^*}\! x_i$ is the infinite concatenation $x_1x_2x_3\dots$

\begin{defi}[Standard Execution of a Strategy on a Collection]%
  \label{defStdExec}
  Let $c$ be a delivered collection, and $f$ be a strategy.
  The \textbf{standard execution of $f$ on $c$} is $st(f,c)
  \triangleq \prod\limits_{r \in \mathbb{N}^*} dels_r.changes_r$,
  where
  \begin{itemize}
    \item $dels_1 \triangleq ord(\{ deliver(1,k,j) \mid k \in c(1,j)\})$, the ordered
      sequence of deliveries for messages from round $1$ that will be delivered eventually
      according to $c$.
    \item $changes_1 \triangleq ord(\{ next_j \mid \langle 1,
      \{(1,k) \mid k \in c(1,j)\} \rangle \in f\})$, the ordered sequence of next transitions
      for processes for which the state resulting from the deliveries in $dels_1$ is
      in $f$.
    \item $\forall r > 1: dels_r \triangleq
      ord(\{deliver(round_r^k,k,j) \mid next_k \in changes_{r-1}
      \land k \in c(round_r^k,j)\})$,\\
      with $round_r^k \triangleq 1 + \sum\limits_{r' \in \rint{1,r}} \card(\{next_k\} \cap changes_{r'})$.

      This is the ordered sequence of deliveries of messages
      from processes that changed round during $changes_{r-1}$.
    \item $\forall r > 1: changes_r \triangleq {}$
      {\small\[\left\{
      \begin{array}{ll}
        ord(\{next_j \mid \langle round_r^j, \{(r',k) \mid deliver(r',k,j) \in
        \bigcup\limits_{r'' \in [1,r]} dels_{r''}\} \rangle \in f \}) & \text{if it
          is not empty}\\
        ord(\{stop\}) & \text{otherwise}
      \end{array}
      \right.,\]}
      with $round_r^j \triangleq 1 + \sum\limits_{r' \in \rint{1,r}} \card(\{next_j\} \cap changes_{r'})$.

      This is the ordered sequence of changes of round for processes
      such that their state after the deliveries of $dels_{r}$ is in $f$.
  \end{itemize}
\end{defi}

\noindent
The main property of a standard execution of $f$ on $c$ is that it
is both an execution of $f$ and an execution of $c$.

\begin{lem}[Correctness of Standard Execution]%
  \label{stCorrect}
  Let $c$ be a delivered collection and $f$ be a strategy.
  Then $st(f,c) \in \execs_f(c)$.
\end{lem}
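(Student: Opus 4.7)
The plan is to unpack $st(f,c) \in \execs_f(c)$ into its constituent claims and verify each by direct appeal to the construction in Definition~\ref{defStdExec}. By unfolding $\execs_f(c) = \{t \mid t \text{ is an execution of } f\} \cap \execs(c)$, we must establish three things: (i) $st(f,c)$ satisfies the three conditions of Definition~\ref{defExec}, (ii) it lies in $\execs(c)$ per Definition~\ref{defExecColl}, and (iii) it is an execution of $f$ per Definition~\ref{defExecStrat}. The proof is essentially a bookkeeping exercise on the loop structure $\prod_{r} dels_r.changes_r$, with the genuine subtlety concentrated in the fairness clause.

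For (i), delivery-after-sending follows because any $deliver(r,k,j)$ appears only in some $dels_{r'}$ with $round_{r'}^k = r$ and $next_k \in changes_{r'-1}$, so $k$ has already accumulated $r-1$ of its $next$-transitions by that position. Unique delivery holds because a process's round counter is non-decreasing across iterations, so for any $k$ there is at most one $r'$ with $round_{r'}^k = r$. The once-stopped-forever-stopped clause is proven by induction: if $changes_{r_0} = ord(\{stop\})$, then no $next_j$ occurred in $changes_{r_0}$, so $dels_{r_0+1}$ is empty; the states tested in $changes_{r_0+1}$ coincide with those tested in $changes_{r_0}$, so the set of admissible $next_j$'s is still empty, forcing another $stop$.

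For (ii), the biconditional is argued in both directions. The ``$\Leftarrow$'' direction follows immediately from the definition of $dels_{r'}$, which only emits $deliver(r,k,j)$ when $k \in c(r,j)$. The ``$\Rightarrow$'' direction requires observing that if $k$ has performed at least $r-1$ $next_k$-transitions in $st(f,c)$, then its $(r-1)$-th such transition sits in some $changes_{r_0}$, making $round_{r_0+1}^k = r$, and hence every $deliver(r,k,j)$ with $k \in c(r,j)$ is included in $dels_{r_0+1}$. For (iii), ``all nexts allowed'' reduces to checking that the state-tuple $\langle round_r^j, \{(r',k) \mid deliver(r',k,j) \in \bigcup_{r'' \leq r} dels_{r''}\}\rangle$ guarding membership in $changes_r$ is exactly $q_j^{st(f,c)}[i]$ at the position $i$ of the relevant $next_j$, which holds because deliveries preceding position $i$ are precisely those in $\bigcup_{r'' \leq r} dels_{r''}$ and $j$'s round counter just before $next_j$ equals $round_r^j$ by definition.

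The hard part will be fairness: assuming $p$ has finitely many $next_p$-transitions, we must exhibit infinitely many positions $i$ with $q_p^{st(f,c)}[i] \notin f$. I would split into two cases. If some $changes_{r_0}$ equals $ord(\{stop\})$, then, as already shown for (i), all subsequent $changes_{r'}$ are also $\{stop\}$ and no further deliveries occur, so $p$'s local state is constant from that point on and (by the very condition that triggered $\{stop\}$) is not in $f$; the tail of the execution then contributes infinitely many witnesses. If no iteration produces $\{stop\}$, then for every $r$ past the iteration containing $p$'s last $next_p$, the check in $changes_r$ must fail for $p$, meaning $p$'s state at the position just before $changes_r$ is not in $f$; again this yields infinitely many witnesses. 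In both cases the fairness condition is satisfied, completing the verification that $st(f,c) \in \execs_f(c)$.
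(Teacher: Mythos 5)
Your proposal is correct and follows essentially the same route as the paper's own (appendix) proof: verify the three conditions of Definition~\ref{defExec}, then membership in $\execs(c)$ via Definition~\ref{defExecColl}, then the two conditions of Definition~\ref{defExecStrat}, with fairness handled by observing that a process with finitely many $next_p$ transitions fails the membership test in infinitely many $changes_r$ blocks (whether or not a $stop$ tail occurs). One small wording slip: a blocked process $k$ has $round_{r'}^k = r$ for infinitely many iterations $r'$, so unique delivery does not follow from the counter value alone but from its conjunction with the requirement $next_k \in changes_{r'-1}$ in the definition of $dels_{r'}$ --- a condition you already state in the preceding sentence, so the argument is easily repaired.
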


\begin{proof}[Proof idea]
First, the proof shows that $st(f,c)$ is indeed an execution by verifying
the three properties of Definition~\ref{defExec}. Then, it shows that
$st(f,c)$ is an execution of the delivered collection $c$
(Definition~\ref{defExecColl}: the delivered messages in $st(f,c)$ are
exactly those from $c$). Lastly the verification of the two conditions of
Definition~\ref{defExecStrat} ensures that $st(f,c)$ is an execution of
strategy $f$. This proof is trivial but verbose, and is presented in Appendix~\ref{app:proofStCorrect}.
\end{proof}

\subsubsection{Canonical Execution}

Whereas the standard execution captures a straightforward way to create an
execution from a strategy and a delivered collection, this new construction
starts with a heard-of collection, and creates a valid execution generating
it from the total delivered collection (the one with all the receptions).
The link between the two is that when we want to prove that a valid strategy $f$ implements
a heard-of collection $ho$ (useful for showing dominance of strategies), we show by
contradiction that the canonical execution of $ho$ is an execution of $f$. Since
the proof by contradiction assumes that one of the $next_j$ transitions in the canonical
execution is not allowed by $f$, we can usually find a delivered collection
where the messages delivered at this point are the only ones that will ever be delivered to $j$,
and so show that the standard execution of this delivered collection is not valid,
which contradicts the validity of $f$.

This canonical execution follows a simple pattern: at each round,
deliver all the messages from the heard-of sets of
this round, and also all the messages undelivered from the previous round (the ones
that were not in the heard-of sets of the previous round).

\begin{defi}[Canonical Execution of a Heard-Of Collection]%
  \label{defCanExec}
  Let \textit{ho} be a heard-of collection.
  The \textbf{canonical execution of} \textit{ho} is $can(ho)
  \triangleq \prod\limits_{r \in \mathbb{N}^*} dels_r.changes_r$,
  where
  \begin{itemize}
    \item $dels_1 \triangleq ord(\{deliver(1,k,j) \mid k \in ho(1,j)\})$, the ordered sequence
      of deliveries that happen at round $1$ in $h$.
    \item $\forall r > 1: dels_r \triangleq ord(\{deliver(r,k,j) \mid k \in ho(r,j)\} \cup
      \{deliver(r-1,k,j) \mid k \in \Pi \setminus ho(r-1,j)\})$, the ordered sequence of
      deliveries that happen at round $r$ in $h$.
    \item $\forall r > 0: changes_r \triangleq ord(\{next_k \mid k \in \Pi\})$, the
      ordered sequence of next transitions, one for each process.
  \end{itemize}
\end{defi}

\noindent
This canonical execution is an execution of any delivered predicate containing
$c_{total}$, the collection where every message is delivered.
Having this collection in a delivered predicate
ensures that although faults might happen, they are not forced to do so.

\begin{lem}[Canonical Execution is an Execution of Total Collection]%
  \label{trivial}
  Let \textit{ho} be a heard-of collection.
  The canonical execution $can(ho)$ of \textit{ho} is
  an execution of $c_{total}$.
\end{lem}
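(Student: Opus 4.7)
The plan is to show directly that $can(ho)$ satisfies the two definitions at play: first Definition~\ref{defExec} (that it is an execution at all), and then the biconditional of Definition~\ref{defExecColl} instantiated with $c = c_{total}$. The proof is essentially a bookkeeping argument tracking, for each triple $\langle r,k,j\rangle$, where its delivery event occurs in the construction of $can(ho)$.

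For the first part, I would verify the three clauses of Definition~\ref{defExec} in turn. The crucial observation is that by the construction, every chunk $dels_r.changes_r$ performs exactly one $next_k$ per process $k \in \Pi$. Hence, after the prefix ending with $changes_r$, every process has exactly $r$ occurrences of its $next$ transition. From the definition, a delivery $deliver(r,k,j)$ appears either in $dels_r$ (when $k \in ho(r,j)$) or in $dels_{r+1}$ (when $k \notin ho(r,j)$); in both cases the number of $next_k$ transitions executed strictly before that position is at least $r-1$, which gives \emph{delivery after sending}. \emph{Unique delivery} follows from the case split on whether $k \in ho(r,j)$, since the two cases are exclusive and each yields a single occurrence of $deliver(r,k,j)$. \emph{Once stopped, forever stopped} is vacuous because no $stop$ event ever appears in $can(ho)$.

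For the second part, I would compute both sides of the biconditional in Definition~\ref{defExecColl} for $c = c_{total}$. The left-hand side is always true: $k \in c_{total}(r,j) = \Pi$ holds unconditionally, and every process performs infinitely many $next$ transitions in $can(ho)$ (one per round chunk), so $\card(\{i \mid t[i] = next_k\}) = \infty \geq r-1$. Hence the claim reduces to showing that for every $\langle r,k,j\rangle \in \mathbb{N}^* \times \Pi \times \Pi$, the event $deliver(r,k,j)$ appears somewhere in $can(ho)$. This is exactly the case split used above: if $k \in ho(r,j)$ the event is in $dels_r$, and otherwise it is in $dels_{r+1}$.

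There is no real obstacle here; the only mildly subtle point is recognising that the ``late'' messages $\{deliver(r-1,k,j) \mid k \in \Pi \setminus ho(r-1,j)\}$ inserted into $dels_r$ are precisely what ensures the full total collection is delivered, and that they do not violate \emph{delivery after sending} because by the time $dels_r$ is reached every process has already performed $r-1 \geq (r-1)-1$ of its $next$ transitions. Once this observation is in hand, the verification is mechanical.
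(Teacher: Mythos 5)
Your proof is correct and follows essentially the same route as the paper: check the three clauses of Definition~\ref{defExec} (with the same round-$r$-or-round-$(r+1)$ placement of each delivery), then observe that every message of $c_{total}$ is indeed delivered, so the biconditional of Definition~\ref{defExecColl} holds. Your version just spells out the bookkeeping (in particular the reduction of the biconditional to mere existence of each $deliver(r,k,j)$) a bit more explicitly than the paper does.
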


\begin{proof}
    First, $can(ho)$ is an execution by Definition~\ref{defExec}
    since it satisfies the three conditions:
    \begin{itemize}
      \item Delivered only once:
        Every sent message is delivered either during the
        round it was sent or during the next one, and thus
        delivered only once.
      \item Delivered after sending:
        Every message from round $r$ is delivered after
        either $r-1$ or $r$ $next_p$ transitions for
        the sender $p$, which means at round $r$ or $r+1$. Hence
        the message is delivered after being sent.
      \item Once stopped, forever stopped:
        No process stops, so the last condition for executions
        is trivially satisfied.
    \end{itemize}

    \noindent
    Furthermore, for each process $p$ and round $r$, all
    the messages from $p$ at round $r$ are delivered in $can(ho)$,
    either at round $r$ or at round $r+1$.
    Since the total collection is the collection where every message is delivered
    by Definition~\ref{total}, this entails
    that $can(ho)$ is an execution of the total delivered
    collection by Definition~\ref{defExecColl}, and thus an execution of \DEL\@.
\end{proof}

Lastly, the whole point of the canonical execution of $ho$ is that it
generates $ho$.

\begin{lem}[Canonical Execution Generates its Heard-Of Collection]%
  \label{canHO}
  Let \textit{ho} be a heard-of collection. Then $h_{can(ho)} = ho$.
\end{lem}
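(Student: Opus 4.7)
The plan is to unfold the definition of $h_{can(ho)}$ and compare it pointwise to $ho$ at each $(r,p)$. The key observation is that in the canonical execution, messages tagged with round $r$ are delivered in only two places: in $dels_r$ (those $\langle r,k,j\rangle$ with $k \in ho(r,j)$) and in $dels_{r+1}$ (those $\langle r,k,j\rangle$ with $k \in \Pi \setminus ho(r,j)$). Meanwhile, each process $p$ performs exactly one $next_p$ transition inside each $changes_r$ block, and these are the only $next_p$ transitions in $can(ho)$.

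First, I would establish that the $r$-th occurrence of $next_p$ in $can(ho)$ lies inside the $changes_r$ block, and that immediately before this transition one has $q_p^{can(ho)}[i].round = r$. This follows directly from the local state definition (Definition~\ref{defLocalState}): $q_p.round$ equals one plus the count of previous $next_p$ events, and by construction there is exactly one $next_p$ per $changes_r$, appearing in round-order.

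Next I would show that at index $i$ corresponding to this $r$-th $next_p$, the set $\{k \mid \langle r,k\rangle \in q_p^{can(ho)}[i].mes\}$ equals $ho(r,p)$. The prefix up to $i$ consists of $dels_1 . changes_1 . dels_2 . changes_2 . \dots . dels_r$ followed by a portion of $changes_r$ that contains no delivery events. Hence the only round-$r$ deliveries to $p$ occurring before index $i$ are those in $dels_r$, and by the definition of $dels_r$ these are exactly the events $deliver(r,k,p)$ for $k \in ho(r,p)$ (the leftover deliveries inside $dels_r$ are tagged with round $r-1$, not $r$). Plugging into Definition~\ref{defHOExec} gives $h_{can(ho)}(r,p) = ho(r,p)$.

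Finally, since this holds for all $r > 0$ and $p \in \Pi$, the two heard-of collections are equal as functions, which is the desired conclusion. The proof is essentially a careful bookkeeping argument; the only mild subtlety, and the main point to be explicit about, is the separation between round-$r$ messages delivered in $dels_r$ (which arrive in time for the $r$-th $next_p$) and those delivered in $dels_{r+1}$ (which arrive too late and thus are correctly excluded from $h_{can(ho)}(r,p)$, matching the fact that $k \notin ho(r,p)$ for those).
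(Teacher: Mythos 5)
Your proposal is correct and follows essentially the same route as the paper's proof: locate the $r$-th $next_p$ inside $changes_r$ (using that the round counter is one plus the number of prior $next_p$ events), and observe that the round-$r$ deliveries to $p$ occurring before that point are exactly those of $dels_r$, i.e.\ from $ho(r,p)$, while the late ones sit in $dels_{r+1}$. The paper merely packages the same bookkeeping as two set inclusions rather than a single pointwise equality.
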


\begin{proof}

Let  $r > 0$ and  $j \in \Pi$. Let's prove that $h_{can(ho)}(r,j) = ho(r,j)$.

\begin{itemize}
\item ($\subseteq$) Let $p \in h_{can(ho)}(r,j)$, let's prove that $p \in ho(r,j)$.
Since $p \in  h_{can(ho)}(r,j)$, Definition~\ref{defHOExec} gives: $\exists i \in \mathbb{N}: q_j^{can(ho)}[i].round = r  \land can(ho)[i] = next_j \land \langle r,p \rangle \in q_j^{can(ho)}[i].mes$.
	\begin{itemize}
	\item Since $q_j^{can(ho)}[i].round = r$, Definition~\ref{defLocalState} gives $\card(\{l <i : can(ho)[l]=next_j\})=r-1$.
	\item Definition~\ref{defCanExec} implies that $next_j$ appears only in the $changes_s$ sequences, and only once in each $changes_s$. So, since $can(ho)[i] = next_j$ and $\card(\{l <i : can(ho)[l]=next_j\})=r-1$, then $can(ho)[i]$ is in $changes_r$ and so is between $dels_r$ and $dels_{r+1}$.
	\item Since $\langle r,p \rangle \in q_j^{can(ho)}[i].mes$ and  $can(ho)[i]$ is between $dels_r$  and $dels_{r+1}$, Definition~\ref{defCanExec} implies that  $p \in ho(r,j)$.
	\end{itemize}
\item ($\supseteq$) Let $p \in ho(r,j)$, let's prove that $p \in h_{can(ho)}(r,j)$, i.e., (Definition~\ref{defHOExec}) $\exists i \in \mathbb{N}: q_j^{can(ho)}[i].round = r  \land can(ho)[i] = next_j \land \langle r,p \rangle \in q_j^{can(ho)}[i].mes$.
  Let $i$ the index of $changes_r$, such that $can(ho)[i] = next_j$. Definition~\ref{defCanExec} ensures such an $i$ exists since in each $changes_s$, there is a $next_k$ for each $k \in \Pi$.
\begin{itemize}
\item Definition~\ref{defCanExec} ensures that there are $r-1$ $next_j$ transitions in $can(ho)$ before index $i$, so $q_j^{can(ho)}[i].round = r$.
\item By definition of $i$, $can(ho)[i] = next_j$.
\item Definition~\ref{defCanExec} ensures that $dels_r$ is before $changes_r$ in $can(ho)$. Since $p \in ho(r,j)$, so $deliver(r, p, j)$ appears in $can(ho)$ before index $i$ and so $\langle r,p \rangle \in q_j^{can(ho)}[i].mes$.
\qedhere
\end{itemize}
\end{itemize}
\end{proof}

\subsection{A Complete Example: At Most \texorpdfstring{$F$}{F} Crashes}%
\label{subsec:chap2formalexample}

Let's look at a concrete example to get a better grasp at how all these concepts
work together, and the kind of results they allow us to prove.
We consider $\DEL^{crash}_F$ from Definition~\ref{delCrash},
the message-passing model with
asynchronous and reliable communication, and at most $F$ permanent
crashes. We now give a dominating strategy for this predicate, as well as compute
its heard-of predicate.
The folklore strategy for this model is to wait for at least $n-F$ messages
before allowing the change of round.

\begin{defi}[waiting for $n-F$ messages]%
  \label{deffn-f}
    The strategy to wait for $n-F$ messages is
    $f_{n-F} \triangleq
    \{ q \in Q \mid
    \card(\{k \in \Pi \mid \langle q.round, k \rangle \in q.mes\}) \geq n-F \}$.
\end{defi}

To see why this strategy is used in the literature, remark that
at least $n-F$ messages must be delivered to each process at each round.
Thus, waiting for that many messages ensures that no process is ever
blocked. However, waiting for more will result in processes blocking forever
if $F$ crashes occur. Rephrased with the concepts introduced above,
$f_{n-F}$ is a valid and dominating strategy for $\DEL^{crash}_F$.

\begin{lem}[Validity of $f_{n-F}$]%
    \label{valid}
    $f_{n-F}$ is valid for $\DEL^{crash}_F$.
\end{lem}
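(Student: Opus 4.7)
The plan is to argue by contradiction: assume some execution $t \in \execs_{f_{n-F}}(\DEL^{crash}_F)$ is not valid, and use the kernel-free lower bound $\card(c(r,p)) \geq n-F$ baked into $\DEL^{crash}_F$ together with the fairness clause of Definition~\ref{defExecStrat} to derive a contradiction.

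First I would unfold the assumption: there exist $c \in \DEL^{crash}_F$ and $t \in \execs_{f_{n-F}}(c)$ such that some process does only finitely many $next$ transitions in $t$. Define
\[
r^{*} \triangleq \min_{q \in \Pi}\bigl(\card\{i \in \mathbb{N} \mid t[i] = next_q\} + 1\bigr),
\]
which is finite by assumption, and let $p$ be a process attaining this minimum, so $p$ is stuck at round $r^{*}$. By minimality every process $k$ performs at least $r^{*}-1$ $next_k$ transitions in $t$, which is exactly the sending condition for round-$r^{*}$ messages in Definition~\ref{defExecColl}.

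Next I would invoke Definition~\ref{defExecColl} in the forward direction: for every $k \in c(r^{*},p)$, since $k$ executes $\geq r^{*}-1$ $next$ transitions, the event $deliver(r^{*},k,p)$ occurs somewhere in $t$. Let $N$ be an index past which no more $next_p$ and no more $deliver(r^{*},k,p)$ for $k \in c(r^{*},p)$ occurs; such $N$ exists because $p$ has finitely many $next_p$ and $\card(c(r^{*},p)) \geq n-F$ is finite. For all $i \geq N$, $q_p^t[i].round = r^{*}$ (no further $next_p$) and the set of round-$r^{*}$ messages in $q_p^t[i].mes$ contains $c(r^{*},p)$, which has cardinality $\geq n-F$ by membership in $\DEL^{crash}_F$ (Definition~\ref{delCrash}). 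Hence $q_p^t[i] \in f_{n-F}$ for every $i \geq N$ by Definition~\ref{deffn-f}.

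The contradiction then comes from the fairness clause of Definition~\ref{defExecStrat}: since $p$ has only finitely many $next_p$ events, $q_p^t[i] \notin f_{n-F}$ must hold for infinitely many $i$, contradicting the fact that $q_p^t[i] \in f_{n-F}$ for all $i \geq N$. This forces every execution of $f_{n-F}$ on $\DEL^{crash}_F$ to be valid, proving validity. I do not expect a real obstacle; the only point requiring care is the minimality argument for $r^{*}$, which is essential because without it one could not guarantee that the senders $k \in c(r^{*},p)$ have performed the $r^{*}-1$ $next_k$ transitions that trigger delivery in Definition~\ref{defExecColl}.
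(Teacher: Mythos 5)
Your proof is correct and follows essentially the same route as the paper's: argue by contradiction, pick the process blocked at the minimal round so that all senders have performed enough $next$ transitions for their round-$r^{*}$ messages to be delivered, use $\card(c(r^{*},p)) \geq n-F$ to place the eventual local state of $p$ in $f_{n-F}$, and contradict the fairness clause of Definition~\ref{defExecStrat}. Your version merely makes the choice of the minimal round and the threshold index $N$ slightly more explicit than the paper does.
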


\begin{proof}
    We proceed by contradiction. \textbf{Assume} $f_{n-F}$ is invalid
    for $\DEL^{crash}_F$. By Definition~\ref{valid},
    there exists an invalid
    $t \in \execs_{f_{n-F}}(\DEL^{crash}_F)$.
    By Definition~\ref{valid} of validity, $\exists p \in \Pi, \exists N, \forall i
    \geq N: t[i] \neq next_p$: there is a smallest round $r$
    such that some process $j$ stays blocked
    at $r$ forever in $t$. Because $t$ is an execution of $f$,
    Definition~\ref{defExecStrat} entails that infinitely many local states of
    $j$ must be not in $f_{n-F}$; if it was not the case, the fairness condition
    would force the execution to contain another $next_j$.
    Let also $c_t$ be a delivered collection of $\DEL^{crash}_F$
    such that $t \in \execs(c)$.
    We know by Definition~\ref{delCrash} of $\DEL^{crash}_F$ that
    $\card(c_t(r,j)) \geq n-F$. The minimality of $r$ and
    the fact that $t \in \execs(c)$ ensure by Definition~\ref{defExecColl}
    that all the messages in this delivered set are delivered at some point
    in $t$. By Definition~\ref{deffn-f} of $f_{n-F}$, the local state of
    $j$ is then in $f_{n-F}$ from this point on.
    By the fairness constraint of Definition~\ref{defExecStrat},
    this \textbf{contradicts} the fact that there is never another
    $next_j$ in the suffix of $t$.
    We conclude that $f_{n-F}$ is valid for $\DEL^{crash}_F$.
\end{proof}

The next step is to prove that this strategy is dominating the predicate. We
first need to compute the heard-of predicate generated by $f_{n-F}$. This
heard-of predicate was given by Charron-Bost and
Schiper~\cite{CharronBostHO} as a characterization of the asynchronous model
with reliable communication and at most $F$ crashes, without a formal proof. The intuition behind it
is that even in the absence of crashes, we can make all the processes change
round by delivering any set of at least $n-F$ messages to them.

\begin{thm}[Heard-Of Characterization of $f_{n-F}$]%
    \label{n-fCharac}~\\
    Let $HO_F$ be the heard-of predicate defined as $\{ h, \textit{a heard-of collection} \mid \forall r > 0, \forall p \in \Pi: \card(ho(r,p)) \geq n-F\}$.
    Then $\HO_{f_{n-F}}(\DEL^{crash}_F) = HO_F$.
\end{thm}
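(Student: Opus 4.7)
The plan is to establish the equality by a double inclusion, leveraging the standard and canonical execution constructions just introduced.

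For the forward inclusion $\HO_{f_{n-F}}(\DEL^{crash}_F) \subseteq HO_F$: take any $h \in \HO_{f_{n-F}}(\DEL^{crash}_F)$, so by Definition~\ref{defHOExec} there is an execution $t \in \execs_{f_{n-F}}(\DEL^{crash}_F)$ with $h = h_t$. By Lemma~\ref{valid}, $t$ is valid, so for every $r > 0$ and $p \in \Pi$ there is an index $i$ with $t[i] = next_p$ and $q_p^t[i].round = r$. The ``all nexts allowed'' condition of Definition~\ref{defExecStrat} forces $q_p^t[i] \in f_{n-F}$, which by Definition~\ref{deffn-f} means that at least $n-F$ pairs $\langle r,k \rangle$ lie in $q_p^t[i].mes$. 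By Definition~\ref{defHOExec}, these $k$ are exactly the elements of $h_t(r,p)$, yielding $\card(h(r,p)) \geq n-F$ and thus $h \in HO_F$.

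For the reverse inclusion $HO_F \subseteq \HO_{f_{n-F}}(\DEL^{crash}_F)$: pick any $ho \in HO_F$ and consider its canonical execution $can(ho)$ from Definition~\ref{defCanExec}. First, $c_{total} \in \DEL^{crash}_F$ because $c_{total}(r,p) = \Pi$ yields $\card(c_{total}(r,p)) = n \geq n-F$ and $K_{c_{total}}(r) = \Pi$, so the two conditions of Definition~\ref{delCrash} hold trivially. Lemma~\ref{trivial} then gives $can(ho) \in \execs(c_{total})$, hence $can(ho) \in \execs(\DEL^{crash}_F)$. It remains to show $can(ho) \in \execs_{f_{n-F}}(c_{total})$, which is the heart of the argument: every $next_p$ in $can(ho)$ occurs inside some $changes_r$ block, and by the construction of $can(ho)$, right before this transition $p$ has received precisely the messages tagged with $r$ from processes in $ho(r,p)$ (messages from round $r-1$ that were missed were redelivered in $dels_r$, but they do not affect the round-$r$ tally). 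Since $\card(ho(r,p)) \geq n-F$, the local state of $p$ at that point is in $f_{n-F}$. Fairness is immediate because every process has infinitely many $next_p$ transitions in $can(ho)$, so the finite-$next_p$ hypothesis of Definition~\ref{defExecStrat} never triggers. Finally, Lemma~\ref{canHO} gives $h_{can(ho)} = ho$, so $ho \in \HO_{f_{n-F}}(\DEL^{crash}_F)$.

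The only nontrivial step is the verification that each $next_p$ in $can(ho)$ is allowed by $f_{n-F}$, which is a matter of reading off the local state from Definition~\ref{defCanExec}; everything else amounts to applying the lemmas already proved. Combining the two inclusions establishes $\HO_{f_{n-F}}(\DEL^{crash}_F) = HO_F$.
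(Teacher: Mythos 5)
Your proof is correct and follows essentially the same route as the paper: the forward inclusion reads off $\card(h(r,p)) \geq n-F$ from the ``all nexts allowed'' condition of Definition~\ref{defExecStrat} together with Definitions~\ref{deffn-f} and~\ref{defHOExec}, and the reverse inclusion uses the canonical execution with Lemma~\ref{trivial} and Lemma~\ref{canHO}. The only (cosmetic) difference is that you verify directly that every $next_p$ in $can(ho)$ is allowed by $f_{n-F}$, whereas the paper phrases this check as a proof by contradiction on the smallest offending round; your explicit verification that $c_{total} \in \DEL^{crash}_F$ and your explicit appeal to Lemma~\ref{canHO} are if anything slightly more complete than the paper's write-up.
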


\begin{proof}\hfill
    \begin{itemize}
      \item $(\subseteq)$.
        Let $\textit{ho} \in \HO_{f_{n-F}}(\DEL^{crash}_F)$
        and $t \in \execs_{f_{n-F}}(\DEL^{crash}_F)$ an execution of $f_{n-F}$
        generating \textit{ho}.
        By Definition~\ref{defExecStrat} of the executions of $f_{n-F}$, processes change
        round (a $next_k$ event) only when their local state is in
        $f_{n-F}$. By Definition~\ref{deffn-f}, This means
        that this local state $q$ satisfies
        $\card(\{k \in \Pi \mid \langle q.round, k \rangle \in q.mes\}) \geq n-F \}$:
        the process received at least $n-F$ messages tagged with the current
        value of its round counter. This in turns implies by Definition~\ref{defHOExec}
        of the heard-of collection of an execution that
        $\forall r \in \mathbb{N}^*, \forall j \in \Pi : \card(ho(r,j)) \geq n-F$.
      \item $(\supseteq)$.
        Let \textit{ho} a heard-of collection over $\Pi$ such that
        $\forall r \in \mathbb{N}, \forall j \in \Pi: \card(\textit{ho}(r,j)) \geq n-F$.
        Let $t$ be the canonical execution of \textit{ho}; since $\DEL^{crash}_F$
        contains the total collection, $t$ is an execution of $\DEL^{crash}_F$
        by Lemma~\ref{trivial}.
        To prove that $t$ is also an execution of $f_{n-F}$, we proceed by contradiction.
        \textbf{Assume} it is not an execution of $f_{n-F}$.
        By Definition~\ref{defExecStrat}, the problem stems either from breaking fairness
        or from some $next_p$ for some $p$ at a point where the local state of $p$ is not
        in $f_{n-F}$. Since by Definition~\ref{defCanExec} of a canonical execution,
        $\forall p \in \Pi: next_p$ appears an infinite number of times,
        the only possibility left is the second one: there is some $p \in \Pi$ and
        some $next_p$ transition in $t$ that happens while the local state of $p$
        is not in $f_{n-F}$.
        Let $r$ be the smallest round where this happens, and $j$ the process
        to which it happens.
        By Definition~\ref{defCanExec} of a canonical execution,
        $j$ received all the messages from $ho(r,j)$ in $t$
        before the problematic $next_j$. By hypothesis, $\card(ho(r,j)) \geq n-F$.
        By Definition~\ref{deffn-f} of $f_{n-F}$, the
        local state of $j$ is in $f_{n-F}$ from this
        point on.
        By the fairness constraint of Definition~\ref{defExecStrat},
        this \textbf{contradicts} the fact that $j$ cannot
        change round at this point in $t$. Hence $t$ is an execution of $f_{n-F}$.
    \end{itemize}
    We conclude that $\textit{ho} \in \HO_{f_{n-F}}(\DEL^{crash}_F)$.
\end{proof}

Finally, we want to vindicate the folklore intuition about this strategy:
that it is optimal in some sense. Intuitively, waiting for more than $n-F$
messages per round means risking waiting forever, and waiting for less
is wasteful. Our domination order captures this concept of optimality:
we show that $f_{n-F}$ is indeed a dominating strategy for $\DEL^{crash}_F$.
Therefore, $\HO_{f_{n-F}}(\DEL^{crash}_F)$ is the
dominating predicate for $\DEL^{crash}_F$.

\begin{thm}[$f_{n-F}$ Dominates $\DEL^{crash}_F$]%
    \label{fnfDom}
    $f_{n-F}$ dominates $\DEL^{crash}_F$.
\end{thm}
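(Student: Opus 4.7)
The plan is to prove $\HO_{f'}(\DEL^{crash}_F) \supseteq \HO_{f_{n-F}}(\DEL^{crash}_F)$ for every valid strategy $f'$, which by Definition~\ref{defDom} is exactly what makes $f_{n-F}$ dominating. By Theorem~\ref{n-fCharac}, $\HO_{f_{n-F}}(\DEL^{crash}_F) = HO_F$, so I fix an arbitrary $ho \in HO_F$ (hence $\card(ho(r, p)) \geq n-F$ for every $r, p$) and produce an execution of $f'$ on some collection in $\DEL^{crash}_F$ generating $ho$.

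The natural witness is the canonical execution $can(ho)$. By Lemma~\ref{trivial} it belongs to $\execs(c_{total}) \subseteq \execs(\DEL^{crash}_F)$, and by Lemma~\ref{canHO} it generates $ho$. Fairness in the sense of Definition~\ref{defExecStrat} is automatic since every process changes round infinitely often in $can(ho)$; the only non-trivial condition is the ``all nexts allowed'' clause. I would argue it by contradiction: assume some $next_j$ in $can(ho)$ occurs at a state $q^\star \notin f'$ and pick the smallest round $r$ for which this happens. A direct inspection of the canonical execution pins this state down as $q^\star = (r,\, \{(s,k) : s < r,\, k \in \Pi\} \cup \{(r,k) : k \in ho(r,j)\})$.

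To derive a contradiction with the validity of $f'$, I build a collection $c^\star \in \DEL^{crash}_F$ on which $f'$ is forced to leave $j$ blocked at round $r$. Define $c^\star(s,k) = \Pi$ for $s < r$; $c^\star(r, j) = ho(r, j)$ and $c^\star(r, k) = \Pi$ for $k \neq j$; and $c^\star(s, k) = ho(r, j)$ for $s > r$. Checking Definition~\ref{delCrash} is routine: each delivered set has size at least $n-F$ because $\card(ho(r,j)) \geq n-F$, and $K_{c^\star}(s) = ho(r, j)$ for $s \geq r$, which contains every later $c^\star(s+1, \cdot)$. Lemma~\ref{stCorrect} then gives $st(f', c^\star) \in \execs_{f'}(c^\star)$, and this execution must be valid since $f'$ is valid. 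The minimality of $r$ transfers to $st(f', c^\star)$: every process reaches round $r$ through the same ``fully delivered'' states traversed in $can(ho)$, which belong to $f'$ by minimality. Once at round $r$, the definition of $c^\star$ ensures that the only tag-$r$ messages $j$ can ever receive come from $ho(r, j)$, so any state reached by $j$ at round $r$ has exactly the tag-$r$ part of $q^\star$.

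The hard part is the last inference: $q^\star \notin f'$ alone does not preclude $j$ from accumulating, at round $r$, tag-$s$ entries for $s > r$ originating from processes of $ho(r, j)$ that raced ahead, and thereby reaching some extended state that happens to lie in $f'$. Closing this gap requires an inductive refinement of the minimality, phrased not only over the round $r$ but over the full local state, under a well-order that also measures these extra tag-$s$ entries. Concretely, any such extended state coincides with a canonical-execution state $q^{ho'}_{r,j}$ for some $ho' \in HO_F$ reflecting the racing-ahead processes, so the inductive hypothesis applied to $ho'$ excludes it from $f'$. Once the induction is in place, $j$ is genuinely blocked in $st(f', c^\star)$, contradicting the validity of $f'$; therefore $q^\star \in f'$, $can(ho) \in \execs_{f'}(\DEL^{crash}_F)$, and $ho \in \HO_{f'}(\DEL^{crash}_F)$ as required.
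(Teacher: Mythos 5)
Your overall skeleton is the paper's: reduce to showing every valid $f'$ generates each $ho \in HO_F$, take the canonical execution, pick the minimal round $r$ with a forbidden $next_j$, identify the rejected state $q^\star$, and contradict validity of $f'$ via the standard execution of a blocking delivered collection. The flaw is in your choice of that collection. By setting $c^\star(r,k)=\Pi$ for $k\neq j$ (and only $c^\star(r,j)=ho(r,j)$), you let every other process legitimately advance past round $r$ in $st(f',c^\star)$, and the standard execution then delivers their round-$(r{+}1)$ messages to $j$ while $j$ is still at round $r$. As you yourself note, $q^\star\notin f'$ says nothing about these extended states, so $j$ need not be blocked and no contradiction follows. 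Your proposed repair does not close this: in a canonical execution $can(ho')$, the $next_j$ of round $r$ occurs in $changes_r$, i.e.\ \emph{before} $dels_{r+1}$, so the local state at any canonical $next_j$ never contains messages tagged with rounds $>r$. Hence an extended state of $j$ carrying tag-$(r{+}1)$ entries cannot ``coincide with a canonical-execution state $q^{ho'}_{r,j}$ for some $ho'\in HO_F$'', and the inductive hypothesis you invoke simply does not speak about those states. The gap is therefore genuine.

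The fix is to choose the blocking collection so that \emph{nobody} can race ahead: take $c_{block}(r',k)=\Pi$ for $r'<r$ and $c_{block}(r',k)=ho(r,j)$ for all $r'\geq r$ and all $k\in\Pi$ (this is the paper's construction). It lies in $\DEL^{crash}_F$ since $\card(ho(r,j))\geq n-F$ and $K_{c_{block}}(r')=ho(r,j)$ for $r'\geq r$, so the kernel condition holds. In $st(f',c_{block})$, either some process already blocks at a round $<r$ (all processes have identical full-information states there, so all block and the execution is invalid), or every process reaches round $r$ with exactly the state $q^\star$ of $j$ in $can(ho)$; since $q^\star\notin f'$, no process changes round, and because the standard execution only delivers messages from processes that have just changed round, no round-$(r{+}1)$ message is ever sent or delivered, so the states never change and the suffix is all $stop$. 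Either way $st(f',c_{block})$ is invalid, contradicting the validity of $f'$; this restores your conclusion that $can(ho)\in\execs_{f'}(\DEL^{crash}_F)$ and hence $ho\in\HO_{f'}(\DEL^{crash}_F)$.
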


\begin{proof}
    Let $f$ be a valid strategy for $\DEL^{crash}_F$; the theorem
    follows by Definition~\ref{defDom} from proving
    that $f \prec_{\DEL^{crash}_F} f_{n-F}$ --- that is
    $\HO_{f_{n-F}}(\DEL^{crash}_F) \subseteq
    \HO_f(\DEL^{crash}_F)$.

    Let $\textit{ho} \in \HO_{f_{n-F}}(\DEL^{crash}_F)$, and
    let $t$ be the canonical execution of \textit{ho}. Since $\DEL^{crash}_F$
    contains the total collection, $t$ is an execution of $\DEL^{crash}_F$
    by Lemma~\ref{trivial}. We need to prove that it is also an execution
    of $f$ to conclude by Lemma~\ref{canHO} that $f$ generates $ho$, and thus
    that the inequality above and the theorem hold.
    We do so by contradiction.
    \textbf{Assume} $t$ is not an execution of $f$.
    By Definition~\ref{defExecStrat}, it is either because the fairness
    condition is broken or because some $next_p$ for some process $p$ happens
    when the local state of $p$ is not in $f$. Since Definition~\ref{defCanExec}
    of canonical executions implies that $t$ contains an infinite number of $next_p$
    for every process $p \in \Pi$,
    the problem must come from some $next_j$ done by a process $j$ when
    $j$'s local state is not in $f$. Let $r$ be the first round where this
    happens.
    At the point of the forbidden $next_j$, by Definition~\ref{defCanExec} of a
    canonical execution, $j$ has received
    all the messages from previous rounds, and all the messages from
    $\textit{ho}(r,j)$. Then $ho \in \HO_{f_{n-F}}(\DEL^{crash}_F)$
    implies that $ho \in HO_F$ by Theorem~\ref{n-fCharac}. It then follows
    from the definition of $HO_F$ that
    $\textit{ho}(r,j)$ contains at least $n-F$ processes.

We define $c_{\textit{block}}$ such that:
\[\forall r' > 0, \forall k \in \Pi:
    c_{\textit{block}}(r',k) \triangleq
    \left\{
    \begin{array}{ll}
        \Pi & \textit{if } r' < r\\
        ho(r,j) & \textit{otherwise}\\
    \end{array}
    \right.\]
This corresponds to the collection where all the processes from which $j$ did not receive a message at the problematic $next_j$ in $t$ are correct up to round $r$ and then stop sending messages.
    $c_{\textit{block}}$ is a delivered collection of $\DEL^{crash}_F$
    by Definition~\ref{delCrash}:
    processes that stop sending messages never do again,
    and at most $F$ processes do so because $\textit{ho}(r,j)$
    contains at least $n-F$ processes by the reasoning above.

    Let $t' = st(f,c_{\textit{block}})$ be the standard execution of
    $f$ on $c_{\textit{block}}$.
    Lemma~\ref{stCorrect} entails than $t'$ is an execution of $f$
    on $c_{\textit{block}}$.
    Since $r$ is the smallest round in $t$ with a wrong $next_j$, for
    all rounds $< r$ the local state of $j$ is enough for $f$ to allow
    the change of round.
    By Definition~\ref{defStdExec} of standard executions,
    all $changes_k$ for $k < r$ contain a next transition for all
    processes in $t'$. By the same definition, all $dels_k$ for
    $k \leq r$ of $t'$ contain the same deliveries for each process as
    the deliveries for $j$ in the $dels_k$ of $t$.
    Hence, in $t'$, all the processes reach round $r$, all get the same state
    as $j$ in $t$ at round $r$, and thus they all block at this round, which means
    the suffix of $t'$ is made of $stop$ only. $t'$ is invalid,
    and so is $f$.
    This \textbf{contradicts} the hypothesis that $f$ is valid; we conclude that
    $ho \in \HO_f(\DEL^{crash}_F)$.
    Therefore, $f_{n-F}$ dominates $f$ by Definition~\ref{defDom}, where $f$ is
    any valid strategy for $\DEL^{crash}_F$, which means that $f_{n-F}$ dominates
    $\DEL^{crash}_F$ by Definition~\ref{defDom}.
\end{proof}

Waiting for $n-f$ messages thus gives us the best heard-of predicate that
can be implemented on $\DEL^{crash}_F$. This means that
there is no point in remembering messages from past
rounds, and messages from future rounds are not used either.
Intuitively, messages from past rounds are of no use in detecting
crashes in the current round. As for messages from
future rounds, they actually serve to detect that a process has not crashed when
sending its messages from the current round. For this to actually change the heard-of
predicate, it would require that some heard-of collection be impossible to
generate when using this future information. This is not the case, as
there is always an execution where no message from future
rounds are delivered early (the canonical execution).

\section{Oblivious Strategies:  Looking Only at the Current Round}%
\label{sec:chap2obliv}

Because of the generality of strategies, considering them all brings many issues
in proving domination. Yet there exist interesting
classes of strategies on which results can be derived.
Our first such class is the class of oblivious strategies: they depend only
on the received messages from the current round. For example, $f_{n-F}$ is
an oblivious strategy, as it counts messages from the current round.
Despite their apparent simplicity, some oblivious strategies dominate non-trivial
delivered predicates, as in the case of $f_{n-F}$ and $\DEL^{crash}_F$.

In this section, we define oblivious strategies and give a necessary and
sufficient condition for an oblivious strategy to be valid
(Section~\ref{sec:oblivious:def}). This yields useful results on the
composition of oblivious strategies
(Section~\ref{sec:oblivious:composition}), and enables to compute the
heard-of predicate of an oblivious strategy applied to a delivered predicate
(Section~\ref{sec:oblivious:heardof}). Finally, we give a sufficient
condition for oblivious domination, and show that this condition is preserved by composition
(Section~\ref{sec:oblivious:domination}).

\subsection{Definition and Expressiveness}%
\label{sec:oblivious:def}

Oblivious strategies are a family of strategies in the sense of
Definition~\ref{defFam} --- where the equivalence relation between the local
states compare only the messages received for the current round.

\begin{defi}[Oblivious Strategies and $\Nexts_f$]%
  \label{defObliv}
    Let $\obliv$ be the function such that $\forall q \in Q: \obliv(q) \triangleq
    \{k \in \Pi \mid \langle q.round, k \rangle \in q.mes\}$.
    Let $\approx_{\obliv}$ the equivalence relation defined by
    $q_1 \approx_{\obliv} q_2$ iff $\obliv(q_1) = \obliv(q_2)$.
    The family of \textbf{oblivious strategies} is $\textit{family}(\approx_{\obliv})$.

    For $f$ an oblivious strategy, let $\Nexts_{f} \triangleq
    \{\obliv(q) \mid q \in f \}$.
    It uniquely defines $f$.
\end{defi}

An oblivious strategy reduces to a collection of sets, the sets
of processes from which receiving a message in the current round is enough
to change round. The strategy allows the change of round if,
and only if, the processes heard at the current round form a set in this
collection.
This provides a simple necessary condition on such a strategy $f$ to be valid:
its $\Nexts_f$ set
must contain all the delivered sets from the corresponding delivered predicate.
If it does not, an execution would exist where the messages received
at some round $r$ by some process $p$ are exactly this delivered set, which
would block forever the process $p$ and make the strategy invalid.
This necessary condition also proves sufficient.

\begin{lem}[Necessary and Sufficient Condition for Validity of an Oblivious Strategy]%
    \label{oblivValid}
    Let $\DEL$ be a delivered predicate and $f$ be an oblivious strategy.
    Then $f$ is valid for $\DEL \iff
    f \supseteq \{q \mid \exists c \in \DEL, \exists p \in \Pi,
    \exists r  > 0: \obliv(q) = c(r,p) \}$.
\end{lem}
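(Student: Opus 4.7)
The proof splits into two implications.

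For the direction ``$\Leftarrow$'', assume the inclusion holds and suppose for contradiction that some $t \in \execs_f(\DEL)$ is invalid. Then some process is blocked forever at some round; let $r$ be the smallest such blocking round in $t$, witnessed by a process $j$, and let $c \in \DEL$ with $t \in \execs(c)$. By minimality of $r$, every sender $k \in c(r, j)$ must have transitioned at least $r - 1$ times in $t$ (otherwise $k$ would be blocked at a strictly smaller round), so by Definition~\ref{defExecColl} every message of $c(r, j)$ is eventually delivered to $j$. From that point onward $j$'s round stays at $r$ and no further round-$r$ messages for $j$ can ever arrive, so $\obliv(q_j^t[i]) = c(r, j)$ forever; the hypothesis then places $q_j^t[i] \in f$ continuously, and the fairness clause of Definition~\ref{defExecStrat} forces another $next_j$, contradicting that $j$ is stuck.

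For the direction ``$\Rightarrow$'', I argue by contrapositive: suppose there exist $c \in \DEL$, $p \in \Pi$, and $r > 0$ with $c(r, p) \notin \Nexts_f$, and pick such a triple with $r$ minimal, so that $c'(r', k) \in \Nexts_f$ for every $c' \in \DEL$, $k \in \Pi$, $r' < r$. I hand-construct an execution $t \in \execs_f(c)$ in which $p$ is blocked at round $r$, built round by round. For each $r' = 1, \ldots, r-1$, first deliver every round-$r'$ message permitted by $c$ (legal by the ``delivery after sending'' rule since all senders just completed their $(r'-1)$-th $next$), then include $next_k$ for every $k \in \Pi$ (allowed because $c(r', k) \in \Nexts_f$ by minimality). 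At round $r$, deliver every round-$r$ message permitted by $c$ and include $next_k$ only for those $k$ with $c(r, k) \in \Nexts_f$, then extend the execution in any manner that preserves fairness for still-advancing processes. In this $t$, process $p$ reaches round $r$ and ends with $\obliv(q_p) = c(r, p) \notin \Nexts_f$, so $p$ stays stuck forever; fairness for $p$ is satisfied because its state is out of $f$ from that moment on, and the adaptive form of Definition~\ref{defExecColl} ensures that $p$'s unsent higher-round messages are not required for $t \in \execs(c)$. Hence $t \in \execs_f(\DEL)$ is invalid, contradicting validity of $f$.

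The main obstacle is in the ``$\Rightarrow$'' direction: because $f$ is oblivious, $p$ could in principle be allowed to advance via some strict subset $S \subsetneq c(r, p)$ with $S \in \Nexts_f$, and so the standard execution $st(f, c)$ need not ever exhibit $p$'s oblivious projection equal to the full delivered set $c(r, p)$. The critical idea is to pick $r$ as minimal and hand-construct a ``synchronized'' execution that delivers each complete $c(r', k)$ at round $r'$ before any $next$ happens at that round, so that by the time $p$ reaches round $r$ its oblivious projection is forced to be exactly $c(r, p)$, with no opportunity to escape through a partial reception.
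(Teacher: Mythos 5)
Your proof is essentially the paper's. The ($\Leftarrow$) direction coincides with the paper's argument. In the ($\Rightarrow$) direction, the ``synchronized'' execution you build by hand is exactly the standard execution $st(f,c)$ of Definition~\ref{defStdExec}: there, all deliveries prescribed by $c$ for a round occur in $dels_r$ \emph{before} the $changes_r$ checks, so a process never gets the opportunity to change round on a strict subset of its delivered set --- the ``main obstacle'' you describe does not actually arise. The paper takes $t = st(f,c)$, invokes Lemma~\ref{stCorrect}, and then does a two-case analysis (either some process blocks before round $r$, or everyone reaches round $r$ and $j$ blocks with oblivious projection exactly $c(r,j)$); your choice of a minimal offending round $r$ is a clean alternative that eliminates the first case. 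The one soft spot is your closing step ``extend the execution in any manner that preserves fairness for still-advancing processes'': the continuation must still deliver exactly the messages required by the biconditional of Definition~\ref{defExecColl} and satisfy the fairness clause of Definition~\ref{defExecStrat} for \emph{every} process, including ones that block at rounds beyond $r$, so an arbitrary continuation does not obviously qualify; replacing this step by simply taking $t = st(f,c)$ and citing Lemma~\ref{stCorrect} (whose continuation does precisely what you need) closes the point, after which your argument is correct.
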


\begin{proof}\hfill
  \begin{itemize}
    \item $(\Rightarrow)$ Let $f$ be valid for \DEL\@.
      We show by contradiction that $f$ contains all local states $q$ such that
      $\exists c \in \DEL, \exists p \in \Pi, \exists r  > 0:
      \obliv(q) = c(r,p)$.
      \textbf{Assume} there is some $q_{block}$ for which it is not the case: then
      $\exists c \in \DEL, r > 0$ and $j \in \Pi$ such that
      $\obliv(q_{block}) = c(r,j)$ and $q_{block} \notin f$. By
      Definition~\ref{defObliv}, this means that for every $q$ such that
      $\obliv(q) = \obliv(q_{block}) = c(r,j)$, we have $q \notin f$.
      Let $t = st(f,c)$ be the standard execution of $f$ on $c$.
      This is an execution of $f$ on $c$ by Lemma~\ref{stCorrect}.
      The sought contradiction is reached by proving that $t$ is invalid.
      To do so, we split according to two cases: the first case is where
      there is a blocking process before round $r$, and the second case is
      where there is no blocking process before round $r$. This last case
      then uses the hypothesis on $c(r,j)$ to show that all the processes block at
      $r$.
      \begin{itemize}
          \item During one of the first $r-1$ iterations of
              $t$, there is some process which cannot change
              round. Let $r'$ be the smallest iteration where it happens,
              and $k$ be a process unable to change round at the $r'$-th
              iteration.
              By minimality of $r'$, all the processes arrive at round $r'$ in
              $t$; by Definition~\ref{defStdExec} of the standard execution,
              all messages for $k$ from round $r'$ are delivered before
              the $change_{r'}$ part of the iteration.
              Let $q$ be the local state of $k$ at the
              start of $change_{r'}$ in the $r'$-th iteration, and
              let $q'$ be any local state of $k$ afterward.
              The above tells us that as long as $q'.round = q.round$,
              we have $\obliv(q) = \obliv( q')$ and thus $q' \notin f$.
              Therefore, $k$ can never change round while at round $r'$.
              We conclude that $t$ is invalid by Definition~\ref{valid}.

          \item For the first $r-1$ iterations, all the processes change round.
              Thus everyone arrives at round $r$ in the $r-1$-th iteration.
              By Definition~\ref{defStdExec} of the standard execution,
              all messages from the round are delivered
              before the $change_{r}$ part of the $r$-th iteration.
              Thus $j$ is in a local state $q$
              at the $change_{r}$ part of the $r$-th iteration
              such that $\obliv(q) = c(r,j) =
              \obliv(q_{block})$. By hypothesis, this means $q \notin f$ and
              thus that $j$ cannot change round.
              Let $q'$ be any local state of $j$ afterward.
              The above tells us that as long as $q'.round = r$,
              we have $\obliv(q) = \obliv(q') = c(r,j)$
              and thus $q' \notin f$. Therefore, $j$ can never
              change round while at round $r$.
              Here too, $t$ is invalid by Definition~\ref{valid}.
      \end{itemize}
      Either way, we reach a \textbf{contradiction} with the validity of $f$.
      Therefore $f \supseteq \{q \mid \exists c \in \DEL, \exists p \in \Pi,
      \exists r  > 0: \obliv(q) = c(r,p) \}$.

    \item $(\Leftarrow)$ Let $\DEL$ and $f$ be such that
      $f \supseteq \{q \mid \exists c \in \DEL, \exists p \in \Pi,
      \exists r  > 0: \obliv(q) = c(r,p) \}$.
      We show by contradiction that $f$ is valid.
      \textbf{Assume} the contrary: there is some
      $t \in \execs_f(\DEL)$ which is invalid. By
      Definition~\ref{valid} of validity, there
      are some processes blocked at a round forever in $t$. Let $r$ be the
      smallest such round, and $j$ be a process blocked at round $r$ in $t$.
      By minimality of $r$, all the processes arrive at round $r$.
      By Definition~\ref{defExecColl} of an execution of \DEL, there is a
      $c \in \DEL$ such that $t$ is an execution of $c$.
      This means by Definition~\ref{defExecColl} of an execution of a collection
      that all the messages from $c(r,j)$ are eventually delivered.
      From this point on, every local state $q$ of $j$ satisfies $\obliv(q) = c(r,j)$;
      thus we have $q \in f$ by hypothesis on $f$.
      Then the fairness condition of executions of $f$
      from Definition~\ref{defExecStrat} imposes that $j$ does change round at some point.
      We conclude that $j$ is not blocked at round $r$ in $t$,
      which \textbf{contradicts} the hypothesis that it is blocked forever at round $r$
      in~$t$.
      \qedhere
  \end{itemize}
\end{proof}

\noindent
When taking the oblivious strategy satisfying exactly this condition
for validity, it results in a strategy dominating the other oblivious ones.
It follows from the fact that
this strategy waits for the minimum sets required to be valid, hence the
name of minimal oblivious strategy.

\begin{defi}[Minimal Oblivious Strategy]%
    \label{defMinObliv}
    Let $\DEL$ be a delivered predicate. The \textbf{minimal oblivious strategy}
    for $\DEL$ is $f_{\min} \triangleq
    \{q \mid \exists c \in \DEL, \exists p \in \Pi,
    \exists r  > 0: \obliv(q) = c(r,p) \}$.
\end{defi}

\begin{lem}[Domination of Minimal Oblivious Strategy]%
    \label{domMinObliv}
    Let $\DEL$ be a delivered predicate and $f_{\min}$ be its minimal oblivious strategy.
    Then $f_{\min}$ is a dominating oblivious strategy for \DEL\@.
\end{lem}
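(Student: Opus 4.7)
The plan is to unpack what it means to be a dominating oblivious strategy and reduce everything to the necessary/sufficient condition from Lemma~\ref{oblivValid}. First I would check that $f_{\min}$ itself is a valid oblivious strategy: it is oblivious since membership of a state $q$ only depends on $\obliv(q)$, and by Definition~\ref{defMinObliv} it equals the lower bound in Lemma~\ref{oblivValid}, hence validity follows from the $(\Leftarrow)$ direction of that lemma. Next, for any valid oblivious strategy $f$ for $\DEL$, the $(\Rightarrow)$ direction of Lemma~\ref{oblivValid} gives $f \supseteq f_{\min}$.

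The remaining task is to prove $\HO_{f_{\min}}(\DEL) \subseteq \HO_f(\DEL)$, so that $f \prec_{\DEL} f_{\min}$ by Definition~\ref{defDom}. Given $h \in \HO_{f_{\min}}(\DEL)$, pick some witnessing execution $t \in \execs_{f_{\min}}(\DEL)$ with $h_t = h$. I claim $t \in \execs_f(\DEL)$, and the two points of Definition~\ref{defExecStrat} to check are:
\begin{itemize}
  \item \emph{All nexts allowed.} Every $next_p$ in $t$ happens in a local state of $p$ belonging to $f_{\min}$; since $f \supseteq f_{\min}$, that local state is also in $f$.
  \item \emph{Fairness.} Here is the step that could look like an obstacle: the condition ``state outside $f$ infinitely often'' is strictly stronger than ``state outside $f_{\min}$ infinitely often'' because $f \supseteq f_{\min}$, so fairness for $f$ does not automatically follow from fairness for $f_{\min}$. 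However, because $f_{\min}$ is valid, $t$ is a valid execution by Definition~\ref{defVal}, so every process has infinitely many $next_p$ transitions in $t$. The antecedent of the fairness implication for $f$ is therefore false for every $p$, and fairness holds vacuously.
\end{itemize}

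Thus $t \in \execs_f(\DEL)$, and $h = h_t \in \HO_f(\DEL)$ by Definition~\ref{defHOExec}. This gives the required inclusion, so $f_{\min}$ is a greatest element for $\prec_{\DEL}$ among valid oblivious strategies, i.e.\ a dominating oblivious strategy for $\DEL$. The only delicate point in the whole argument is the fairness reasoning above, where one might wrongly fear that going to a larger strategy $f$ breaks fairness; validity of $f_{\min}$ makes this fear dissolve.
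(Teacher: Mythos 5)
Your proof is correct and takes essentially the same route as the paper: validity of $f_{\min}$ and the inclusion $f_{\min} \subseteq f$ from Lemma~\ref{oblivValid}, then every execution of $f_{\min}$ on $\DEL$ is an execution of $f$, giving $\HO_{f_{\min}}(\DEL) \subseteq \HO_f(\DEL)$ and hence domination. Your explicit treatment of the fairness clause (vacuous because validity of $f_{\min}$ forces infinitely many $next_p$ in any witnessing execution) correctly fills in a detail that the paper's proof passes over when it asserts that executions of $f_{\min}$ are executions of $f$.
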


\begin{proof}
    First, $f_{\min}$ is valid for $\DEL$ by application of
    Lemma~\ref{oblivValid}.
    Next, we take another oblivious strategy $f$, which is valid
    for \DEL\@. Lemma~\ref{oblivValid} now gives us that
    $f_{\min} \subseteq f$. When
    $f_{\min}$ allows a change of round, so does $f$.
    This entails by Definition~\ref{defExecStrat}
    that all executions of $f_{\min}$ on $\DEL$ are also executions of $f$ on $\DEL$,
    and thus by Definition~\ref{defHOExec}
    that $\HO_{f_{\min}}(\DEL) \subseteq \HO_f(\DEL)$.
    We conclude from Definition~\ref{defDom}
    that $f_{\min}$ dominates any valid oblivious strategy for \DEL\@.
\end{proof}

These theorems guarantee that every delivered predicate has a
strategy dominating the oblivious ones, by giving a means to build it.
Of course, a formal definition is not the same as a constructive definition,
which motivates the study of minimal strategies through the operations,
and their relations to the operations on the corresponding predicates.

\subsection{Composition of Oblivious Strategies}%
\label{sec:oblivious:composition}

One fundamental property of minimal oblivious strategies is their nice
behaviour under the proposed operations (union, combination, succession and repetition).
That is, they give minimal oblivious strategies of resulting delivered predicates.
Although this holds for all operations, succession and repetition are not useful
here, as the succession of two minimal oblivious strategies is equal to their union,
and the repetition of a minimal oblivious strategy is equal to the strategy itself.
The first operation to study is therefore union.
The minimal oblivious strategy of $\DEL_1 \cup \DEL_2$ and $\DEL_1 \leadsto \DEL_2$
is the same, as shown in the next theorem, and thus it's
the union of the minimal oblivious strategies of $\DEL_1$ and $\DEL_2$.

\begin{thm}[Minimal Oblivious Strategy for Union and Succession]%
    \label{unionObliv}
    Let $\DEL_1, \DEL_2$ be two delivered predicates, $f_1$ and $f_2$ the minimal
    oblivious strategies for, respectively, $\DEL_1$ and $\DEL_2$.
    Then $f_1 \cup f_2$ is the minimal oblivious strategy for
    $\DEL_1 \cup \DEL_2$ and $\DEL_1 \leadsto \DEL_2$.
\end{thm}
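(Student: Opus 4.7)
The plan is to unfold Definition~\ref{defMinObliv} on both sides and then show that the ``delivered sets appearing somewhere in the predicate'' coincide with the delivered sets appearing in $\DEL_1$ or in $\DEL_2$. Concretely, for any delivered predicate $\DEL$, let
\[ S(\DEL) \triangleq \{ X \subseteq \Pi \mid \exists c \in \DEL, \exists p \in \Pi, \exists r > 0 : c(r,p) = X \}. \]
By Definition~\ref{defMinObliv} and Definition~\ref{defObliv}, the minimal oblivious strategy for $\DEL$ is exactly $\{ q \in Q \mid \obliv(q) \in S(\DEL)\}$, and by Definition~\ref{defObliv} the oblivious strategy $f_1 \cup f_2$ is $\{ q \in Q \mid \obliv(q) \in \Nexts_{f_1} \cup \Nexts_{f_2} \}$, where $\Nexts_{f_i} = S(\DEL_i)$ since $f_i$ is the minimal oblivious strategy for $\DEL_i$. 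Hence the theorem reduces to proving the two set equalities $S(\DEL_1 \cup \DEL_2) = S(\DEL_1) \cup S(\DEL_2)$ and $S(\DEL_1 \leadsto \DEL_2) = S(\DEL_1) \cup S(\DEL_2)$.

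The first equality is immediate from Definition~\ref{defOpsPred}: a collection of $\DEL_1 \cup \DEL_2$ is either a collection of $\DEL_1$ or of $\DEL_2$, so the sets appearing at some $(r,p)$ in such a collection are precisely the ones appearing somewhere in $\DEL_1$ or somewhere in $\DEL_2$.

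For the second equality, I would use the structural definition $c_1 \leadsto c_2 = \{ c \mid \exists r \geq 0 : c = c_1[1,r].c_2 \}$ from Definition~\ref{defOpsPred}. For $\supseteq$, given $X \in S(\DEL_1)$ witnessed by $c_1 \in \DEL_1$ with $c_1(r_0,p) = X$, pick any $c_2 \in \DEL_2$ and form $c_1[1,r_0].c_2 \in \DEL_1 \leadsto \DEL_2$; its first $r_0$ rounds agree with $c_1$, so $X$ appears at round $r_0$. Given $X \in S(\DEL_2)$ witnessed by $c_2 \in \DEL_2$ with $c_2(r_0,p) = X$, take any $c_1 \in \DEL_1$ and the collection obtained with the empty prefix ($r = 0$), i.e. $c_2$ itself; then $X$ appears at round $r_0$. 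For $\subseteq$, any $c \in \DEL_1 \leadsto \DEL_2$ is of the form $c_1[1,r].c_2$ with $c_1 \in \DEL_1$ and $c_2 \in \DEL_2$, so any value $c(r',p)$ is either $c_1(r',p)$ (if $r' \leq r$) and therefore in $S(\DEL_1)$, or $c_2(r'-r,p)$ (if $r' > r$) and therefore in $S(\DEL_2)$.

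No step here looks genuinely hard; the only place to be slightly careful is the $r = 0$ corner case for succession, since then the ``prefix'' of $c_1$ is empty and the resulting collection is $c_2$ alone — which is precisely what makes every $X \in S(\DEL_2)$ appear in some element of $\DEL_1 \leadsto \DEL_2$ without needing cooperation from $\DEL_1$. Once both set equalities are in hand, combining them yields $f_1 \cup f_2 = \{ q \mid \obliv(q) \in S(\DEL_1 \cup \DEL_2)\} = \{ q \mid \obliv(q) \in S(\DEL_1 \leadsto \DEL_2)\}$, which is exactly the statement that $f_1 \cup f_2$ is the minimal oblivious strategy for both $\DEL_1 \cup \DEL_2$ and $\DEL_1 \leadsto \DEL_2$.
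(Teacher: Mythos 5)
Your proposal is correct and follows essentially the same route as the paper: unfold Definition~\ref{defMinObliv}, observe that everything reduces to comparing the delivered sets occurring in $\DEL_1 \cup \DEL_2$, in $\DEL_1 \leadsto \DEL_2$, and in $\DEL_1$, $\DEL_2$ separately, and handle succession by the prefix case split (including the $r=0$ empty-prefix case, which the paper also uses to recover collections of $\DEL_2$). Your $S(\cdot)$ notation merely packages the paper's two set equalities (strategy of union $=$ strategy of succession, and $f_1 \cup f_2 =$ strategy of union) into one statement; no substantive difference.
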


\begin{proof}[Proof idea]
  Structurally, every proof in this subsection amounts to showing
  equality between the strategies resulting from the operations
  and the minimal oblivious strategy for the delivered predicate.
  For a union, the messages that can be received at each round are
  the messages that can be received at each round in the first
  predicate or in the second. This is also true for succession.
  Given that $f_1$ and $f_2$ are the minimal oblivious strategies
  of $\DEL_1$ and $\DEL_2$,
  they accept exactly the states where the messages received from the
  current round are in a delivered set of $\DEL_1$ or a delivered set
  of $\DEL_2$. Thus $f_1 \cup f_2$ is the minimal oblivious strategy for
  $\DEL_1 \cup \DEL_2$ and $\DEL_1 \leadsto \DEL_2$.
\end{proof}

\begin{proof}
    We first show that the minimal oblivious strategies of
    $\DEL_1 \cup \DEL_2$ and $\DEL_1 \leadsto \DEL_2$ are equal.
    By Definition~\ref{defOpsStrat}, we need to prove that
    $\{q \mid \exists c \in \DEL_1 \cup \DEL_2, \exists p \in \Pi,
    \exists r > 0: \obliv(q) = c(r,p) \} =
    \{q \mid \exists c \in \DEL_1 \leadsto \DEL_2, \exists p \in \Pi,
    \exists r > 0: \obliv(q) = c(r,p) \}$.

    \begin{itemize}
        \item $(\subseteq)$ Let $q$ be such that $\exists c \in \DEL_1
          \cup \DEL_2, \exists p \in \Pi, \exists r > 0: \obliv(q)
          = c(r,p)$.
          \begin{itemize}
            \item If $c \in \DEL_1$, then we take $c_2 \in \DEL_2$,
              and take $c' = c[1,r].c_2$. Since by Definition~\ref{defOpsPred}
              $c' \in c \leadsto c_2$, we have $c' \in \DEL_1 \leadsto \DEL_2$.
              By definition of $c'$, $c'(r,p) = c(r,p)$.
              We thus have $c', p$ and $r$ showing that
              $\obliv(q) = c'(r,p)$, and thus $q$
              is in the set on the right.
            \item If $c \in \DEL_2$, then
              $c \in \DEL_1 \leadsto \DEL_2$ by Definition~\ref{defOpsPred}.
              We thus have $c, p$ and $r$ showing that $\obliv(q) = c(r,p)$,
              and thus $q$ is in the set on the right.
          \end{itemize}
        \item $(\supseteq)$ Let $q$ be such that $\exists c \in \DEL_1
          \leadsto \DEL_2, \exists p \in \Pi, \exists r > 0: \obliv(q)
          = c(r,p)$.
          \begin{itemize}
            \item If $c \in \DEL_2$, then $c \in \DEL_1 \cup
              \DEL_2$ by Definition~\ref{defOpsPred}.
              We thus have $c, p$ and $r$ showing that $\obliv(q) = c(r,p)$,
              and thus $q$ is in the set on the left.
            \item If $c \notin \DEL_2$, there exist
              $c_1 \in \DEL_1, c_2 \in \DEL_2$ and
              $r' > 0$ such that $c = c_1[1,r'].c_2$
              by Definition~\ref{defOpsPred}.
              \begin{itemize}
                \item If $r \leq r'$, then by definition of
                  $c$, we have $c(r,p) = c_1(r,p)$.
                  We thus have $c_1, p$ and $r$ showing that
                  $\obliv(q) = c_1(r,p)$, and thus $q$
                  is in the set on the left.
                \item If $r > r'$, then $c(r,p) = c_2(r-r',p)$.
                  We thus have $c_2, p$ and $(r-r')$ showing that
                  $\obliv(q) = c_2(r-r',p)$, and thus $q$
                  is in the set on the left.
              \end{itemize}
          \end{itemize}
    \end{itemize}

    \noindent
    We now show that $f_1 \cup f_2 =
    \{q \mid \exists c \in \DEL_1 \cup \DEL_2, \exists p \in \Pi,
    \exists r > 0: \obliv(q) = c(r,p) \}$, which allows us to conclude
    by Definition~\ref{defMinObliv} that $f_1 \cup f_2$ is the minimal
    oblivious strategy for $\DEL_1 \cup \DEL_2$.
    \begin{itemize}
        \item Let $q \in f_1 \cup f_2$. We fix $q \in f_1$
            (the case $q \in f_2$ is symmetric).
            Then because $f_1$ is the minimal oblivious strategy of $\DEL_1$,
            by application of Lemma~\ref{oblivValid},
            $\exists c_1 \in \DEL_1,\exists p \in \Pi, \exists r > 0$
            such that $c_1(r,p)= \obliv(q)$. Also,
            $c_1 \in \DEL_1 \subseteq \DEL_1 \cup \DEL_2$ by
            Definition~\ref{defOpsPred}.
            We thus have $c_1, p$ and $r$ showing that $q$
            is in the minimal oblivious strategy for $\DEL_1 \cup \DEL_2$.
        \item Let $q$ be such that $\exists c \in \DEL_1 \cup \DEL_2,
            \exists p \in \Pi, \exists r > 0: c(r,p)= \obliv(q)$.
            By Definition~\ref{defOpsPred} of operations on strategies, and
            specifically union, $c$ must be in $\DEL_1$ or
            $c$ must be in $\DEL_2$; we fix $c \in \DEL_1$
            (the case $\DEL_2$ is symmetric).
            Then Definition~\ref{defMinObliv} gives us that
            $q$ is in the minimal oblivious strategy of $\DEL_1$, that is $f_1$.
            We conclude that $q \in f_1 \cup f_2$.
            \qedhere
    \end{itemize}
\end{proof}

\noindent
For the same reason that succession is indistinguishable from union,
repetition is indistinguishable from the original predicate: the delivered
sets are the same, because every collection of the repetition is built from
prefixes of collections of the original predicate.
Thus, the minimal oblivious strategy for a repetition is
the same strategy as the minimal oblivious strategy of the original
predicate.

\begin{thm}[Minimal Oblivious Strategy for Repetition]%
    \label{repetObliv}
    Let $\DEL$ be a delivered predicate, and
    $f$ be its minimal oblivious strategy.
    Then $f$ is the minimal oblivious strategy for
    $\DEL^{\omega}$.
\end{thm}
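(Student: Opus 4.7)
My plan mirrors the structure of the proof of Theorem~\ref{unionObliv}: show that the minimal oblivious strategy of $\DEL^\omega$, namely $\{q \mid \exists c \in \DEL^\omega, \exists p \in \Pi, \exists r > 0: \obliv(q) = c(r,p)\}$, coincides set-theoretically with $f = \{q \mid \exists c \in \DEL, \exists p \in \Pi, \exists r > 0: \obliv(q) = c(r,p)\}$, and then conclude by Definition~\ref{defMinObliv}. The intuition, exactly as in the paragraph preceding the statement, is that an oblivious strategy only cares about the set of messages received at a single round, and repetition does not introduce any new per-round delivered sets: every slice $c(r,\cdot)$ of a collection $c \in \DEL^\omega$ is literally a slice of some $c_i \in \DEL$ (shifted by $r_i$), and conversely every slice of some $c \in \DEL$ occurs as a slice of the repetition that just loops $c$.

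For the $(\supseteq)$ direction, I would take any $q \in f$, witnessed by $c \in \DEL$, $p$ and $r$ with $\obliv(q) = c(r,p)$, and exhibit a collection in $\DEL^\omega$ whose $r$-th slice for $p$ equals $c(r,p)$. The simplest witness is $c$ itself, which is in $\DEL^\omega$ by taking ${(c_i)}_{i\in\mathbb{N}^*}$ constantly equal to $c$ and ${(r_i)}_{i\in\mathbb{N}^*}$ equal to $(0,1,2,\dots)$ in the definition of repetition; but any sequence starting with a long enough prefix of $c$ also works.

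For the $(\subseteq)$ direction, I would unfold the definition of $\DEL^\omega$: given $q$ with $\obliv(q) = c(r,p)$ for some $c \in \DEL^\omega$, I extract the witnessing sequences ${(c_i)}_{i\in\mathbb{N}^*}$ in $\DEL$ and ${(r_i)}_{i\in\mathbb{N}^*}$ with $r_1 = 0$, $r_i < r_{i+1}$, and $c[r_i+1, r_{i+1}] = c_i[1, r_{i+1} - r_i]$. Because the $r_i$ form a strictly increasing sequence starting at $0$, there is a unique $i$ with $r_i < r \leq r_{i+1}$. Then $c(r,p) = c_i(r - r_i, p)$, so the triple $(c_i, p, r - r_i)$ with $c_i \in \DEL$ witnesses $q \in f$.

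The only mildly technical step is the index bookkeeping in the second direction; once the right $i$ is identified the equality is immediate. Combining the two inclusions with Definition~\ref{defMinObliv} yields the theorem.
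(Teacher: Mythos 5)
Your overall plan --- proving that $\{q \mid \exists c \in \DEL^{\omega}, \exists p \in \Pi, \exists r > 0: \obliv(q) = c(r,p)\}$ equals $f$ and concluding by Definition~\ref{defMinObliv} --- is exactly the paper's, and your $(\subseteq)$ direction (locating the unique block $i$ with $r_i < r \leq r_{i+1}$ and using the witness $(c_i, p, r-r_i)$) matches the paper's argument. The flaw is in the other direction: the claim that $c$ itself lies in $\DEL^{\omega}$, witnessed by $(c_i)$ constantly equal to $c$ and $(r_i) = (0,1,2,\dots)$, fails on two counts. First, with those choices every block has length $1$, so the condition $c[r_i+1,r_{i+1}] = c_i[1,r_{i+1}-r_i]$ describes the collection that repeats round $1$ of $c$ forever; this equals $c$ only when $c$ is constant, and its round-$r$ slice is $c(1,p)$, not $c(r,p)$, so it does not even witness the slice you need. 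Second, $c \in \DEL$ does not in general imply $c \in \DEL^{\omega}$: by Definition~\ref{defOpsPred} the $r_i$ are strictly increasing integers, so every collection of $\DEL^{\omega}$ is a concatenation of infinitely many \emph{finite} prefixes of collections of $\DEL$. If, say, $\DEL = \{c\}$ with $c(1,p) = \Pi$ and $c(r,p) = \Sigma \subsetneq \Pi$ for all $r \geq 2$ and all $p$, then every collection of $\DEL^{\omega}$ must restart with delivered set $\Pi$ at round $r_2+1 \geq 2$, hence $c \notin \DEL^{\omega}$.

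Fortunately, your fallback remark --- ``any sequence starting with a long enough prefix of $c$ also works'' --- is the correct argument, and it is precisely what the paper does: choose a decomposition whose first block is $c[1,r']$ with $r' \geq r$ (the paper takes $c_1 = c$ and $r_2 = r$), followed by arbitrary collections of $\DEL$. The resulting collection $c' \in \DEL^{\omega}$ is in general different from $c$, but it satisfies $c'(r,p) = c(r,p) = \obliv(q)$, which is all the definition of the minimal oblivious strategy for $\DEL^{\omega}$ requires. So the proof is repaired by dropping the claim about $c$ itself and promoting the parenthetical construction to the actual witness; with that change your argument coincides with the paper's.
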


\begin{proof}
    We show that $f =
    \{q \mid \exists c \in \DEL^{\omega}, \exists p \in \Pi,
    \exists r > 0: \obliv(q) = c(r,p) \}$, which allows us to conclude
    by Definition~\ref{defMinObliv} that $f$ is the minimal oblivious strategy for
    $\DEL^{\omega}$.
    \begin{itemize}
        \item $(\subseteq)$ Let $q \in f$. By minimality of $f$
            for $\DEL$,
            $\exists c \in \DEL, \exists p \in \Pi, \exists r > 0:
            \obliv(q) = c(r,p)$.
            We take $c' \in \DEL^{\omega}$ such that
            $c_1 = c$ and $r_2 = r$; the other $c_i$ and
            $r_i$ don't matter for the proof.
            By Definition~\ref{defOpsPred} of operations on predicates,
            and specifically repetition, we get
            $c'(r,p) = c(r,p) = \obliv(q)$.
            We have $c', p$ and $r$ showing that
            $q$ is in the minimal oblivious strategy of
            $\DEL^{\omega}$.
        \item $(\supseteq)$ Let $q$ be such that
            $\exists c \in \DEL^{\omega}, \exists p \in \Pi, \exists r > 0:
            \obliv(q) = c(r,p)$.
            By Definition~\ref{defOpsPred} of operations on predicates,
            and specifically repetition, there are
            $c_i \in \DEL$ and $0< r_i < r_{i+1}$
            such that $r \in [r_i+1,r_{i+1}]$ and
            $c(r,p) = c_i(r-r_i,p)$.
            We have found $c_i, p$ and $(r - r_i)$ showing that
            $q$ is in the minimal oblivious strategy for \DEL\@.
            Since $f$ is the minimal oblivious strategy for \DEL,
            we get $q \in f$.
            \qedhere
    \end{itemize}
\end{proof}

\noindent
Combination is different from the other operations, as combining collections
is done round by round. Since oblivious strategies do not depend on the
round, the combination of two oblivious strategies will accept the combination
of any two states accepted, that is, it will accept any intersection of
the delivered set of received messages from the current round in the first state
and the delivered set of received messages from the current round in the second state.
Yet when taking the combination of two predicates, maybe the collections are
such that these two delivered sets used in the intersection above never happen
at the same round, and thus never appear in the combination of collections.

To ensure that every intersection of pairs of delivered sets, one from
a collection from each predicate, happens in the combination of predicates,
we add an assumption: the symmetry of the predicate
over processes and over rounds. This means that for
any delivered set $D$ of the predicate, for any round and any process, there
is a collection of the predicate where $D$ is the delivered set for some round
and some process. $\DEL^{crash}_F$ is an example of round symmetric delivered predicate: all processes are equivalent, and crashes can happen at any round.

\begin{defi}[Round Symmetric \DEL]%
  \label{defRoundSym}
    Let $\DEL$ be a delivered predicate. $\DEL$ is \textbf{round symmetric} iff
    $\forall c \in \DEL, \forall r > 0, \forall k \in
    \Pi, \forall r' > 0, \forall k \in \Pi, \exists c' \in \DEL:
    c(r,k) = c'(r',j)$.
\end{defi}

\begin{thm}[Minimal Oblivious Strategy for Combination]%
    \label{combiObliv}
    Let $\DEL_1, \DEL_2$ be two round symmetric delivered predicates,
    $f_1$ and $f_2$ the minimal
    oblivious strategies for, respectively, $\DEL_1$ and $\DEL_2$.
    Then $f_1 \combi f_2$ is the minimal oblivious strategy for
    $\DEL_1 \combi \DEL_2$.
\end{thm}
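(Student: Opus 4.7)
The plan is to prove $f_1 \combi f_2$ equals the minimal oblivious strategy of $\DEL_1 \combi \DEL_2$, which by Definition~\ref{defMinObliv} reduces to showing the set equality $f_1 \combi f_2 = \{q \mid \exists c \in \DEL_1 \combi \DEL_2, \exists p \in \Pi, \exists r > 0: \obliv(q) = c(r,p)\}$. The proof mirrors that of Theorem~\ref{unionObliv}: I would prove set inclusion in both directions, using minimality of $f_1$ and $f_2$ together with round symmetry to reconcile the positions where delivered sets appear.

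For $(\subseteq)$, I would unpack $q \in f_1 \combi f_2$ as $q = q_1 \combi q_2$ with $q_1 \in f_1, q_2 \in f_2$ at a common round, and apply Lemma~\ref{oblivValid} to obtain $c_i \in \DEL_i$, $r_i$, $p_i$ with $c_i(r_i,p_i) = \obliv(q_i)$. The key step is then to fix an arbitrary target $(r,p)$ and invoke round symmetry of $\DEL_1$ and $\DEL_2$ to produce $c_1' \in \DEL_1$ and $c_2' \in \DEL_2$ with $c_1'(r,p) = \obliv(q_1)$ and $c_2'(r,p) = \obliv(q_2)$. Then $c_1' \combi c_2' \in \DEL_1 \combi \DEL_2$ by Definition~\ref{defOpsPred}, and $(c_1' \combi c_2')(r,p) = \obliv(q_1) \cap \obliv(q_2) = \obliv(q_1 \combi q_2) = \obliv(q)$, placing $q$ in the minimal oblivious strategy.

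For $(\supseteq)$, given $q$ with $\obliv(q) = c(r,p)$ for some $c = c_1 \combi c_2 \in \DEL_1 \combi \DEL_2$, I would construct the required decomposition explicitly. Define $q_1$ at round $q.round$ by setting its current-round senders to $c_1(r,p)$ and its senders at every other round $r''$ equal to $q(r'')$; define $q_2$ symmetrically using $c_2(r,p)$. Then $\obliv(q_i) = c_i(r,p)$, so $q_i \in f_i$ by Lemma~\ref{oblivValid} and the fact that $f_i$ is oblivious (so it contains every state with the right current-round image, regardless of round number or past/future messages). A direct computation of $q_1 \combi q_2$ gives $c_1(r,p) \cap c_2(r,p) = c(r,p) = \obliv(q)$ at round $q.round$ and $q(r'') \cap q(r'') = q(r'')$ elsewhere, so $q_1 \combi q_2 = q$ and $q \in f_1 \combi f_2$.

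The main obstacle is the $(\subseteq)$ direction, and this is precisely where round symmetry is indispensable: without it, $\obliv(q_1)$ might appear in $\DEL_1$ only at positions disjoint from those where $\obliv(q_2)$ appears in $\DEL_2$, and since $\combi$ on collections intersects round-by-round and process-by-process, the intersection $\obliv(q_1) \cap \obliv(q_2)$ might fail to be a delivered set of any collection in $\DEL_1 \combi \DEL_2$. Round symmetry is the exact hypothesis that lets us realign both collections to a common $(r,p)$ so that the combined predicate exhibits the desired intersection as a delivered set.
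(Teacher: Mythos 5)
Your proof is correct and follows essentially the same route as the paper's: both directions reduce the claim to the set equality with Definition~\ref{defMinObliv}, using minimality plus round symmetry to realign delivered sets in the $(\subseteq)$ direction and an explicit decomposition $q = q_1 \combi q_2$ in the $(\supseteq)$ direction. The only (harmless) differences are that the paper invokes round symmetry of $\DEL_2$ alone, realigning $c_2$ to the position $(r_1,p_1)$ of $c_1$, whereas you realign both collections to a common $(r,p)$, and your choice $q_1.round = q_2.round = q.round$ in the $(\supseteq)$ construction makes the identity $q = q_1 \combi q_2$ immediate.
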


\begin{proof}[Proof idea]
    The oblivious states of $\DEL_1 \combi \DEL_2$ are the combination
    of an oblivious state of $\DEL_1$ and of one of $\DEL_2$ at the
    same round, for the same process. Thanks to round symmetry,
    this translates into the combination of any oblivious state of
    $\DEL_1$ with any oblivious state of $\DEL_2$.
    Since $f_1$ and $f_2$ are the minimal oblivious strategy, they both
    accept exactly the oblivious states of $\DEL_1$ and $\DEL_2$
    respectively. Thus $f_1 \combi f_2$ accepts all the combinations
    of oblivious states of $\DEL_1$ and $\DEL_2$, and
    is the minimal oblivious strategy of $\DEL_1 \combi \DEL_2$.
\end{proof}

\begin{proof}
    We show that $f_1 \combi f_2 =
    \{q \mid \exists c \in \DEL_1 \combi \DEL_2, \exists p \in \Pi,
    \exists r > 0: \obliv(q) = c(r,p) \}$, which allows us to apply
    Lemma~\ref{defMinObliv} to show that $f_1 \combi f_2$ is the
    minimal oblivious strategy of $\DEL_1 \combi \DEL_2$.

    \begin{itemize}
        \item
            Let $q \in f_1 \combi f_2$. Then $\exists q_1 \in f_1,
            \exists q_2 \in f_2$ such that $q = q_1 \combi q_2$.
            This also means that $q_1.round = q_2.round = q.round$.
            By minimality of $f_1$ and $f_2$,
            $\exists c_1 \in \DEL_1, \exists  p_1 \in \Pi, \exists r_1 > 0:
            c_1(r_1,p_1) = \obliv(q_1)$ and
            $\exists c_2 \in \DEL_2,\exists p_2 \in \Pi,\exists r_2 > 0:
            c_2(r_2,p_2) = \obliv(q_2)$.
            Moreover, by Definition~\ref{defRoundSym} of round symmetry,
            the hypothesis on $\DEL_2$ ensures that
            $\exists c'_2 \in \DEL_2:c'_2(r_1,p_1)
            = c_2(r_2,p_2)$.
            We take $c = c_1 \combi c'_2$.
            $\obliv(q) = \obliv(q_1) \cap
            \obliv(q_2) = c_1(r_1,p_1) \cap c_2(r_2,p_2)
            = c_1(r_1,p_1) \cap c'_2(r_1,p_1)
            = c(r_1,p_1)$.
            We have $c$, $p_1$ and $r_1$ showing that
            $q$ is in the minimal oblivious strategy for $\DEL_1 \combi \DEL_2$.
        \item Let $q$ be such that $\exists c \in \DEL_1 \combi \DEL_2,
            \exists p \in \Pi, \exists r > 0: c(r,p)= \obliv(q)$.
            By Definition~\ref{defOpsPred} of operations on predicates,
            and specifically of combination, $\exists c_1 \in \DEL_1,
            \exists c_2 \in \DEL_2: c = c_1 \combi c_2$.
            We take $q_1$ such that $q_1.round = r,
            \obliv(q_1)= c_1(r,p)$ and
            $\forall r' \neq r: q_1(r') = q(r')$; we also
            take $q_2$ such that $q_2.round = r,
            \obliv(q_2) = c_2(r,p)$ and
            $\forall r' \neq r: q_2(r') = q(r')$.
            Then $q = q_1 \combi q_2$. By Definition~\ref{defMinObliv} of
            minimal oblivious strategies, $f_1$ and $f_2$ being respectively
            the minimal oblivious strategies of $\DEL_1$ and $\DEL_2$ imply that
            $q_1 \in f_1$ and $q_2 \in f_2$.
            We conclude that $q \in f_1 \combi f_2$.
            \qedhere
    \end{itemize}
\end{proof}

\noindent
This subsection shows that as long as predicates are built from simple
building blocks with known minimal oblivious strategies, the minimal oblivious strategy
of the result can be explicitly constructed.

\subsection{Computing Heard-Of Predicates of Oblivious Strategies}%
\label{sec:oblivious:heardof}

Once the minimal oblivious strategy has been computed, the next step
is to extract the heard-of predicate for this strategy:
the smallest predicate (all its collections are contained in the other predicates)
generated by an oblivious strategy for this delivered predicate.
This ends up being simple: it is the product of all delivered sets.

\begin{defi}[Heard-Of Product]%
  \label{HOProd}
  Let $S \subseteq \mathcal{P}(\Pi)$.
  The \textbf{heard-of product generated by S},
  $\HOProd(S) \triangleq \{h, \textit{a heard-of collection} \mid \forall p \in \Pi,
  \forall r > 0: h(r,p) \in S \}$.
\end{defi}

Here is the intuition: defining a heard-of collection requires,
for each round and each process, the corresponding heard-of set.
A heard-of product is then the set of all collections
that have heard-of sets from the set given as argument.
So the total heard-of predicate (containing only the total collection) is the
heard-of product of the set $\Pi$. And $\HO_F$ is the heard-of
product of all subsets of $\Pi$ of size $\geq n-F$.

The following lemma links the $\Nexts_f$
of some valid oblivious strategy and the heard-of predicate for
this strategy: the predicate is the heard-of product of the $\Nexts_f$.

\begin{lem}[Heard-Of Predicate of an Oblivious Strategy]%
  \label{carefrHO}
  Let $\DEL$ be a delivered predicate containing $c_{total}$ and
  let $f$ be a valid oblivious strategy for \DEL\@. Then
  $\HO_f(\DEL) = \HOProd(\Nexts_f)$.
\end{lem}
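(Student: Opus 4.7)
The plan is to prove the two inclusions directly, using executions of $f$ on $\DEL$ for the forward direction and the canonical execution construction for the reverse direction.

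For $(\subseteq)$, I would take $ho \in \HO_f(\DEL)$ and some $t \in \execs_f(\DEL)$ generating $ho$. Fix $r > 0$ and $p \in \Pi$; validity of $f$ (since $t \in \execs_f(\DEL)$) ensures that $p$ eventually changes round at round $r$, so by Definition~\ref{defHOExec} there is an index $i$ with $t[i] = next_p$, $q_p^t[i].round = r$, and $ho(r,p) = \{k \in \Pi \mid \langle r, k \rangle \in q_p^t[i].mes\} = \obliv(q_p^t[i])$. Since $t$ is an execution of $f$, the (All nexts allowed) condition of Definition~\ref{defExecStrat} gives $q_p^t[i] \in f$, so by Definition~\ref{defObliv} we have $ho(r,p) \in \Nexts_f$. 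Hence $ho \in \HOProd(\Nexts_f)$.

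For $(\supseteq)$, I would take $ho \in \HOProd(\Nexts_f)$ and consider the canonical execution $can(ho)$. Since $c_{total} \in \DEL$, Lemma~\ref{trivial} gives $can(ho) \in \execs(c_{total}) \subseteq \execs(\DEL)$. The key step is to show $can(ho) \in \execs_f(\DEL)$, i.e.\ that $can(ho)$ is an execution of $f$. Fairness (Definition~\ref{defExecStrat}) is immediate because, by Definition~\ref{defCanExec}, every $changes_r$ contains exactly one $next_k$ per process, so each process changes round infinitely often. For (All nexts allowed), consider any index $i$ with $can(ho)[i] = next_p$: by Definition~\ref{defCanExec} this sits inside some $changes_r$, meaning $q_p^{can(ho)}[i].round = r$, and the deliveries from $dels_1, \dots, dels_r$ contribute to $j = p$ exactly the messages from $ho(r,p)$ tagged with round $r$ (deliveries tagged with earlier rounds do not affect $\obliv$, which only looks at round $r$). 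Therefore $\obliv(q_p^{can(ho)}[i]) = ho(r,p) \in \Nexts_f$, and by the family definition (Definition~\ref{defObliv}) applied to the oblivious strategy $f$ this gives $q_p^{can(ho)}[i] \in f$. So $can(ho) \in \execs_f(\DEL)$, and Lemma~\ref{canHO} yields $h_{can(ho)} = ho \in \HO_f(\DEL)$.

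The main delicate point is the second direction, specifically verifying that at the moment of a $next_p$ in $can(ho)$ the received set of the current round $r$ is exactly $ho(r,p)$ and nothing more. The risk would be that messages delivered in $dels_r$ that are tagged with round $r-1$ (the ``late'' deliveries built into Definition~\ref{defCanExec}) inflate $\obliv$, but by definition $\obliv$ only collects messages tagged with the current round, so these late deliveries are invisible to it. Once this is noted, everything else reduces to bookkeeping against Definitions~\ref{defExecStrat}, \ref{defCanExec} and \ref{defObliv}, plus the two canonical-execution lemmas already proved.
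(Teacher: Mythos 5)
Your proof is correct and follows essentially the same route as the paper's: the forward inclusion via the (All nexts allowed) condition of Definition~\ref{defExecStrat} together with Definition~\ref{defObliv}, and the reverse inclusion by showing the canonical execution of $ho$ is an execution of $f$ on $\DEL$ (using $c_{total} \in \DEL$, Lemma~\ref{trivial}, and Lemma~\ref{canHO}). Your explicit check that the late deliveries tagged with round $r-1$ in $dels_r$ are invisible to $\obliv$ is a detail the paper leaves implicit, but the argument is the same.
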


\begin{proof}\hfill
  \begin{itemize}
    \item $(\subseteq)$ To prove this first direction, we show
      that the heard-of sets of any collection in
      $\HO_f(\DEL)$ are in $\Nexts_f$. This then entails
      that $\HO_f(\DEL) \subseteq \HOProd(\Nexts_f)$.
      By Definition~\ref{defHOExec} of the heard-of
      collection of an execution, every heard-of set contains the
      set of messages from the current round that was already received
      at the $next_p$ transition where the process $p$ changed round. By
      Definition~\ref{defExecStrat} of the executions of a strategy,
      such a $next_p$ transition can only happen if the local state
      of the process $p$ is in $f$. By Definition~\ref{defObliv} of
      oblivious strategies, $f$ contains exactly the states such that
      the messages received from the current round form a set
      in $\Nexts_f$.
      Therefore, the heard-of set of any collection generated by
      $f$ on a collection of $\DEL$ are necessarily
      in $\Nexts_f$.
    \item $(\supseteq)$ Let \textit{ho} be a heard-of collection
      such that $\forall r > 0, \forall j \in \Pi: \textit{ho}(r,j) \in
      \Nexts_f$. Let $t$ be the canonical execution of $ho$. It is
      an execution by Lemma~\ref{trivial}. By Definition~\ref{defCanExec}, at each round,
      processes receive a set of messages in $\Nexts_f$. By Definition~\ref{defObliv}
      of oblivious strategies, this entails that the local states are in $f$ when the processes change rounds. And so, by Definition~\ref{defExecStrat}, it is an execution of $f$.
      Hence $t$ is an execution of $f$ on \DEL, because $DEL$ contains the total predicate. Since $t = can(ho)$,
      Lemma~\ref{canHO} implies that $h_t = ho$.
      We conclude that $\textit{ho} \in \HO_f(\DEL)$.
      \qedhere
    \end{itemize}
\end{proof}

\noindent
Thanks to this characterization, the heard-of predicate
generated by the minimal strategies for the operations is computed in terms
of the heard-of predicate generated by the original minimal strategies.

\begin{thm}[Heard-Of Predicate of Minimal Oblivious Strategies]%
  \label{oblivOpsHO}
  Let $\DEL, \DEL_1, \DEL_2$ be delivered predicates containing $c_{total}$.
  Let $f, f_1, f_2$ be their respective minimal oblivious strategies.
  Then:
  \begin{itemize}
    \item $\HO_{f_1 \cup f_2}(\DEL_1 \cup \DEL_2)
      = \HO_{f_1 \cup f_2}(\DEL_1 \leadsto \DEL_2)
      = \HOProd(\Nexts_{f_1} \cup \Nexts_{f_2})$.
    \item If $\DEL_1$ or $\DEL_2$ are round symmetric, then:\\
      $\HO_{f_1 \combi f_2}(\DEL_1 \combi \DEL_2)=
      \HOProd(\{ n_1 \cap n_2 \mid
      n_1 \in \Nexts_{f_1} \land n_2 \in \Nexts_{f_2}\})$.
    \item $\HO_f(\DEL^{\omega}) = \HO_f(\DEL)$.
  \end{itemize}
\end{thm}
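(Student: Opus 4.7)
The plan is to combine two tools already established in the excerpt: the characterization of heard-of predicates of oblivious strategies (Lemma~\ref{carefrHO}, which says $\HO_f(\DEL) = \HOProd(\Nexts_f)$ whenever $f$ is a valid oblivious strategy for a $\DEL$ containing $c_{total}$), and the identification of minimal oblivious strategies for the composed predicates (Theorems~\ref{unionObliv}, \ref{combiObliv}, \ref{repetObliv}). Concretely, for each operation I would take the strategy given by the relevant composition theorem, verify it is valid and oblivious, then use Lemma~\ref{carefrHO} to rewrite its heard-of predicate as $\HOProd(\Nexts_{\cdot})$, and finally reduce $\Nexts$ of the composed strategy to the expression on the right-hand side of the theorem.

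First I would check the side condition of Lemma~\ref{carefrHO} in each case: $c_{total} \in \DEL_1 \cup \DEL_2$ trivially; $c_{total} \in \DEL_1 \leadsto \DEL_2$ by taking $r=0$ and $c_2 = c_{total} \in \DEL_2$; $c_{total} \in \DEL_1 \combi \DEL_2$ because $c_{total} \combi c_{total} = c_{total}$; and $c_{total} \in \DEL^\omega$ by choosing every $c_i = c_{total}$. Next, by Theorem~\ref{unionObliv} the strategy $f_1 \cup f_2$ is the minimal oblivious strategy of both $\DEL_1 \cup \DEL_2$ and $\DEL_1 \leadsto \DEL_2$; by Theorem~\ref{combiObliv} the strategy $f_1 \combi f_2$ is the minimal oblivious strategy of $\DEL_1 \combi \DEL_2$ (under the round symmetry hypothesis); and by Theorem~\ref{repetObliv} the strategy $f$ is the minimal oblivious strategy of $\DEL^\omega$. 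Each of these is valid by Lemma~\ref{domMinObliv}, so Lemma~\ref{carefrHO} applies.

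The remaining work is the purely set-theoretic identification of $\Nexts$ of the composed strategies from Definition~\ref{defOpsStrat} and Definition~\ref{defObliv}. For union, $q \in f_1 \cup f_2$ iff $q \in f_1$ or $q \in f_2$, so $\Nexts_{f_1 \cup f_2} = \Nexts_{f_1} \cup \Nexts_{f_2}$, giving the first equality chain. For combination, any $q = q_1 \combi q_2$ with $q_1, q_2$ at the same round $r$ satisfies $\obliv(q) = \{k \mid k \in q_1(r) \cap q_2(r)\} = \obliv(q_1) \cap \obliv(q_2)$ by unfolding the definition of $\combi$ on states; conversely every pair of sets in $\Nexts_{f_1}$ and $\Nexts_{f_2}$ is realised by some such pair of local states at a common round. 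Hence $\Nexts_{f_1 \combi f_2} = \{n_1 \cap n_2 \mid n_1 \in \Nexts_{f_1}, n_2 \in \Nexts_{f_2}\}$. For repetition the identification is trivial since the strategy is literally $f$, so both sides of the equality equal $\HOProd(\Nexts_f)$ by Lemma~\ref{carefrHO} applied to $\DEL$ and to $\DEL^\omega$.

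The one subtle step will be the combination case: I need to be careful that the round-matching constraint built into the definition of $q_1 \combi q_2$ does not shrink $\Nexts_{f_1 \combi f_2}$ below the desired set. This is where obliviousness is crucial: because $f_1$ and $f_2$ are oblivious, for any $q_1' \in f_1$ and $q_2' \in f_2$ and any target round $r$, one can find representatives $q_1 \approx_{\obliv} q_1'$ and $q_2 \approx_{\obliv} q_2'$ with $q_1.round = q_2.round = r$, both of which lie in their respective strategies by Definition~\ref{defFam}, so every intersection $\obliv(q_1') \cap \obliv(q_2')$ is realised in $\Nexts_{f_1 \combi f_2}$. Apart from this, the argument is a direct chaining of definitions and the three already proved composition theorems.
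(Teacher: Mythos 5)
Your proposal is correct and follows essentially the same route as the paper's proof: check that each composed predicate contains $c_{total}$, invoke Theorems~\ref{unionObliv}, \ref{combiObliv}, \ref{repetObliv} to identify the composed strategies as minimal oblivious strategies, establish their validity, apply Lemma~\ref{carefrHO}, and compute $\Nexts$ of the composed strategies set-theoretically, including the same round-realignment argument (via obliviousness and Definition~\ref{defFam}) for the combination case. The only cosmetic difference is that you cite Lemma~\ref{domMinObliv} for validity where the paper cites Lemma~\ref{oblivValid} directly, which is equally sound.
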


\begin{proof}
    Obviously, we want to apply Lemma~\ref{carefrHO}. Then
    we first need to show that the delivered predicates contain $c_{total}$.
    \begin{itemize}
        \item By hypothesis, $\DEL_1$ and $\DEL_2$ contain $c_{total}$.
            Then $\DEL_1 \cup \DEL_2$ trivially contains it too by
            Definition~\ref{defOpsPred} of operations on predicates.
        \item By hypothesis, $\DEL_1$ and $\DEL_2$ contain $c_{total}$.
            Then $\DEL_1 \combi \DEL_2$ contains
            $c_{total} \combi c_{total} = c_{total}$ by
            Definition~\ref{defOpsPred} of operations on predicates.
        \item By hypothesis, $\DEL_1$ and $\DEL_2$ contain $c_{total}$.
            Then $\DEL_1 \leadsto \DEL_2 \supseteq \DEL_2$
            contains it too by Definition~\ref{defOpsPred}
            of operations on predicates.
        \item By hypothesis, $\DEL$ contains $c_{total}$.
            We can recreate $c_{total}$ by taking all $c_i = c_{total}$
            and whichever $r_i$. Thus $\DEL^{\omega}$ contains
            $c_{total}$ by Definition~\ref{defOpsPred} of operations on predicates.
    \end{itemize}

    \noindent
    Next, the strategies $f_1 \cup f_2$, $f_1 \combi f_2$ and $f$ are the
    respective minimal oblivious strategies by Theorem~\ref{unionObliv},
    Theorem~\ref{combiObliv} and Theorem~\ref{repetObliv}. They are also
    valid by Theorem~\ref{oblivValid}.

    Lastly, we need to show that the $\Nexts_f$ for the strategies
    corresponds to the generating sets in the theorem.
    \begin{itemize}
        \item We show $\Nexts_{f_1 \cup f_2} =
            \Nexts_{f_1} \cup \Nexts_{f_2}$,
            and thus that
            \[\HOProd(\Nexts_{f_1 \cup f_2}) =
            \HOProd(\Nexts_{f_1} \cup \Nexts_{f_2}).\]
            \begin{itemize}
                \item $(\subseteq)$ Let $n \in \Nexts_{f_1 \cup f_2}$. Then $\exists
                    q \in f_1 \cup f_2: \obliv(q) = n$. By Definition~\ref{defOpsPred}
                    of operations on predicates, and specifically union,
                    $q \in f_1$ or $q \in f_2$. We fix $q \in f_1$
                    (the case $q \in f_2$ is symmetric).
                    Then $n \in \Nexts_{f_1}$ by Definition~\ref{defObliv} of
                    oblivious strategies.
                    We conclude that
                    $n \in \Nexts_{f_1} \cup \Nexts_{f_2}$.
                \item $(\supseteq)$ Let $n \in \Nexts_{f_1} \cup \Nexts_{f_2}$.
                    We fix $n \in \Nexts_{f_1}$
                    (as always, the other case is symmetric).
                    Then $\exists q \in f_1: \obliv(q) = n$. As
                    $q \in f_1$ implies $q \in f_1 \cup f_2$,
                    we conclude that $n \in \Nexts_{f_1 \cup f_2}$
                    by Definition~\ref{defObliv} of oblivious strategies.
            \end{itemize}
        \item We show $\Nexts_{f_1 \combi f_2} = \{ n_1 \cap n_2 \mid
            n_1 \in \Nexts_{f_1} \land n_2 \in \Nexts_{f_2}\}$,
            and thus that
            $\HOProd(\Nexts_{f_1 \combi f_2}) =
            \HOProd(\{ n_1 \cap n_2 \mid
            n_1 \in \Nexts_{f_1} \land n_2 \in \Nexts_{f_2}\})$.
            \begin{itemize}
                \item $(\subseteq)$ Let $n \in \Nexts_{f_1 \combi f_2}$. Then
                    $\exists q \in f_1 \combi f_2: \obliv(q) = n$.
                    By Definition~\ref{defOpsPred} of operations on predicates,
                    and specifically of combination, $\exists q_1 \in f_1,
                    \exists q_2 \in f_2: q_1.round = q_2.round= q.round
                    \land q = q_1 \combi q_2$. This means
                    $n = \obliv(q) = \obliv(q_1) \cap \obliv(q_2)$.
                    We conclude that $n \in \{ n_1 \cap n_2 \mid
                    n_1 \in \Nexts_{f_1} \land n_2 \in \Nexts_{f_2}\}$
                    by Definition~\ref{defObliv} of oblivious strategies.
                \item $(\supseteq)$ Let $n \in \{ n_1 \cap n_2 \mid
                    n_1 \in \Nexts_{f_1} \land n_2 \in \Nexts_{f_2}\}$.
                    Then $\exists n_1 \in \Nexts_{f_1},
                    \exists n_2 \in \Nexts_{f_2}: n = n_1 \cap n_2$.
                    By Definition~\ref{defObliv} of oblivious strategies,
                    and because $f_1$ and $f_2$ are oblivious strategies, we
                    can find $q_1 \in f_1$ such that $\obliv(q_1) = n_1$,
                    $q_2 \in f_2$ such that $\obliv(q_2) = n_2$, and
                    $q_1.round = q_2.round$.
                    Then $q = q_1 \combi q_2$ is a state
                    of $f_1 \combi f_2$. We have $\obliv(q) =
                    n_1 \cap n_2 = n$.
                    We conclude that $n \in \Nexts_{f_1 \combi f_2}$
                    by Definition~\ref{defObliv} of oblivious strategies.
            \end{itemize}
        \item Trivially, $\Nexts_f = \Nexts_f$.
        \qedhere
    \end{itemize}
\end{proof}

\subsection{Domination by Oblivious Strategies}%
\label{sec:oblivious:domination}

Finally, the value of oblivious strategies depends on which delivered
predicates have a dominating oblivious strategy. $\DEL^{crash}_F$ does, with the strategy $f_{n-F}$.
We consider delivered predicates
that satisfy the so-called common round property.
This condition captures the fact that given any delivered set $D$,
one can build, for any $r > 0$, a delivered collection where processes
receive all the messages up to round $r$, and then they share $D$ as their
delivered set in round $r$. As a limit case, the predicate also contains
the total collection. This common round property is preserved by the composition operators, which allows to derive complex dominating oblivious strategies from simple ones.

\begin{defi}[Common Round Property]%
  \label{defCommonRound}
  Let $\DEL$ be a delivered predicate.
  $\DEL$ has the \textbf{common round property} $\triangleq$
  \begin{itemize}
    \item \textbf{(Total collection)}
      $\DEL$ contains the total collection $c_{total}$.
    \item \textbf{(Common round)} $\forall c \in \DEL,
      \forall r > 0, \forall j \in \Pi,
      \forall r' > 0,
      \exists c' \in \DEL, \forall p \in \Pi:
      (\forall r'' < r': c'(r'',p) = \Pi \land c'(r',p) = c(r,j))$.
  \end{itemize}
\end{defi}

\noindent
What the common round property captures is what makes $\DEL^{crash}_F$
be dominated by an oblivious strategy: if one process $j$ might block
at round $r$ even after receiving all messages from round $r$
in some collection $c$ of $\DEL$, and all messages from rounds $< r$,
then there is a collection and an execution where all processes
block in the same way. The collection ensures that the delivered collection
gives each process the same delivered sets ($\Pi$) for rounds $< r$,
and $c(r,j)$ at round $r$. The execution is the standard execution
of this collection, that puts every process at round $r$ in the same
blocking state as $j$, and so a deadlock occurs.
The conclusion is that any valid strategy should allow
to change round when all messages from previous rounds are received,
and the messages received for the current round form a delivered set from
a collection of \DEL\@.
Applying this reasoning to the canonical executions of heard-of
collections from $\HO_{f_{\min}}(\DEL)$ yields that the canonical
executions are executions of any valid strategy for $\DEL$ (not
only oblivious ones), and thus that for any valid strategy $f$ for
$\DEL$, $\HO_{f_{\min}}(\DEL) \subseteq \HO_f(\DEL)$. That is to say,
$\DEL$ is dominated by an oblivious strategy.

\begin{thm}[Sufficient Condition of Oblivious Domination]%
  \label{suffCfreeDom}
  Let $\DEL$ be a delivered predicate satisfying the common round property.
  Then, there is an oblivious strategy that dominates \DEL\@.
\end{thm}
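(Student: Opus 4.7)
The plan is to show that the minimal oblivious strategy $f_{\min}$ of Definition~\ref{defMinObliv} is in fact a dominating strategy for $\DEL$ in the full sense of Definition~\ref{defDom}, not merely among oblivious strategies. Validity of $f_{\min}$ is already given by Lemma~\ref{oblivValid} (the defining inclusion is an equality here), and Lemma~\ref{domMinObliv} handles oblivious competitors, so what remains is to fix an arbitrary valid strategy $f$ for $\DEL$ and prove $\HO_{f_{\min}}(\DEL) \subseteq \HO_f(\DEL)$.

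The bridge will be canonical executions. Fix any $ho \in \HO_{f_{\min}}(\DEL)$. Since the common round property supplies $c_{total} \in \DEL$, Lemma~\ref{carefrHO} yields $ho \in \HOProd(\Nexts_{f_{\min}})$; hence for each $r>0$ and $j \in \Pi$, the heard-of set $ho(r,j)$ coincides (by Definition~\ref{defMinObliv}) with some delivered set $c(r'',p)$ of a $c \in \DEL$. The execution $can(ho)$ belongs to $\execs(\DEL)$ by Lemma~\ref{trivial} and generates $ho$ by Lemma~\ref{canHO}, so it suffices to show $can(ho) \in \execs_f(\DEL)$.

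Assume for contradiction that $can(ho) \notin \execs_f(\DEL)$. Since $can(ho)$ contains infinitely many $next_p$ events for every $p$, the fairness clause of Definition~\ref{defExecStrat} is satisfied; the only possible failure is a $next_j$ at a state outside $f$. Let $r$ be the smallest round where such a failure occurs, and let $q$ be the corresponding state of $j$. By Definition~\ref{defCanExec}, $q$ holds all messages from rounds $1,\dots,r-1$ together with exactly the messages from $ho(r,j)$ tagged with round $r$. Now apply the common round property to the collection witnessing $ho(r,j) \in \Nexts_{f_{\min}}$, with target round $r$, to obtain $c' \in \DEL$ with $c'(r_0,k)=\Pi$ for every $r_0<r$ and $k \in \Pi$, and $c'(r,k)=ho(r,j)$ for every $k$. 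The contradiction will come from proving $t' = st(f,c')$ invalid, which together with Lemma~\ref{stCorrect} refutes the validity of $f$.

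The main obstacle lies in this last step: since $f$ is not assumed to be oblivious, no monotonicity in the local state is available, and one must argue structurally on $st(f,c')$. The plan is to observe that, at any iteration $r_0 < r$ of $st(f,c')$ where all processes have reached round $r_0$, the $dels_{r_0}$ block delivers to each process every $r_0$-tagged message; a process that is not enabled by $f$ at $changes_{r_0}$ receives no further round-$r_0$ messages afterwards (no new senders ever emit for round $r_0$) and hence is blocked at round $r_0$ forever, which would already invalidate $t'$. So under the hypothesis that $f$ is valid, all processes must advance together through rounds $1,\dots,r-1$. After $dels_r$ in iteration $r$, each process then holds exactly the state $q$ (same round, same messages, by the symmetric construction of $c'$). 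Since $q \notin f$, the set $changes_r$ reduces to $\{stop\}$ and every process is blocked at round $r$, so $t'$ is invalid by Definition~\ref{defVal}, the desired contradiction. Hence $can(ho)$ is indeed an execution of $f$, $ho \in \HO_f(\DEL)$, and $f_{\min}$ dominates $\DEL$.
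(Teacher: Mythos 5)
Your overall route is exactly the paper's: take the minimal oblivious strategy $f_{\min}$, pick $ho \in \HO_{f_{\min}}(\DEL)$, use Lemma~\ref{carefrHO} and Definition~\ref{defMinObliv} to identify $ho(r,j)$ with a delivered set of $\DEL$, build $c'$ via the common round property, and refute the validity of an arbitrary valid strategy $f$ by showing that the standard execution $st(f,c')$ stalls. Most of this is sound, but the justification you give for a block occurring strictly before round $r$ does not hold as stated. You argue that a process $k$ that is not enabled at $changes_{r_0}$ (for $r_0 < r$) ``receives no further round-$r_0$ messages afterwards and hence is blocked at round $r_0$ forever.'' For an arbitrary strategy this inference is invalid: $f$ may depend on messages tagged with rounds larger than the current one (precisely the kind of strategy studied in Section~\ref{sec:chap2future}, e.g.\ $f_{loss}$), so if the other processes advanced beyond round $r_0$, their messages for later rounds would be delivered to $k$ in subsequent $dels$ blocks and could bring $k$'s state into $f$. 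The absence of new current-round messages alone does not freeze $k$'s state, so it does not by itself yield an invalid execution.

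The correct argument --- and what the paper does --- is the same symmetry observation you invoke at round $r$, applied already to the earlier rounds: since $c'(r_0,\cdot) = \Pi$ for every $r_0 < r$ and all processes advanced together up to that iteration, every process has exactly the same local state as $k$ at $changes_{r_0}$. Hence none of them is enabled, $changes_{r_0}$ reduces to $\{stop\}$, and because no process changes round there are no further deliveries at all; states never change again, and only then is $t'$ invalid by Definition~\ref{defVal}. With that repair (a one-line change, using an ingredient you already deploy at round $r$), your proof coincides with the paper's.
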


\begin{proof}
  Let $f_{\min}$ be the minimal oblivious strategy for \DEL\@.
  It dominates the oblivious strategies for $\DEL$ by Lemma~\ref{domMinObliv}.
  We now prove that $f_{\min}$ dominates \DEL\@. This amount to showing that
  for $f'$, a valid strategy for \DEL, we have
  $f' \prec_{\DEL} f_{\min}$, that is
  $\HO_{f_{\min}}(\DEL) \subseteq \HO_{f'}(\DEL)$.
  Let $\textit{ho} \in \HO_{f_{\min}}(\DEL)$
  and $t$ be the canonical execution of \textit{ho}. We show that $t$ is an
  execution of $f'$, which entails by Lemma~\ref{canHO} that
  $\textit{ho} \in \HO_{f'}(\DEL)$.

  By Definition~\ref{defCommonRound} of the common round property, $\DEL$
  contains $c_{total}$. By Lemma~\ref{trivial},
  $t$ is an execution of $c_{total}$, and thus an execution of \DEL\@.
  We now prove by contradiction
  it is also an execution of $f'$ on \DEL\@. \textbf{Assume} it is not.
  By Definition~\ref{defExecStrat} of the executions of a strategy,
  the problem comes either from breaking fairness or from some $next_j$ for some
  process $j$ at a point where the local state of $j$ is not in $f'$.
  Since for every $j \in \Pi: next_j$ happens
  an infinite number of times in $t$ by Definition~\ref{defCanExec} of
  a canonical execution, the only possibility left is the second one: some $next_j$
  in $t$ is done while the local state of $j$ is not in $f'$.
  There thus exists a smallest $r$ such that some process $j$ is not allowed
  by $f'$ to change round when $next_j$ is played at round $r$
  in $t$.

  Lemma~\ref{carefrHO} yields that
  $\HO_{f_{\min}}(\DEL) = \HOProd(\Nexts_{f_{\min}})$.
  By Definition~\ref{defMinObliv} of minimal oblivious strategies,
  $\Nexts_{f_{\min}} =
  \{ c(r',p) \mid c \in \DEL \land r' > 0 \land p \in \Pi\}$.
  Thus $\exists c \in \DEL, \exists r' > 0, \exists p \in \Pi:
  \textit{ho}(r,j) = c(r',p)$.
  Since $\DEL$ satisfies the common round property (Definition~\ref{defCommonRound}), it allows us to build
  $c_{block} \in \DEL$ such that
  $\forall r'' < r, \forall k \in \Pi:
  c_{block}(r'',k) = \Pi$ and
  $\forall k \in \Pi:
  c_{block}(r,k) = c(r',j) = ho(r,j)$.
  Finally, we build $t_{block} = st(f',c_{block})$ the standard execution
  of $f'$ on $c_{block}$.
  By Lemma~\ref{stCorrect}, we know $t_{block}$ is an execution of $f'$ on $c_{block}$.
  We then show that it is invalid by examining the two possibilities.
  \begin{itemize}
    \item During one of the first $r-1$ iterations of
      $t_{\textit{block}}$, there is some process that cannot change
      round. Let $r'$ be the smallest iteration where it happens,
      and $k$ be a process unable to change round at the $r'$-th
      iteration.
      By minimality of $r'$, all the processes arrive at round~$r'$,
      and by definition of $c_{\textit{block}}$ they all
      receive the same messages as $k$ before $changes_{r'}$.
      That means every process has the same local state as $k$.
      Thus, all the processes are blocked at round $r'$,
      there are no more round changes or deliveries,
      and $t_{block}$ is invalid by Definition~\ref{valid} of
      validity.
    \item For the first $r-1$ iterations, every process changes round.
      Thus everyone arrives at round~$r$. By Definition~\ref{defStdExec} of
      the standard execution, all messages from round $r$ are delivered
      before the $change_{r}$ section. The definition of
      $c_{\textit{block}}$ also ensures that every process
      received the same messages, that is all the messages from
      round $< r$ and all the messages from \textit{ho}$(r,j)$.
      These are the messages received by $j$ in $t$ at
      round $r$.
      By hypothesis, $j$ is blocked in this state in $t$.
      We thus deduce that all the processes are blocked at round $r$
      in $t_{block}$, and that $t_{block}$ is an invalid execution by
      Definition~\ref{valid} of validity.
  \end{itemize}
  Either way, we deduce that $f'$ is invalid, which is a
  \textbf{contradiction}.
  We conclude that $t$ is an execution of $f'$ on \DEL\@. Lemma~\ref{canHO}
  therefore implies that $\textit{ho} \in \HO_{f'}(\DEL)$.
  This entails that
  $\HO_{f_{\min}}(\DEL) \subseteq \HO_{f'}(\DEL)$, and thus
  that $f' \prec_{\DEL} f_{\min}$. We conclude that $f_{\min}$ dominates
  $\DEL$ by Definition~\ref{defDom}.
\end{proof}

This condition is maintained by the operations. Hence
any predicate built from ones satisfying this condition will still be
dominated by an oblivious strategy.

\begin{thm}[Domination by Oblivious for Operations]%
  \label{invarCommon}
  Let $\DEL, \DEL_1, \DEL_2$ be delivered predicates satisfying the
  common round property.
  Then $\DEL_1 \cup \DEL_2$, $\DEL_1 \combi \DEL_2$, $\DEL_1 \leadsto
  \DEL_2$, $\DEL^{\omega}$ also satisfy the common round property.
\end{thm}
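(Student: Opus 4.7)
The plan is to verify both clauses of Definition~\ref{defCommonRound} (containment of $c_{total}$ and the common-round witness) for each of the four operations. The total-collection clause is essentially free: $c_{total} \in \DEL_1 \cup \DEL_2$ whenever $c_{total}$ is in either operand; $c_{total} \combi c_{total} = c_{total}$ covers combination; $c_{total} \in \DEL_1 \leadsto \DEL_2$ follows by taking $r = 0$ in Definition~\ref{defOpsPred}, using $c_{total} \in \DEL_2$; and $\DEL^{\omega}$ contains $c_{total}$ by taking every $c_i = c_{total}$ with $r_i = i-1$. I would dispatch these as one short opening paragraph.

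The common-round clause is handled one operation at a time, and in each case the proof schema is the same: given $c$ in the compound predicate and a target location $(r,j)$, identify in which building block of $c$ the set $c(r,j)$ originates, apply the common-round property of that block to produce a ``clean prefix then $c(r,j)$'' witness $c''$, and then lift $c''$ back into the compound predicate by padding. For $\DEL_1 \cup \DEL_2$, $c$ belongs to one side (say $\DEL_1$) and the lift is trivial since the witness $c'' \in \DEL_1$ already lies in $\DEL_1 \cup \DEL_2$. For $\DEL_1 \combi \DEL_2$, we write $c = c_1 \combi c_2$ so that $c(r,j) = c_1(r,j) \cap c_2(r,j)$; apply the common-round property to each $\DEL_i$ at the same target round $r'$ to obtain $c'_i$ whose $r'$-th row is $c_i(r,j)$ and whose earlier rows are $\Pi$; then $c'_1 \combi c'_2 \in \DEL_1 \combi \DEL_2$ is the required witness since $\Pi \cap \Pi = \Pi$ and $c_1(r,j) \cap c_2(r,j) = c(r,j)$.

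For succession, write $c = c_1[1,s].c_2$ with $c_1 \in \DEL_1$, $c_2 \in \DEL_2$. Either $r \leq s$, in which case $c(r,j) = c_1(r,j)$, and applying the common-round property to $\DEL_1$ gives $c'_1$ with the right shape; then $c'_1[1,r'].c_2$ lies in $\DEL_1 \leadsto \DEL_2$ and is the witness. Or $r > s$, in which case $c(r,j) = c_2(r-s,j)$, and applying common round to $\DEL_2$ produces a $c'_2 \in \DEL_2$ with $c'_2(r',p) = c(r,j)$ and $c'_2(r'',p) = \Pi$ for $r'' < r'$; taking the $r = 0$ branch of Definition~\ref{defOpsPred} embeds $c'_2$ directly into $\DEL_1 \leadsto \DEL_2$. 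Repetition is analogous: unfold $c \in \DEL^{\omega}$ via its sequences $(c_i)$, $(r_i)$, locate the index $i$ with $r_i < r \leq r_{i+1}$ so that $c(r,j) = c_i(r-r_i,j)$, apply common round to $\DEL$ to get $c''$ with $c''(r',p) = c(r,j)$ and $c''(r'',p) = \Pi$ for $r'' < r'$, and then build the repetition witness with $c'_1 = c''$, $r'_1 = 0$, and $c'_k = c_{total}$, $r'_k = r' + k - 1$ for $k \geq 2$.

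The only place that requires a bit of care is succession, since the succession definition uses a prefix of $c_1$ rather than the whole thing, and one must check that the lifted witness really satisfies $c[r_i+1,r_{i+1}]$-style alignment in the repetition case; everything else is mechanical bookkeeping on the definitions in Definition~\ref{defOpsPred} and Definition~\ref{defCommonRound}. I expect no genuine mathematical obstacle—just the need to keep track of four operation-specific witness constructions and verify, in each, that the padded prefix is $\Pi$ and that the $r'$-th row reproduces the prescribed $c(r,j)$.
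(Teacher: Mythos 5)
Your proposal is correct and follows essentially the same route as the paper's proof: dispatch the $c_{total}$ clause directly, then for each operation locate the building block responsible for $c(r,j)$, invoke its common round property, and lift the resulting witness back into the compound predicate (identical case split for succession, identical $\combi$ of witnesses for combination). Your repetition case is in fact slightly more careful than the paper's, which simply asserts that the witness $c'_i \in \DEL$ works without explicitly exhibiting it as a member of $\DEL^{\omega}$, whereas your padding with $c_{total}$ segments makes that membership explicit.
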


\begin{proof}
  Thanks to Theorem~\ref{suffCfreeDom}, we have to show that
  the condition is maintained by the operations; the domination
  by an oblivious strategy directly follows from Theorem~\ref{suffCfreeDom}.
  The fact that $c_{total}$ is still in the results of
  the operations was already shown in the proof of Theorem~\ref{oblivOpsHO}.
  Hence we show the invariance of the common round part.
  \begin{itemize}
    \item Let $c \in \DEL_1 \cup \DEL_2$. Thus $c \in \DEL_1$
      or $c \in \DEL_2$. We fix $c \in \DEL_1$ (the other
      case is symmetric). Then for $p \in \Pi,r > 0$ and $r' > 0$,
      we get a $c' \in \DEL_1$
      satisfying the condition of Definition~\ref{defCommonRound}
      by the hypothesis that $\DEL_1$ satisfies the common round
      property.
      Since $\DEL_1 \subseteq \DEL_1 \cup \DEL_2$,
      we get $c' \in \DEL_1 \cup \DEL_2$.
      We conclude that the condition still holds
      for $\DEL_1 \cup \DEL_2$.
    \item Let $c \in \DEL_1 \combi \DEL_2$. Then
      $\exists c_1 \in \DEL_1, \exists c_2 \in \DEL_2:
      c = c_1 \combi c_2$. For $p \in \Pi,r > 0$ and $r' > 0$,
      our hypothesis on $\DEL_1$ and $\DEL_2$ ensures
      that there are $c'_1 \in \DEL_1$ satisfying the
      condition of Definition~\ref{defCommonRound} for $c_1$
      and $c_2' \in \DEL_2$ satisfying
      the condition of Definition~\ref{defCommonRound} for $c_2$.
      We argue that $c' = c_1' \combi c'_2$ satisfies
      the condition of Definition~\ref{defCommonRound} for $c$. Indeed, $\forall r'' < r',
      \forall q \in \Pi: c(r'',q) = c_1'(r'',q) \combi c_2'(r'',q)
      = \Pi$ and $\forall q \in \Pi: c(r',q) = c_1'(r',q) \combi
      c_2'(r',q) = c_1(r,p) \combi c_2(r,p) = c(r,p)$.
      We conclude that the condition of Definition~\ref{defCommonRound}
      still holds for $\DEL_1 \combi \DEL_2$.
    \item Let $c \in \DEL_1 \leadsto \DEL_2$. Since if $c \in
      \DEL_2$ the condition of Definition~\ref{defCommonRound}
      trivially holds by hypothesis,
      we study the case where succession actually happens.
      Hence $\exists c_1 \in \DEL_1, \exists c_2 \in \DEL_2, \exists
      r_{change} > 0: c = c_1[1,r_{change}].c_2$.
      For $p \in \Pi, r > 0$ and $r'>0$, we consider two cases.
      \begin{itemize}
        \item if $r \leq r_{change}$, then
          our hypothesis on $\DEL_1$ ensures
          that there is $c'_1 \in \DEL_1$ satisfying the
          condition of Definition~\ref{defCommonRound} for $c_1$.
          We argue that
          $c' = c_1'[1,r'].c_2 \in \DEL_1 \leadsto \DEL_2$
          satisfies the condition of Definition~\ref{defCommonRound} for $c$.
          Indeed, $\forall r'' < r', \forall q \in \Pi:
          c'(r'',q) = c_1'(r'',q) = \Pi$, and $\forall q \in \Pi:
          c'(r',q) = c_1(r,p) = c(r,p)$.
        \item if $r > r_{change}$, then
          our hypothesis on $\DEL_2$ ensures
          that there is $c'_2 \in \DEL_2$ satisfying
          the condition of Definition~\ref{defCommonRound}
          for $c_2$ at $p$ and $r - r_{change}$.
          That is, $c'_2[1,r'-1] = c_{total}[1,r'-1] \land
          \forall q \in \Pi: c'_2(r',q)=c_2(r-r_{change},p)$
          We argue that $c' = c'_2 \in \DEL_1 \leadsto \DEL_2$
          satisfies the condition of Definition~\ref{defCommonRound} for $c$.
          Indeed, $\forall r'' < r', \forall q \in \Pi:
          c'_2(r'',q) = \Pi$, and $\forall q \in \Pi:
          c'_2(r',q) = c_2(r-r_{change},p) = c(r,p)$.
      \end{itemize}
      We conclude that the condition of Definition~\ref{defCommonRound} still holds for
      $\DEL_1 \leadsto \DEL_2$.
    \item Let $c \in \DEL^{\omega}$. Let $(c_i)$ and
      $(r_i)$ be the collections and indices defining
      $c$. We take $p \in \Pi,r > 0$ and $r' > 0$.
      Let $i > 0$ be the integer such that
      $r \in [r_i+1, r_{i+1}]$. By hypothesis on $\DEL$,
      there is $c'_i \in \DEL$ satisfying the condition of Definition~\ref{defCommonRound}
      for $c_i$ at $p$ and $r - r_i$.
      That is, $c'_i[1,r'-1] = c_{total}[1,r'-1] \land
      \forall q \in \Pi: c'_i(r',q)=c_i(r-r_i,p)$.
      We argue that $c'_i \in \DEL$
      satisfies the condition of Definition~\ref{defCommonRound} for $c$. Indeed,
      $\forall r'' \leq r', \forall q \in \Pi$, we have:
      $c'_i(r'',q) = \Pi$ and $\forall q \in \Pi:
      c'_i(r',q) = c_i(r-r_i,p) = c(r,p)$.
      We conclude that the condition of Definition~\ref{defCommonRound} still holds for
      $\DEL^{\omega}$.
      \qedhere
  \end{itemize}
\end{proof}

  \noindent
  Therefore, as long as the initial building blocks satisfy the common round
  property, so do the results of the operations --- and the latter
  is dominated by its minimal oblivious strategy, a strategy that can be computed
  easily from the results of this section.

\section{Conservative Strategies: Looking at Present and Past Rounds}%
\label{sec:chap2conserv}

The class of considered strategies expands the class of oblivious strategies
by considering past rounds and the round number in addition to the present
round. This is a generalization of oblivious strategies that trades
simplicity for expressivity, while retaining a nice structure.

The structure of this section is similar to the previous one: it defines conservative strategies and give a necessary and sufficient condition for a conservative strategy to be valid (Section~\ref{sec:conservative:def}), presents results on the composition of conservative strategies (Section~\ref{sec:conservative:composition}) that enables to compute upper bounds on the heard-of predicates of conservative strategies (Section~\ref{sec:conservative:heardof}), and ends with a sufficient condition for conservative domination, condition preserved by composition (Section~\ref{sec:conservative:domination}).

\subsection{Definition and Expressiveness}%
\label{sec:conservative:def}

\begin{defi}[Conservative Strategy]%
  \label{defCons}
  Let $\cons$ be the function such that \[\forall q \in Q,\ \cons(q) \triangleq
  \langle q.round, \{ \langle r, k \rangle \in q.mes \mid r \le q.round\}\rangle.\]
  Let $\approx_{\cons}$ the equivalence relation defined by
  $q_1 \approx_{\cons} q_2$ if $\cons(q_1) = \cons(q_2)$.
  The family of \textbf{conservative strategies} is $\textit{family}(\approx_{\cons})$.
  We write $\Nexts^C_f \triangleq \{\cons(q) \mid q \in f\}$ for the set
  of conservative states in $f$. This uniquely defines $f$.
\end{defi}

In analogy with the case of oblivious strategies, there is an intuitive
necessary and sufficient condition for such a strategy to be valid for a given
delivered predicate.

\begin{lem}[Necessary and Sufficient Condition for Validity of a Conservative Strategy]%
  \label{consValid}
  Let $\DEL$ be a delivered predicate and $f$ be a conservative strategy.
  Then $f$ is valid for $\DEL \iff
  f \supseteq \{q \in Q \mid \exists c \in \DEL, \exists p \in \Pi,
  \forall r \leq q.round: q(r) = c(r,p) \}$.
\end{lem}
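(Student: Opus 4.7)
The plan is to mirror the proof of Lemma~\ref{oblivValid} for oblivious strategies, replacing the equivalence class determined by $\obliv$ with the one determined by $\cons$, which fixes the round number together with the messages received for every round $\leq q.round$. Structurally the two directions are the same; the only extra care needed is in tracking messages with round tags above the blocked round.

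For the forward direction, I proceed by contradiction. Suppose $f$ is valid for $\DEL$ and yet some $q_{\text{block}} \notin f$ satisfies the property, witnessed by $c \in \DEL$ and $p \in \Pi$; write $r_0 = q_{\text{block}}.round$. Let $t = st(f,c)$, which belongs to $\execs_f(c) \subseteq \execs_f(\DEL)$ by Lemma~\ref{stCorrect}. I then split exactly as in the oblivious proof. First, if some process fails to change round during one of the first $r_0-1$ iterations of $t$, its conservative projection remains fixed for all subsequent iterations (later $dels$ sections only deliver messages whose round tag exceeds its current round, and such messages are discarded by $\cons$), so by Definition~\ref{defCons} it stays outside $f$ forever. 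Second, if every process reaches round $r_0$ by $changes_{r_0-1}$, Definition~\ref{defStdExec} ensures that the sections $dels_1,\dots,dels_{r_0}$ deliver to $p$ exactly the messages in $c(r,p)$ for each $r \leq r_0$, so the local state $q$ of $p$ at $changes_{r_0}$ satisfies $q.round = r_0$ and $q(r) = c(r,p) = q_{\text{block}}(r)$ for every $r \leq r_0$. Hence $\cons(q) = \cons(q_{\text{block}})$, so $q \notin f$, and again no later delivery can alter the conservative projection of $p$. In both cases $t$ is invalid, contradicting validity of $f$.

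For the converse, I again argue by contradiction. Assume the inclusion and suppose $t \in \execs_f(\DEL)$ is invalid; pick a smallest round $r$ at which some process $j$ is blocked in $t$, and some $c \in \DEL$ with $t \in \execs(c)$. By Definition~\ref{defExecColl}, every message from $c(r',j)$ with $r' \leq r$ is eventually delivered in $t$, so from some index on the local state $q$ of $j$ satisfies $q.round = r$ and $q(r') = c(r',j)$ for all $r' \leq r$. This $q$ lies in the right-hand set of the statement and therefore in $f$ by hypothesis, so the fairness clause of Definition~\ref{defExecStrat} forces a further $next_j$, contradicting that $j$ is blocked at round $r$ forever.

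The main obstacle, compared with the oblivious case, is bookkeeping around a process that is blocked at its current round while the standard execution keeps advancing other processes: later $dels$ sections may deliver messages to it with round tags strictly greater than its blocked round. Verifying that such messages do not perturb $\cons$, and that no message tagged with a round $\leq r_0$ ever reaches $p$ after $changes_{r_0}$ (all such messages have been delivered because every sender reached round $r_0$ by $changes_{r_0-1}$ in the branch under consideration), is the one piece of care needed to port the oblivious argument to the conservative setting.
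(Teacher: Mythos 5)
Your proposal is correct and follows essentially the same route as the paper's proof: the contradiction via the standard execution $st(f,c)$ with the same two-case split for the forward direction, and the minimal-blocked-round plus fairness argument for the converse, with the extra bookkeeping (all messages tagged $\leq r_0$ already delivered before $changes_{r_0}$, later deliveries carrying only higher tags) being exactly the point the paper also relies on when asserting that the conservative projection of a blocked process never changes.
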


\begin{proof}\hfill
  \begin{itemize}
    \item $(\Rightarrow)$ Let $f$ be valid for \DEL\@.
      We show by contradiction that it satisfies the right-hand side of the
      above equivalence.
      \textbf{Assume} there is $q_{block}$ a local state such that
      $\exists c \in \DEL, \exists r > 0, \exists j \in \Pi:
      \cons(q_{block}) = \langle r, \{ \langle r',k \rangle \mid r' \leq r
      \land k \in c(r',j) \} \rangle$ and $q \notin f$.
      By Definition~\ref{defCons}, this means that for every $q$ such that
      $\cons(q) = \cons(q_{block}) =
      \langle r, \{ \langle r',k \rangle \mid r' \leq r \land k \in c(r',j) \} \rangle$,
      $q \notin f$.
      Let $t = st(f,c)$ be the standard execution of $f$ on $c$.
      This is an execution of $f$ on $t$ by Lemma~\ref{stCorrect}.
      The sought contradiction is reached by proving that $t$ is invalid for $\DEL$,
      and thus $f$ is invalid for $\DEL$ too.
      To do so, we split according to two cases: the first is the case where
      there is a blocking process before round $r$, and the other uses the
      hypothesis on the prefix of $c$ for $j$ up to round $r$.
      \begin{itemize}
          \item During one of the first $r-1$ iterations of
              $t$, there is some process which cannot change
              round. Let $r'$ be the smallest iteration of the canonical
              execution where it happens,
              and $k$ be a process unable to change round at the $r'$-th
              iteration.
              By minimality of $r'$, all the processes arrive at round $r'$ in
              $t$; by Definition~\ref{defStdExec} of the standard execution,
              all messages for $k$
              from all rounds up to $r'$ are delivered before the $change$ part
              of the iteration.
              Let $q$ the local state of $k$ at the start of
              $change_{r'}$, and let $q'$ be any local state of $k$ afterward.
              The above tells us that as long as $q'.round = q.round$,
              we have $\cons(q) = \cons( q')$ and thus $q' \notin f$.
              Therefore, $k$ can never change round while at round $r'$.
              We conclude that $t$ is invalid for $\DEL$ by Definition~\ref{valid}.
          \item For the first $r-1$ iterations, all the processes change round.
              Thus everyone arrives at round $r$ in the $r-1$-th iteration.
              By Definition~\ref{defStdExec} of the standard execution,
              all messages from rounds up to $r$ are delivered
              before the $change_r$ part of the $r$-th iteration.
              Thus $j$ is in a local state $q$
              at the $change_{r}$ part of the $r$-th iteration
              such that $\cons(q) = \langle r, \{ \langle r',k \rangle \mid
              r' \leq r \land k \in c(r',j) \} \rangle =
              \cons(q_{block})$. By hypothesis, this means $q \notin f$
              thus that $j$ cannot change round.
              Let $q'$ be any local state of $j$ afterward.
              The above tells us that as long as $q'.round = q.round$,
              we have $\cons(q) = \cons(q')$
              and thus $q' \notin f$. Therefore, $j$ can never
              change round while at round $r$.
              Here too, $t$ is invalid for $\DEL$ by Definition~\ref{valid}.
      \end{itemize}
      Either way, we reach a \textbf{contradiction} with the validity of $f$ for \DEL\@.
    \item $(\Leftarrow)$ Let $\DEL$ and $f$ be such that
      $\forall c \in \DEL,
      \langle r, \{ \langle r',k \rangle \mid r' \leq r
      \land k \in c(r',j) \} \rangle \in \Nexts^C_f$.
      We show by contradiction that $f$ is valid for \DEL\@.
      \textbf{Assume} the contrary: there is some
      $t \in \execs_f(\DEL)$ that is invalid for \DEL\@. Thus there
      are some processes blocked at a round forever in $t$. Let $r$ be the
      smallest such round, and $j$ be a process blocked at round $r$ in $t$.
      By minimality of $r$, all the processes arrive at round $r$.
      By Definition~\ref{defExecColl} of an execution of \DEL, there is a
      $c \in \DEL$ such that $t$ is an execution of $c$.
      This means by Definition~\ref{defExecColl} of an execution of a collection
      that all messages from all the delivered sets of $j$ up to round
      $r$ are eventually delivered.
      From this point on, every local state $q$ of $j$ satisfies
      $\cons(q) = \langle r, \{ \langle r',k \rangle \mid
      r' \leq r \land k \in c(r',j) \} \rangle$;
      thus we have $q \in f$ by hypothesis on $f$.
      Then the fairness condition of executions of $f$ from
      Definition~\ref{defExecStrat} imposes that $j$ does change round
      at some point.
      We conclude that $j$ is not blocked at round $r$ in $t$,
      which \textbf{contradicts} the hypothesis that $j$ is blocked forever
      at round $r$ in $t$.
      \qedhere
  \end{itemize}
\end{proof}

\noindent
The strategy satisfying exactly this condition is the minimal conservative strategy of
$\DEL$, and it is a strategy dominating all the conservative strategies for
this delivered predicate.

\begin{defi}[Minimal Conservative Strategy]%
  \label{minCons}
  Let $\DEL$ be a delivered predicate.
  Then the \textbf{minimal conservative strategy}
  for $\DEL$ is $f_{\min} \triangleq$ the conservative strategy
  such that $f =
  \{q \in Q \mid \exists c \in \DEL, \exists p \in \Pi,
  \forall r \leq q.round: q(r) = c(r,p) \}$.
\end{defi}

Intuitively, when every message from a prefix is delivered, there is
no message left from past and present; a valid
conservative strategy has to accept the state, or it would be blocked forever.

\begin{rem}[Prefix and conservative state of a prefix]
  Intuitively, a prefix of a collection~$c$
  for a process $p$ at round $r$ is the sequence of sets
  of messages received by $p$ at rounds $\leq r$ in $c$. Then
  we can define a state corresponding to this prefix by fixing its
  round at $r$ and adding to it the messages in the prefix.
  This is the conservative state of the prefix.
  The prefixes of a delivered predicate are then
  all the prefixes of all its collections.
\end{rem}

\begin{lem}[Domination of Minimal Conservative Strategy]%
  \label{minDom}
  Let $\DEL$ be a delivered predicate and $f_{\min}$
  be its minimal conservative strategy.
  Then $f_{\min}$ dominates the conservative strategies for \DEL\@.
\end{lem}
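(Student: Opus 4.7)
The plan is to mirror the argument used for oblivious strategies in Lemma~\ref{domMinObliv}, substituting Lemma~\ref{consValid} for Lemma~\ref{oblivValid} throughout. First I would apply Lemma~\ref{consValid} to $f_{\min}$ directly: by Definition~\ref{minCons}, $f_{\min}$ is exactly the conservative strategy whose state set equals $\{q \in Q \mid \exists c \in \DEL, \exists p \in \Pi, \forall r \leq q.round : q(r) = c(r,p)\}$, so it satisfies the right-hand side of Lemma~\ref{consValid} with equality, hence $f_{\min}$ is valid for $\DEL$.

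Next I would fix an arbitrary valid conservative strategy $f$ for $\DEL$ and apply Lemma~\ref{consValid} in the forward direction to get $f \supseteq f_{\min}$. The goal is then to show $\HO_{f_{\min}}(\DEL) \subseteq \HO_f(\DEL)$, which by Definition~\ref{defDom} is exactly $f \prec_{\DEL} f_{\min}$. It suffices to prove $\execs_{f_{\min}}(\DEL) \subseteq \execs_f(\DEL)$, because the heard-of collection of an execution depends only on the execution itself (Definition~\ref{defHOExec}), so inclusion of execution sets implies inclusion of generated heard-of predicates.

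To prove that inclusion, fix $t \in \execs_{f_{\min}}(\DEL)$. The condition $t \in \execs(\DEL)$ does not involve the strategy and so is preserved. For the two conditions of Definition~\ref{defExecStrat}: the "all nexts allowed" clause transfers from $f_{\min}$ to $f$ at once, since any $next_p$ at index $i$ gives $q_p^t[i] \in f_{\min} \subseteq f$.

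The fairness clause is the only step that requires genuine care, since a priori fairness for $f_{\min}$ does not entail fairness for the larger set $f$: continuous membership in $f$ is weaker than continuous membership in $f_{\min}$, so one cannot simply pass fairness through the inclusion. The resolution, which is the one delicate point of the proof, is to invoke validity of $f_{\min}$: since $f_{\min}$ is valid for $\DEL$, every $t \in \execs_{f_{\min}}(\DEL)$ is a valid execution in the sense of Definition~\ref{defVal}, meaning each process already has infinitely many $next_p$ transitions in $t$. The antecedent of the fairness implication in Definition~\ref{defExecStrat} is therefore false for every $p$, so fairness with respect to $f$ holds vacuously. This gives $t \in \execs_f(\DEL)$, completing the argument.
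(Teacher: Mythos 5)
Your proof is correct and follows essentially the same route as the paper's: validity of $f_{\min}$ and the inclusion $f_{\min} \subseteq f$ for any valid conservative $f$ both come from Lemma~\ref{consValid}, and domination is then obtained from inclusion of execution sets and Definitions~\ref{defHOExec} and~\ref{defDom}. Your explicit treatment of the fairness clause --- noting that validity of $f_{\min}$ forces infinitely many $next_p$ per process, so the fairness implication for the larger strategy $f$ holds vacuously --- correctly fills in a step the paper's proof passes over when it asserts that every execution of $f_{\min}$ is an execution of $f$.
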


\begin{proof}
  First, $f_{\min}$ is valid for $\DEL$ by application of
  Lemma~\ref{consValid}.
  Next, we take another conservative strategy $f$, valid
  for \DEL\@. Lemma~\ref{consValid} gives us that
  $f_{\min} \subseteq f$. Hence, when
  $f_{\min}$ allow a change of round, so does $f$.
  This entails by Definition~\ref{defExecStrat}
  that all the executions of $f_{\min}$ for
  $\DEL$ are also executions of $f$ for $\DEL$,
  and by Definition~\ref{defHOExec}
  that $\HO_{f_{\min}}(\DEL) \subseteq
  \HO_f(\DEL)$.
  We conclude from Definition~\ref{defDom}
  that $f_{\min}$ dominates any valid conservative strategy for \DEL\@.
\end{proof}

\subsection{Composition of Conservative Strategies}%
\label{sec:conservative:composition}

Like oblivious strategies, applying operations to minimal conservative strategies
gives the minimal conservative strategies of the predicates after the analogous operations.

\begin{thm}[Minimal Conservative Strategy for Union]%
    \label{unionCons}
    Let $\DEL_1, \DEL_2$ be two delivered predicates,
    $f_1$ and $f_2$ the minimal
    conservative strategies for, respectively, $\DEL_1$ and $\DEL_2$.
    Then $f_1 \cup f_2$ is the minimal conservative strategy for
    $\DEL_1 \cup \DEL_2$.
\end{thm}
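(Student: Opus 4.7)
The plan is to mirror the proof of Theorem~\ref{unionObliv}, adapted to the conservative setting. By Definition~\ref{minCons}, it suffices to establish the set equality
\[ f_1 \cup f_2 = \{q \in Q \mid \exists c \in \DEL_1 \cup \DEL_2, \exists p \in \Pi, \forall r \leq q.round: q(r) = c(r,p) \}, \]
since membership in the right-hand side characterizes the minimal conservative strategy for $\DEL_1 \cup \DEL_2$. I would prove the two inclusions separately, each of them being essentially immediate from Definition~\ref{minCons} applied to $\DEL_1$ and $\DEL_2$ together with the definition of $\cup$ on predicates in Definition~\ref{defOpsPred}.

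For $(\subseteq)$, I take $q \in f_1 \cup f_2$ and fix the case $q \in f_1$ (the case $q \in f_2$ is symmetric). Since $f_1$ is the minimal conservative strategy for $\DEL_1$, Definition~\ref{minCons} yields some $c \in \DEL_1$ and $p \in \Pi$ with $q(r) = c(r,p)$ for all $r \leq q.round$. As $\DEL_1 \subseteq \DEL_1 \cup \DEL_2$ by Definition~\ref{defOpsPred}, this same $c$, $p$ witness that $q$ lies in the right-hand set. For $(\supseteq)$, I take $q$ in the right-hand set, with witnesses $c \in \DEL_1 \cup \DEL_2$, $p \in \Pi$. By Definition~\ref{defOpsPred}, either $c \in \DEL_1$ or $c \in \DEL_2$; fixing the first case (the other being symmetric), the witnesses $c, p$ show via Definition~\ref{minCons} that $q \in f_1$, hence $q \in f_1 \cup f_2$.

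There is no real obstacle here: the proof is structurally the same as the oblivious union result, with the conservative characterization (looking at all messages from rounds $\leq q.round$) replacing the oblivious one (looking only at round $q.round$). The only point to be slightly careful about is that the right-hand condition quantifies over \emph{all} rounds up to $q.round$, so the witness $c$ transported between $\DEL_i$ and $\DEL_1 \cup \DEL_2$ is literally the same collection, with no need to splice or prefix-modify it as one must do, for instance, in the succession case. I would close by invoking Lemma~\ref{consValid} (implicitly, via Definition~\ref{minCons}) to conclude that $f_1 \cup f_2$ is indeed the minimal conservative strategy for $\DEL_1 \cup \DEL_2$.
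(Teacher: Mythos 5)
Your proposal is correct and matches the paper's proof essentially verbatim: both reduce the claim to the set equality with the characterization in Definition~\ref{minCons} and prove the two inclusions by case analysis on which of $\DEL_1$, $\DEL_2$ (respectively $f_1$, $f_2$) the witness lies in, using $\DEL_i \subseteq \DEL_1 \cup \DEL_2$. No differences worth noting.
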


\begin{proof}
    We only have to show that $f_1 \cup f_2$ is equal to
    Definition~\ref{minCons}.
    \begin{itemize}
        \item $(\supseteq)$ Let $q$ be a state such that
            $\exists c \in \DEL_1 \cup \DEL_2, \exists p \in \Pi$ such
            that $\forall r \leq q.round: q(r) = c(r,p)$.
            If $c \in \DEL_1$, then $q \in f_1$,
            by Definition~\ref{minCons} of the minimal conservative strategy
            because $f_1$ is the minimal conservative strategy
            for $\DEL_1$, and by application of Lemma~\ref{consValid}.
            Thus $q \in f_1 \cup f_2$.
            If $c \in \DEL_2$, the same reasoning applies with
            $f_2$ in place of $f_1$.
            We conclude that $q \in f_1 \cup f_2$.
        \item $(\subseteq)$ Let $q \in f_1 \cup f_2$.
            This means that $q \in f_1 \lor q \in f_2$. The case where
            it is in both can be reduced to any of the two.
            If $q \in f_1$, then by Definition~\ref{minCons} of the minimal
            conservative strategy and by minimality of $f_1$,
            $\exists c_1 \in \DEL_1, \exists p_1 \in \Pi$ such that
            $\forall r \leq q.round: q(r) = c_1(r,p_1)$.
            $\DEL_1 \subseteq
            \DEL_1 \cup \DEL_2$, thus $c_1 \in \DEL_1 \cup \DEL_2$.
            We have found the $c$ and $p$ necessary to show $q$ is
            in the minimal conservative strategy for $\DEL_1 \cup \DEL_2$.
            If $q \in f_2$, the reasoning is similar to
            the previous case, replacing $f_1$ by $f_2$
            and $\DEL_1$ by $\DEL_2$.
            \qedhere
    \end{itemize}
\end{proof}

\noindent
For the other three operations, slightly more structure is needed on
the predicates. More precisely, they have to be independent of
the processes. We require that any prefix of a process $k$ in a collection
of the predicate is also the prefix of any other process $j$
in a possibly different collection of the same \DEL\@. Hence
the behaviors (fault, crashes, loss) are not targeting specific
processes.
This restriction fits the intuition behind many common fault models.

\begin{defi}[Prefix Symmetric \DEL]%
  \label{defPrefixSym}
    Let $\DEL$ be a delivered predicate. $\DEL$ is \textbf{prefix symmetric} if
    $\forall c \in \DEL, \forall k \in \Pi, \forall r > 0,
    \forall j \in \Pi, \exists c' \in \DEL, \forall r' \leq r:
    c'(r',k) = c(r',j)$.
\end{defi}

This differs from the previous round symmetric $\DEL$, in that here we focus on
prefixes, while the other focused on rounds. Notice that none implies the other:
round symmetry says nothing about the rest of the prefix, and prefix symmetry
says nothing about the delivered sets when rounds are different.
Assuming prefix symmetry, the conservation of the minimal conservative strategy
by combination, succession and repetition follows.

\begin{thm}[Minimal Conservative Strategy for Combination]%
    \label{combiCons}
    Let $\DEL_1, \DEL_2$ be two prefix symmetric delivered predicates,
    $f_1$ and $f_2$ the minimal
    conservative strategies for, respectively, $\DEL_1$ and $\DEL_2$.
    Then $f_1 \combi f_2$ is the minimal conservative strategy for
    $\DEL_1 \combi \DEL_2$.
\end{thm}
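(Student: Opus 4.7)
The plan is to mirror the structure of the proof of Theorem~\ref{combiObliv}: I will prove that $f_1 \combi f_2$ coincides with the set defining the minimal conservative strategy for $\DEL_1 \combi \DEL_2$ in Definition~\ref{minCons}, from which the conclusion follows immediately. The proof splits into two inclusions, and prefix symmetry is used only in one of them.

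For the $\supseteq$ direction, suppose $q$ satisfies $\exists c \in \DEL_1 \combi \DEL_2, \exists p \in \Pi: \forall r \leq q.round,\ q(r) = c(r,p)$. By Definition~\ref{defOpsPred}, write $c = c_1 \combi c_2$ with $c_i \in \DEL_i$. I would construct two states $q_1, q_2$ both at round $q.round$ by setting $q_i(r) = c_i(r,p)$ for $r \leq q.round$ and $q_i(r) = q(r)$ for $r > q.round$. Minimality of $f_i$ (Definition~\ref{minCons}), with witness $(c_i,p)$, gives $q_i \in f_i$. By construction $q_1 \combi q_2 = q$: for $r \leq q.round$, intersecting yields $c_1(r,p) \cap c_2(r,p) = c(r,p) = q(r)$, and for $r > q.round$ both arguments coincide with $q(r)$. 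Hence $q \in f_1 \combi f_2$.

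For the $\subseteq$ direction, take $q = q_1 \combi q_2$ with $q_i \in f_i$ at the common round $q.round$. Minimality of each $f_i$ yields a witness $c_i \in \DEL_i$ and $p_i \in \Pi$ with $q_i(r) = c_i(r,p_i)$ for all $r \leq q.round$. The main obstacle --- and the step where the hypothesis of the theorem enters --- is that in general $p_1 \neq p_2$, so $c_1 \combi c_2$ need not witness $q$ at a single process. Prefix symmetry of $\DEL_2$ (Definition~\ref{defPrefixSym}), applied at round $q.round$ and process $p_1$, produces $c'_2 \in \DEL_2$ with $c'_2(r,p_1) = c_2(r,p_2)$ for every $r \leq q.round$. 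Then $c_1 \combi c'_2 \in \DEL_1 \combi \DEL_2$, and the pair $(c_1 \combi c'_2, p_1)$ witnesses $q$: for $r \leq q.round$, $(c_1 \combi c'_2)(r,p_1) = c_1(r,p_1) \cap c'_2(r,p_1) = q_1(r) \cap q_2(r) = q(r)$.

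Note that only one of the two prefix-symmetry assumptions is actively used; the theorem assumes both for a symmetric statement, and one could equally apply prefix symmetry of $\DEL_1$ while keeping the witness process $p_2$. The remaining bookkeeping --- the round component of the combined state, and the fact that Definition~\ref{defOpsStrat} only ranges over $r > 0$ --- is routine.
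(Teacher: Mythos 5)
Your proposal is correct and follows essentially the same route as the paper's proof: both directions use the same constructions (splitting $q$ into $q_1, q_2$ built from $c_1, c_2$ below the current round and from $q$ above it, and invoking prefix symmetry of $\DEL_2$ to relocate the second witness to process $p_1$ before combining). The only cosmetic difference is that you appeal directly to Definition~\ref{minCons} where the paper routes through validity and Lemma~\ref{consValid}, which amounts to the same thing.
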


\begin{proof}[Proof idea]
    Since $f_1$ and $f_2$ are the minimal conservative strategies
    of $\DEL_1$ and
    $\DEL_2$, $\Nexts^C_{f_1}$ is the set of the conservative states of
    prefixes of $\DEL_1$ and $\Nexts^C_{f_2}$ is the set of the conservative
    states of prefixes of $\DEL_2$.
    Also, the states accepted by $f_1 \combi f_2$ are the combination
    of the states accepted by $f_1$ and the states accepted by $f_2$.
    The prefixes of $\DEL_1 \combi \DEL_2$ are the prefixes of
    $\DEL_1$ combined with the prefixes of $\DEL_2$ for
    the same process. Thanks to prefix symmetry,
    we can take a prefix of $\DEL_2$ and any process, and find
    a collection such that the process has that prefix.
    Therefore, the combined prefixes for the same process are the
    same as the combined prefixes of $\DEL_1$ and $\DEL_2$.
    Thus $\Nexts^C_{f_1 \combi f_2}$ is the set of conservative states
    of prefixes of $\DEL_1 \combi \DEL_2$,
    and $f_1 \combi f_2$ is its minimal conservative strategy.
\end{proof}

\begin{proof}
    We only need to show that $f_1 \combi f_2$ is equal to
    Definition~\ref{minCons}.
    \begin{itemize}
        \item $(\supseteq)$ Let $q$ be a state such that
            $\exists c \in \DEL_1 \combi \DEL_2, \exists p \in \Pi$ such
            that $\forall r \leq q.round: q(r) = c(r,p)$.
            By definition of $c$, $\exists c_1 \in \DEL_1,
            \exists c_2 \in \DEL_2: c_1 \combi c_2 = c$.
            We take $q_1$ such that
            \begin{mathpar}
            q_1.round = q.round
            \and \text{and} \and
            \forall r > 0:
            \left(
            \begin{array}{ll}
                q_1(r) = c_1(r,p) &\text{if }r \leq q.round\\
                q_1(r) = q(r) &\text{otherwise}\\
            \end{array}
            \right).
            \end{mathpar}
            We also take $q_2$ such that
            \begin{mathpar}
            q_2.round = q.round
            \and \text{and} \and
            \forall r > 0:
            \left(
            \begin{array}{ll}
                q_2(r) = c_2(r,p) &\text{if }r \leq q.round\\
                q_2(r) = q(r) &\text{otherwise}\\
            \end{array}
            \right).
            \end{mathpar}

            First, $f_1$ and $f_2$ are valid for their respective
            predicates by Lemma~\ref{consValid}
            and Definition~\ref{minCons}.
            Then by validity of $f_1$ and $f_2$ and by application of
            Lemma~\ref{consValid}, we get $q_1 \in f_1$ and $q_2 \in f_2$.
            We also see that $q = q_1 \combi q_2$. Indeed, for
            $r \leq q.round$, we have $q(r) = c(r,p) = c_1(r,p) \cap c_2(r,p) =
            q_1(r) \cap q_2(r)$; and for $r > q.round$, we have
            $q(r) = q(r) \cap q(r) = q_1(r) \cap q_2 (r)$.
            Therefore $q \in \DEL_1 \combi \DEL_2$.
        \item $(\subseteq)$
            Let $q \in f_1 \combi f_2$. By Definition~\ref{defOpsStrat}
            of operations on strategies, and specifically combination,
            $\exists q_1 \in f_1, \exists q_2 \in f_2$ such that $q_1.round
            = q_2.round = q.round$ and $q = q_1 \combi q_2$.
            Since $f_1$ and $f_2$ are minimal conservative strategies of their
            respective $\DEL$s,
            by Definition~\ref{minCons}
            $\exists c_1 \in \DEL_1, \exists p_1 \in \Pi$
            such that $\forall r \leq q.round: q_1(r) = c_1(r,p_1)$; and
            $\exists c_2 \in \DEL_2, \exists p_2 \in \Pi$ such that
            $\forall r \leq q.round: q_2(r) = c_2(r,p_2)$.
            By Definition~\ref{defPrefixSym} of prefix symmetry,
            the fact that $\DEL_2$ is prefix symmetric implies that
            $\exists c'_2 \in \DEL_2$
            such that $\forall r \leq q.round: c'_2(r,p_1) = c_2(r,p_2)$.
            Hence $\forall r \leq q.round: q_2(r) = c'_2(r,p_1)$.
            By taking $c = c_1 \combi c_2$, we get
            $\forall r \leq q.round: q(r) = q_1(r) \cap q_2(r)
            = c_1(r,p_1) \cap c_2(r,p_1) = c(r,p_1)$.
            We have found $c$ and $p$ showing that $q$ is in
            the minimal conservative strategy for $\DEL_1 \combi \DEL_2$.
            \qedhere
    \end{itemize}
\end{proof}

\begin{thm}[Minimal Conservative Strategy for Succession]%
    \label{succCons}
    Let $\DEL_1, \DEL_2$ be two prefix symmetric delivered predicates,
    $f_1$ and $f_2$ the minimal
    conservative strategies for, respectively, $\DEL_1$ and $\DEL_2$.
    Then $f_1 \leadsto f_2$ is the minimal conservative strategy for
    $\DEL_1 \leadsto \DEL_2$.
\end{thm}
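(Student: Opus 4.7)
The plan is to mirror the structure used for Theorem~\ref{combiCons} and establish the set equality
\[
f_1 \leadsto f_2 \;=\; \{q \in Q \mid \exists c \in \DEL_1 \leadsto \DEL_2, \exists p \in \Pi, \forall r \leq q.round: q(r) = c(r,p) \},
\]
from which the claim follows by Definition~\ref{minCons}. For the $(\subseteq)$ direction, I would take $q \in f_1 \leadsto f_2$ and handle the three cases of Definition~\ref{defOpsStrat} separately. If $q \in f_1$, minimality of $f_1$ yields a witness $(c_1, p_1)$, and picking any $c_2 \in \DEL_2$ the collection $c_1[1,q.round].c_2 \in \DEL_1 \leadsto \DEL_2$ matches the prefix of $q$ at $p_1$. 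If $q \in f_2$, the inclusion $\DEL_2 \subseteq \DEL_1 \leadsto \DEL_2$ (taking $r=0$ in Definition~\ref{defOpsPred}) suffices. The interesting case is $q = q_1 \leadsto q_2$ with $q_1 \in f_1$ and $q_2 \in f_2$: minimality gives witnesses $(c_1, p_1)$ and $(c_2, p_2)$ that may involve distinct processes, and this is exactly where prefix symmetry of $\DEL_2$ is used, to produce $c_2' \in \DEL_2$ with $c_2'(r, p_1) = c_2(r, p_2)$ for $r \leq q_2.round$, so that $c_1[1, q_1.round].c_2'$ paired with process $p_1$ witnesses $q$.

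For the $(\supseteq)$ direction, I would take $q$ with witnesses $c \in \DEL_1 \leadsto \DEL_2$ and $p$, and unpack the succession of predicates. Either $c \in \DEL_2$, in which case minimality of $f_2$ gives $q \in f_2 \subseteq f_1 \leadsto f_2$, or $c = c_1[1, r_{change}].c_2$ for some $c_1 \in \DEL_1$, $c_2 \in \DEL_2$, and $r_{change} > 0$. In the latter case I split on the position of $q.round$ relative to $r_{change}$: if $q.round \leq r_{change}$, the prefix of $q$ matches $c_1$ at $p$ and minimality of $f_1$ puts $q$ in $f_1$; if $q.round > r_{change}$, I construct $q_1$ with round $r_{change}$ and $q_1(r) = q(r)$, and $q_2$ with round $q.round - r_{change}$ and $q_2(r) = q(r + r_{change})$. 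A routine round-arithmetic check then shows $q = q_1 \leadsto q_2$ as full local states, with $q_1 \in f_1$ and $q_2 \in f_2$ witnessed at the common process $p$, so $q \in f_1 \leadsto f_2$.

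The main challenge I expect is the bookkeeping around round shifts in the $(\supseteq)$ decomposition: defining $q_1$ and $q_2$ so that $q_1 \leadsto q_2$ reconstructs $q$ exactly, rather than merely up to conservative equivalence, by making consistent choices for the messages at rounds above $q_1.round$ and $q_2.round$. A secondary subtlety worth flagging is that prefix symmetry is invoked only in the third subcase of the $(\subseteq)$ direction, where the minimality witnesses for $q_1$ and $q_2$ may pick distinct processes that must be realigned onto a common $p_1$; all subcases of the $(\supseteq)$ direction reuse the single process $p$ from the initial witness and need no such appeal.
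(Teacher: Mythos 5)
Your proposal is correct and follows essentially the same route as the paper's proof: the same set equality with the minimal conservative strategy of Definition~\ref{minCons}, the same three-case analysis on $f_1 \leadsto f_2$ with prefix symmetry of $\DEL_2$ invoked exactly in the third subcase, and the same split on $q.round$ versus the succession point in the reverse direction. Your explicit choice $q_2(r) = q(r + r_{change})$ is a clean way to get $q = q_1 \leadsto q_2$ exactly, matching (and slightly tidying) the paper's construction of the two component states.
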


\begin{proof}[Proof idea]
    Since $f_1$ and $f_2$ are the minimal conservative strategies
    of $\DEL_1$ and
    $\DEL_2$, $\Nexts^C_{f_1}$ is the set of the conservative states of
    prefixes of $\DEL_1$ and $\Nexts^C_{f_2}$ is the set of the conservative
    states of prefixes of $\DEL_2$.
    Also, the states accepted by $f_1 \leadsto f_2$ are the succession
    of the states accepted by $f_1$ and the states accepted by $f_2$.
    The prefixes of $\DEL_1 \leadsto \DEL_2$ are the successions
    of prefixes of $\DEL_1$ and prefixes of $\DEL_2$ for
    the same process. Thanks to prefix symmetry,
    we can take a prefix of $\DEL_2$ and any process, and find
    a collection such that the process has that prefix.
    Therefore, the succession of prefixes for the same process are the
    same as the succession of prefixes of $\DEL_1$ and $\DEL_2$.
    Thus $\Nexts^C_{f_1 \leadsto f_2}$ is the set of conservative
    states of prefixes of $\DEL_1 \leadsto \DEL_2$,
    and is its minimal conservative strategy.
\end{proof}

\begin{proof}
    We only need to show that $f_1 \leadsto f_2$ is equal to
    Definition~\ref{minCons}.
    \begin{itemize}
        \item $(\supseteq)$ Let $q$ be a state such that
            $\exists c \in \DEL_1 \leadsto \DEL_2, \exists p \in \Pi$ such
            that $\forall r' \leq q.round: q(r') = c(r',p)$.
            By Definition~\ref{defOpsPred} of the operations on predicates, and
            specifically of succession, $\exists c_1 \in \DEL_1,
            \exists c_2 \in \DEL_2, \exists r > 0: c = c_1[1,r].c_2$.
            \begin{itemize}
                \item If $r = 0$, then $c[1,r] = c_2[1,r]$, and thus
                    $\forall r' \leq q.round: q(r') = c_2(r',p)$.
                    First, $f_2$ is valid for $\DEL_2$ by Lemma~\ref{consValid}
                    and Definition~\ref{minCons}.
                    Then the validity of $f_2$ and Lemma~\ref{consValid}
                    allow us to conclude that $q \in f_2$ and thus
                    that $q \in f_1 \leadsto f_2$.
                \item If $r > 0$, we have two cases to consider.
                    \begin{itemize}
                        \item If $q.round \leq r$, then
                            $\forall r' \leq q.round: q(r') = c_1(r',p)$
                            $f_1$ is also valid for $\DEL_1$ by Lemma~\ref{consValid}
                            and Definition~\ref{minCons}.
                            We conclude by validity of $f_1$ and application of
                            Lemma~\ref{consValid} that $q \in f_1$ and thus
                            that $q \in f_1 \leadsto f_2$.
                        \item If $q.round > r$, then $c[1,q.round]
                            = c_1[1,r].c_2[1,q.round-r]$.
                            We take $q_1$ such that
                            \begin{mathpar}
                            q_1.round = r
                            \and\text{and}\and
                            \forall r' > 0:
                            \left(
                            \begin{array}{ll}
                                q_1(r') = c_1(r',p) &\text{if }r' \leq q_1.round\\
                                q_1(r') = q(r') &\text{otherwise }\\
                            \end{array}
                            \right),
                            \end{mathpar}
                            and $q_2$ such that
                            \begin{mathpar}
                            q_2.round = q.round - r
                            \and\text{and}\and
                            \forall r' > 0:
                            \left(
                            \begin{array}{ll}
                                q_2(r') = c_2(r',p) &\text{if }r' \leq q_2.round\\
                                q_2(r') = q(r'-r) &\text{otherwise}\\
                            \end{array}
                            \right).
                            \end{mathpar}

                            Then by validity of $f_1$ and $f_2$ for their
                            respective predicates,
                            and by application of Lemma~\ref{consValid},
                            we get $q_1 \in f_1$ and $q_2 \in f_2$.

                            We also have:
			     \begin{itemize}
	                    \item $q_1.round + q_2.round = r + q.round - r = q.round$
                            \item $\forall r' \leq q_1.round = r, q(r') = c(r',p) =
                            c_1(r',p) = q_1(r')$
			     \item $\forall r' \in [q_1.round + 1, q.round], q(r') = c(r',p) =
			     c_2(r'-r,p) = q_2(r'-r) = q_2(r'-q_1.round)$
			     \item $\forall r'> q.round, q(r') = q_2 (r'-r )= q_2(r'-q_1.round)$
                            \end{itemize}
                            So Definition~\ref{defOpsStrat} gives $q = q_1 \leadsto q_2$.

                            We conclude that $q \in f_1 \leadsto f_2$.
                    \end{itemize}
            \end{itemize}
        \item $(\subseteq)$ Let $q \in f_1 \leadsto f_2$.
            By Definition~\ref{defOpsStrat} of operations for strategies,
            specifically succession, there are
            three possibilities for $q$.
            \begin{itemize}
                \item If $q \in f_1$, then by Definition~\ref{minCons} of
                  the minimal conservative strategy and minimality of $f_1$
                  for $\DEL_1$, we have
                    $\exists c_1 \in \DEL_1, \exists p_1 \in \Pi:
                    \forall r \leq q.round: q(r) = c_1(r,p_1)$.
                    Let $c_2 \in \DEL_2$. We take
                    $c = c_1[1,q.round].c_2$; we have $c \in c_1 \leadsto c_2$
                    by Definition~\ref{defOpsPred} of operations for predicates.
                    Then, $\forall r \leq q.round:
                    q(r)=c_1(r,p_1)=c(r,p_1)$.
                    We have found $c$ and $p$
                    showing that $q$ is in the minimal conservative strategy for
                    $\DEL_1 \leadsto \DEL_2$ by Definition~\ref{minCons}.
                \item If $q \in f_2$, then by Definition~\ref{minCons} of
                  the minimal conservative strategy and minimality of $f_2$
                  for $\DEL_2$, we have
                    $\exists c_2 \in \DEL_2, \exists p_2 \in \Pi:
                    \forall r \leq q.round: q(r) = c_2(r,p_2)$.
                    As $\DEL_2 \subseteq \DEL_1 \leadsto \DEL_2$
                    by Definition~\ref{defOpsPred}, thus
                    $c_2 \in \DEL_1 \leadsto \DEL_2$.
                    We have found $c$ and $p$
                    showing that $q$ is in the minimal conservative strategy for
                    $\DEL_1 \leadsto \DEL_2$ by Definition~\ref{minCons}.
                \item There are $q_1 \in f_1$ and $q_2 \in f_2$ such that
                    $q = q_1 \leadsto q_2$.
                    Because $f_1$ and $f_2$ are the minimal conservative strategies
                    of their respective $\DEL$s, then by Definition~\ref{minCons}
                    $\exists c_1 \in \DEL_1,
                    \exists p_1 \in \Pi$ such that
                    $\forall r \leq q.round: q_1(r) = c_1(r,p_1)$; and
                    $\exists c_2 \in \DEL_2, \exists p_2 \in \Pi$ such that
                    $\forall r \leq q.round: q_2(r) = c_2(r,p_2)$.
                    By Definition~\ref{defPrefixSym} of prefix symmetry,
                    the fact $\DEL_2$ is prefix symmetric implies that
                    $\exists c'_2 \in \DEL_2:
                    \forall r \leq q.round: c'_2(r,p_1) = c_2(r,p_2)$.
                    Hence $\forall r \leq q.round: q_2(r) = c'_2(r,p_1)$.
                    By taking $c = c_1[1,q_1.round].c'_2$, we have
                    $c \in c_1 \leadsto c'_2$.
                    Then $\forall r \leq q.round
                    = q_1.round + q_2.round$:
                    \[
                    \left(
                    \begin{array}{ll}
                        \begin{array}{ll}
                            q(r) & = q_1(r)\\
                                 & = c_1(r,p_1)\\
                                 & = c(r,p_1)\\
                        \end{array}
                            & \text{if }r \leq q_1.round\\ 
                        \begin{array}{ll}
                            q(r) & = q_2(r-q_1.round)\\
                                 & = c'_2(r-q_1.round,p_1)\\
                                 & = c(r,p_1)\\
                        \end{array}
                            & \text{if }r \in [q_1.round+1,q_1.round+ q_2.round]\\ 
                    \end{array}
                    \right).\]

                    We have found $c$ and $p$
                    showing that $q$ is in the minimal conservative strategy for
                    $\DEL_1 \leadsto \DEL_2$ by Definition~\ref{minCons}.
            \qedhere
            \end{itemize}
    \end{itemize}
\end{proof}

\begin{thm}[Minimal Conservative Strategy for Repetition]%
    \label{repetCons}
    Let $\DEL$ be a prefix symmetric delivered predicate,
    and $f$ be its minimal conservative strategy.
    Then $f^{\omega}$ is the minimal conservative strategy for
    $\DEL^{\omega}$.
\end{thm}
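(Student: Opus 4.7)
The plan is to mirror the proof structure of Theorem~\ref{succCons}: show the set equality $f^{\omega} = \{q \in Q \mid \exists c \in \DEL^{\omega}, \exists p \in \Pi, \forall r \leq q.round: q(r) = c(r,p)\}$ and then conclude by Definition~\ref{minCons}. The two inclusions decompose a state on one side into (or assemble one out of) a finite succession of building-block states, using prefix symmetry to align the process indices.

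For the $(\supseteq)$ direction, take $q$ and a witness $c \in \DEL^{\omega}$, $p \in \Pi$ with $\forall r \leq q.round: q(r) = c(r,p)$. By Definition~\ref{defOpsPred} of repetition, $c$ comes with sequences $(c_i)_{i \in \mathbb{N}^*}$ with $c_i \in \DEL$ and $(r_i)_{i \in \mathbb{N}^*}$ with $r_1 = 0$, $r_i < r_{i+1}$ and $c[r_i{+}1,r_{i+1}] = c_i[1,r_{i+1}-r_i]$. Let $k$ be the smallest index with $r_{k+1} \geq q.round$, so that $q.round$ falls within the $k$-th piece. For each $i < k$, define $q_i$ with $q_i.round = r_{i+1} - r_i$ and $q_i(r) = c_i(r,p)$ for $r \leq q_i.round$ (extending arbitrarily beyond); define $q_k$ with $q_k.round = q.round - r_k$ and $q_k(r) = c_k(r,p)$ for $r \leq q_k.round$. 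Minimality of $f$ (Definition~\ref{minCons}) witnesses each $q_i \in f$, and Definition~\ref{defOpsStrat} of succession, combined with how $c$ is stitched from the $c_i$, gives $q_1 \leadsto q_2 \leadsto \dots \leadsto q_k = q$, so $q \in f^{\omega}$.

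For the $(\subseteq)$ direction, take $q \in f^{\omega}$; by Definition~\ref{defOpsStrat}, $q = q_1 \leadsto \dots \leadsto q_k$ with each $q_i \in f$. By minimality of $f$ for $\DEL$ there exist $c_i \in \DEL$ and $p_i \in \Pi$ with $\forall r \leq q_i.round: q_i(r) = c_i(r,p_i)$. This is where prefix symmetry (Definition~\ref{defPrefixSym}) is essential: apply it to each $c_i$ to obtain $c'_i \in \DEL$ satisfying $\forall r \leq q_i.round: c'_i(r,p_1) = c_i(r,p_i)$, thereby unifying all processes to $p_1$. Now assemble a collection $c \in \DEL^{\omega}$ by setting $r_1 = 0$ and $r_{i+1} = r_i + q_i.round$ for $1 \leq i \leq k$, using $c'_i$ on the interval $[r_i{+}1, r_{i+1}]$, and extending with any additional collections of $\DEL$ beyond $r_{k+1}$ to keep the defining sequence infinite. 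By construction $c(r,p_1) = q(r)$ for all $r \leq q.round = \sum_{i=1}^{k} q_i.round$, so $q$ lies in the minimal conservative strategy of $\DEL^{\omega}$.

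The main obstacle is bookkeeping around the boundaries of the pieces: correctly locating the index $k$ at which $q.round$ sits (including the equality case $r_{k+1} = q.round$), ensuring the $q_i$ constructed in the $(\supseteq)$ direction compose to exactly $q$ round by round under $\leadsto$, and in the $(\subseteq)$ direction making sure the $(r_i)$ chosen for the assembled $c$ genuinely satisfy the strict monotonicity and prefix-matching conditions in the definition of $\DEL^{\omega}$. Everything else is a direct adaptation of the succession case, with repetition simply iterating the same argument a finite number of times.
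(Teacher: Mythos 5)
Your proposal follows the paper's proof almost exactly: prove the set equality with the minimal conservative strategy of $\DEL^{\omega}$, cutting $q$ along the breakpoints $(r_i)$ in one direction, and in the other direction using prefix symmetry to re-anchor each witness collection $c_i$ to the single process $p_1$ before stitching the prefixes back together. Your handling of the $(\subseteq)$ direction is fine --- indeed, explicitly padding the assembled collection with further blocks of $\DEL$ so that the defining sequences $(c_i)$, $(r_i)$ are genuinely infinite is, if anything, more faithful to Definition~\ref{defOpsPred} than the paper's own construction, which ends with a single infinite tail $c'_k$.

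The one step that does not go through as written is in the $(\supseteq)$ direction: you define each $q_i$ only on rounds $\leq q_i.round$ and extend it \emph{arbitrarily} beyond, yet then claim $q_1 \leadsto \dots \leadsto q_k = q$. A local state may contain messages tagged with rounds strictly greater than its round counter, and by Definition~\ref{defOpsStrat} the succession $q_1 \leadsto \dots \leadsto q_k$ inherits all messages at rounds above $\sum_i q_i.round$ from the \emph{last} operand $q_k$ (shifted by $r_k$). With an arbitrary extension of $q_k$, the composed state agrees with $q$ only up to round $q.round$ and need not equal $q$, so membership $q \in f^{\omega}$ is not established. The fix is exactly what the paper does: set $q_k(r) = q(r + r_k)$ for $r > q_k.round$ (the extensions of the intermediate blocks really are irrelevant, since $\leadsto$ discards them). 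This choice is harmless for membership in $f$, because the defining condition of the minimal conservative strategy (Definition~\ref{minCons}) constrains a state only at rounds $\leq$ its round counter. With that adjustment, your argument coincides with the paper's proof.
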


\begin{proof}
    We only have to show that $f^{\omega}$ is equal to
    Definition~\ref{minCons}.
    \begin{itemize}
        \item $(\supseteq)$ Let $q$ be a state such that
            $\exists c \in \DEL^{\omega}, \exists p \in \Pi$ such
            that $\forall r \leq q.round: q(r) = c(r,p)$.
            By Definition~\ref{defOpsPred} of operations for predicates,
            and specifically of repetition, $\exists
            {(c_i)}_{i \in \mathbb{N}^*}, \exists
            {(r_i)}_{i \in \mathbb{N}^*}$ such that
            $r_1 = 0$ and $\forall i \in \mathbb{N}^*:
            (c_i \in \DEL \land r_{i} < r_{i+1} \land
            c[r_i+1,r_{i+1}]=c_i[1,r_{i+1} - r_i])$.
            Let $k$ be the biggest integer such that $r_k \leq q.round$. We
            consider two cases.
            \begin{itemize}
                \item If $r_k = q.round$, then
                    $c[1,q.round] = c[1,r_k] = c_1[1,r_2-r_1].c_2[1,r_3-r_2] \dots c_{k-1}[1,r_k-r_{k-1}]$.
                    We take for $i \in [1,k-1]: q_i$ the state such that
                    $q_i.round = r_{i+1} -r_i$
                    and
                    $\forall r > 0$:
                    \[
                    \left(
                    \begin{array}{ll}
                        q_i(r) = c_i(r,p)
                            &\text{if }r \leq q_i.round\\
                        q_i(r) = q(r+\sum\limits_{j \in [1,i-1]} q_i.round)
                            &\text{otherwise }\\
                    \end{array}
                    \right).
                    \]

                    First, $f$ is valid for $\DEL$ by Lemma~\ref{consValid}
                    and Definition~\ref{minCons}.
                    Then by validity of $f$ and by application of
                    Lemma~\ref{consValid}, for $i \in [1,k-1]$
                    we have $q_i \in f$.
                    We see that $\forall r > 0: q(r) = (q_1 \leadsto \dots
                    \leadsto q_{k-1})(r)$. Indeed, $\forall r \in
                    [r_i + 1, r_{i+1}]: q(r) = c(r,p) = c_i(r-r_i,p) =
                    q_i(r-r_i) = q_{k-1}(r-\sum\limits_{j \in [1,k-2]} q_i.round) =
                    (q_1 \leadsto \dots \leadsto q_{k-1})(r)$; and for $r > q.round:
                    q(r) =
                    q((r-\sum\limits_{j \in [1,k-2]} q_i.round)
                    + \sum\limits_{j \in [1,k-2]} q_i.round) =
                    q_{k-1}(r-\sum\limits_{j \in [1,k-2]} q_i.round) =
                    (q_1 \leadsto \dots \leadsto q_{k-1})(r)$.

                    We conclude that $q \in f^{\omega}$.
                \item If $q.round > r_k$, we can apply the same reasoning
                    as in the previous case, the only difference being
                    $c[1,q.round] =
                    c_1[1,r_2-r_1].c_2[1,r_3-r_2] \dots
                    c_{k-1}[1,r_k-r_{k-1}].c_k[1,q.round-r_k]$.
            \end{itemize}
        \item $(\subseteq)$ Let $q \in f^{\omega}$.
            By Definition~\ref{defOpsStrat} of operations for strategies,
            $\exists q_1,q_2,\dots,q_k
            \in f: q = q_1 \leadsto q_2 \leadsto \dots \leadsto
            q_k$.
            By Definition~\ref{minCons} of the minimal conservative strategy and
            by minimality of $f$ for $\DEL$,
            $\exists c_1, c_2, \dots, c_k \in \DEL, \exists
            p_1,p_2,\dots,p_k \in \Pi:
            \forall i \in [1,k],
            q_i = \langle q_i.round, \{\langle r,j \rangle \mid
            r \leq q_i.round \land j \in c_i(r,p_i) \}$.
            By Definition~\ref{defPrefixSym} of prefix symmetry,
            $\DEL$ is prefix symmetric implies that
            $\forall i \in [2,k],\exists c'_i \in \DEL, \forall r \leq q_i.round:
            c'_i(r,p_1) = c_i(r,p_i)$. For $i = 1$, we have $c'_1 = c_1$.
            We take $c = c'_1[1,q_1.round].c'_2[1,q_2.round]\dots
            c'_{k-1}[1,q_{k-1}.round].c'_k$,
            thus $c \in c'_1 \leadsto c'_2 \leadsto \dots \leadsto c'_k$.
            Then $\forall r \leq q.round
            = \sum\limits_{i \in [1,k]} q_i.round, \exists i \in [1,k]$
            such that  $r \in \left[\sum\limits_{l \in [1,i-1]} q_l.round + 1,
            \sum\limits_{l \in [1,i]} q_l.round\right]$, and
          $q(r)  = q_i(r - \sum\limits_{l \in [1,i-1]} q_l.round)
                      = c_i(r - \sum\limits_{l \in [1,i-1]} q_l.round,p_i)
                      = c'_i(r - \sum\limits_{l \in [1,i-1]} q_l.round,p_1)
                      = c(r,p_1)$.

            We have found $c$ and $p$
            showing that $q$ is in the minimal conservative strategy for
            $\DEL^{\omega}$ by Definition~\ref{minCons}.
            \qedhere
    \end{itemize}
\end{proof}

\subsection{Computing Heard-Of Predicates of Conservative Strategies}%
\label{sec:conservative:heardof}

The analogy with oblivious strategies breaks here:
the heard-of predicate of conservative strategies is hard to compute,
as it depends in intricate ways on the delivered predicate itself.
Yet it is still possible to compute interesting information on
this HO:\@ upper bounds. These are overapproximations of the
actual HO, but they can serve for formal verification of LTL properties.
Indeed, the executions of an algorithm for the actual HO are contained
in the executions of the algorithm for any overapproximation of the
HO, and LTL properties must be true for all executions of the algorithm.
So proving the property on an overapproximation also proves it on
the actual HO\@.

\begin{thm}[Upper Bounds on HO of Minimal Conservative Strategies]%
    \label{upperBoundHO}
    Let $\DEL$, $\DEL_1$, and $\DEL_2$
    be delivered predicates containing $c_{total}$.\\
    Let $f^{cons}, f_1^{cons}, f_2^{cons}$ be their respective minimal
    conservative strategies,\\
    and $f^{obliv}, f_1^{obliv}, f_2^{obliv}$
    be their respective minimal oblivious strategies.
    Then:
    \begin{itemize}
        \item $\HO_{f_1^{cons} \cup f_2^{cons}}(\DEL_1 \cup \DEL_2)
            \subseteq \HOProd(\Nexts_{f_1^{obliv}} \cup
            \Nexts_{f_2^{obliv}})$.
        \item $\HO_{f_1^{cons} \leadsto f_2^{cons}}(\DEL_1 \leadsto \DEL_2)
            \subseteq \HOProd(\Nexts_{f_1^{obliv}} \cup
            \Nexts_{f_2^{obliv}})$.
        \item $\HO_{f_1^{cons} \combi f_2^{cons}}(\DEL_1 \combi \DEL_2)
            \subseteq \HOProd( \{ n_1 \cap n_2 \mid
            n_1 \in \Nexts_{f_1^{obliv}} \land
            n_2 \in \Nexts_{f_2^{obliv}}\})$.
        \item ${\HO_{(f^{cons})}^\omega}(\DEL^{\omega})
            \subseteq \HOProd(\Nexts_{f^{obliv}})$.
    \end{itemize}
\end{thm}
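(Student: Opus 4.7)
The plan is to reduce each of the four inclusions to a round-by-round check. Fix $h$ in the left-hand side; by Definition~\ref{defHOExec} there is a generating execution $t$ of the composed conservative strategy. For every $r > 0$ and $p \in \Pi$ I will locate the index $i$ such that $t[i] = next_p$ and $q_p^t[i].round = r$, write $q = q_p^t[i]$, and exploit that $h(r,p) = q(r) = q(q.round)$. Membership of $h$ in the claimed $\HOProd$ then reduces, via Definition~\ref{HOProd}, to showing $q(q.round) \in S$ for the appropriate generating set $S$ in each of the four conclusions.

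The core step is to unpack $q$ through the operation-on-strategies definition (Definition~\ref{defOpsStrat}) into substates in $f_1^{cons}$, $f_2^{cons}$, or $f^{cons}$, and then invoke the minimal conservative characterization (Definition~\ref{minCons}) on each substate: there must exist $c_j \in \DEL_j$ and $p_j \in \Pi$ such that the substate coincides with the prefix of $c_j$ at $p_j$ up to its round. Combining this with Definitions~\ref{defMinObliv} and~\ref{defObliv}, each such substate contributes, at its top round, a set of the form $c_j(r', p_j)$ that lies in $\Nexts_{f_j^{obliv}}$.

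For the four cases: in the union, $q \in f_1^{cons} \cup f_2^{cons}$ directly yields $q(r) \in \Nexts_{f_1^{obliv}} \cup \Nexts_{f_2^{obliv}}$. In the succession, the additional form $q = q_1 \leadsto q_2$ with $q_j \in f_j^{cons}$ gives $q(q.round) = q_2(q_2.round) \in \Nexts_{f_2^{obliv}}$, since $q_2.round \ge 1$ forces the evaluation to land in the $q_2$-component; the degenerate cases $q \in f_i^{cons}$ reuse the union argument. In the combination, $q = q_1 \combi q_2$ with $q_1.round = q_2.round = q.round$ gives $q(r) = q_1(r) \cap q_2(r)$, supplying the required intersection witness. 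In the repetition, $q = q_1 \leadsto \cdots \leadsto q_k$ with each $q_j \in f^{cons}$ satisfies $q(q.round) = q_k(q_k.round) \in \Nexts_{f^{obliv}}$ by the same last-component argument.

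The main obstacle I anticipate is keeping the indexing straight in the succession and repetition cases --- in particular verifying that $q.round$ always lands in the final component of a nested succession, which rests on $q_j.round \ge 1$ (forced by $Q = \mathbb{N}^* \times \mathcal{P}(\mathbb{N}^* \times \Pi)$). A secondary concern is that each composed conservative strategy must be valid on the composed delivered predicate for $\HO$ to be defined; this follows from Theorem~\ref{unionCons} plus Lemma~\ref{minDom} for the union case, and for the other three operations from a direct validity check in the style of Lemma~\ref{consValid}, using that every prefix of a collection in the composed predicate decomposes into prefixes of the building blocks.
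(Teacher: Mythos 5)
Your proposal is correct, but it takes a genuinely different route from the paper. The paper's proof is a three-line reduction to earlier results: since every oblivious strategy is conservative, the minimal conservative strategy dominates the minimal oblivious one, and the explicit $\HOProd$ bounds are then read off from Theorem~\ref{oblivOpsHO}; implicitly this identifies $f_1^{cons} \cup f_2^{cons}$, $f_1^{cons} \combi f_2^{cons}$, etc.\ with the minimal conservative strategies of the composed predicates via Theorems~\ref{unionCons}--\ref{repetCons}, and for combination it leans on the symmetry hypotheses of those theorems and of Theorem~\ref{oblivOpsHO} (round/prefix symmetry), which are not stated in Theorem~\ref{upperBoundHO}. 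Your argument instead bounds things directly: at the $next_p$ closing round $r$ the local state $q$ lies in the composed strategy and $h(r,p)=\obliv(q)=q(q.round)$; unpacking $q$ through Definition~\ref{defOpsStrat} into substates of $f_1^{cons}$, $f_2^{cons}$ (or $f^{cons}$) and applying Definition~\ref{minCons} to each substate puts the current-round component in the appropriate $\Nexts_{f_i^{obliv}}$, with the $q_j.round\ge 1$ observation correctly handling the succession/repetition nesting. This buys you a self-contained proof that needs no symmetry assumptions and no appeal to Theorem~\ref{oblivOpsHO}, at the price of redoing the state-decomposition bookkeeping that the paper delegates to its composition theorems. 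Your treatment of the side condition (validity of the composed conservative strategies, needed for $\HO$ to be defined) is also sound: the containment required by Lemma~\ref{consValid} only uses the ``$\supseteq$'' direction of the conservative composition proofs, which does not invoke prefix symmetry --- just note that to quote Lemma~\ref{consValid} verbatim you should also observe that unions, combinations, successions and repetitions of conservative strategies remain conservative (an easy check), or else argue validity directly from Definition~\ref{defExecStrat} as in the ``$\Leftarrow$'' direction of that lemma, which never uses conservativeness.
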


\begin{proof}
    An oblivious strategy is a conservative strategy. Therefore,
    the minimal conservative strategy always dominates
    the minimal oblivious strategy. Hence we get an upper bound
    on the heard-of predicate of the minimal conservative strategies
    by applying Theorem~\ref{oblivOpsHO}.
\end{proof}

\subsection{Domination by Conservative Strategies}%
\label{sec:conservative:domination}

Some examples above, like $\DEL_{1,\geq r}^{crash}$
(see Table~\ref{tab:examples}),
are not dominated by oblivious strategies, but are dominated
by conservative strategies. This follows from a condition on
delivered predicates and its invariance by the operations,
similar to the case of oblivious strategies.
Let's start by defining that condition and then showing that it implies domination by a conservative strategy.

\begin{defi}[Common Prefix Property]%
  \label{defCommonPref}
  Let $\DEL$ be a delivered predicate.
  $\DEL$ satisfies the \textbf{common prefix property} $\triangleq$
  $\forall c \in \DEL, \forall p \in \Pi,
  \forall r > 0, \exists c' \in \DEL, \forall q \in \Pi, \forall r' \leq r:
  c'(r',q)=c(r',p)$.
\end{defi}

The common prefix property, as its name suggests, works just like the
common round property but for a prefix. It ensures that for every prefix
of a collection in the predicate, there exists a collection where every
process shares this prefix.

\begin{thm}[Sufficient Condition of Conservative Domination]%
  \label{suffConsDom}
  Let $\DEL$ be a delivered predicate satisfying the common prefix property
  Then, there is a conservative strategy that dominates \DEL\@.
\end{thm}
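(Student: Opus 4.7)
Let $f_{\min}$ be the minimal conservative strategy for $\DEL$ from Definition~\ref{minCons}. Lemma~\ref{consValid} ensures that $f_{\min}$ is valid for $\DEL$, and Lemma~\ref{minDom} that it dominates every valid conservative strategy. The task is to promote this to domination over every valid strategy for $\DEL$, i.e. to show $\HO_{f_{\min}}(\DEL) \subseteq \HO_{f'}(\DEL)$ for every valid $f'$ for $\DEL$, which by Definition~\ref{defDom} makes $f_{\min}$ a dominating conservative strategy.

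The plan is to closely mimic Theorem~\ref{suffCfreeDom}, with the common prefix property playing the role that common round plays there. Fix a valid $f'$ for $\DEL$ and some $ho \in \HO_{f_{\min}}(\DEL)$; take an execution $t_0 \in \execs_{f_{\min}}(\DEL)$ generating $ho$, and aim to show that $t_0$ is itself an execution of $f'$, so that Definition~\ref{defHOExec} directly yields $ho \in \HO_{f'}(\DEL)$. Reasoning by contradiction, and using fairness inherited from the validity of $f_{\min}$, any failure must consist of some $next_j$ at a smallest round $r$ whose local state $q_0$ lies in $f_{\min}$ but not in $f'$. The minimality condition of Definition~\ref{minCons} then supplies $c \in \DEL$ and $p \in \Pi$ with $q_0(r') = c(r',p)$ for all $r' \leq r$, and the common prefix property (Definition~\ref{defCommonPref}) produces $c_{\textit{block}} \in \DEL$ in which every process receives the prefix $c(\cdot,p)$ through round $r$. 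Let $t_{\textit{block}} = st(f', c_{\textit{block}})$; Lemma~\ref{stCorrect} gives $t_{\textit{block}} \in \execs_{f'}(c_{\textit{block}}) \subseteq \execs_{f'}(\DEL)$. The usual two-case analysis---some process blocked strictly before round $r$ (forcing a symmetric deadlock because $c_{\textit{block}}$ makes the processes indistinguishable up to round $r$) versus all processes reaching round $r$ with identical conservative prefixes equal to the projection of $q_0$---should exhibit $t_{\textit{block}}$ as invalid, contradicting the validity of $f'$.

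The principal technical obstacle lies in the second case: the states produced by $st(f', c_{\textit{block}})$ at round $r$ are strictly conservative-only, whereas the offending $q_0$ in $t_0$ may additionally carry future-round messages, and a non-conservative $f'$ could in principle accept the conservative projection of $q_0$ while rejecting $q_0$ itself. My plan to resolve this is to first establish a conservative-validity lifting lemma: for any valid $f'$ for a $\DEL$ with the common prefix property, every conservative-only state whose projection matches some prefix $(c,p) \in \DEL \times \Pi$ must lie in $f'$, which is proved by running exactly the $c_{\textit{block}}/st$ deadlock construction directly on that conservative-only state. With the lemma in hand, the round-$r$ deadlock in $t_{\textit{block}}$ is immediate, and the discrepancy with $q_0$ is closed by applying the lemma to the conservative projection of $q_0$ together with an auxiliary execution that defers every late delivery past the problematic $next_j$. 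This forces a contradiction at $q_0$ and allows to conclude that $t_0 \in \execs_{f'}(\DEL)$, hence $ho \in \HO_{f'}(\DEL)$, and therefore $f_{\min}$ is a conservative strategy that dominates $\DEL$.
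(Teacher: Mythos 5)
Your engine is the right one, and it is essentially the paper's: the ``lifting lemma'' you propose (for a $\DEL$ with the common prefix property, any valid $f'$ must contain every state \emph{without future-round messages} whose message sets up to its round coincide with a prefix $c(\cdot,p)$ of some $c \in \DEL$, proved by applying the common prefix property to get $c_{\textit{block}}$ and deadlocking $st(f',c_{\textit{block}})$ via the usual two-case analysis) is correct, and it is exactly the $c_{\textit{block}}$/standard-execution argument the paper reuses from Theorem~\ref{suffCfreeDom}. The genuine gap is in what you do with it. Your stated goal is to show that $t_0$ itself is an execution of $f'$, i.e.\ to ``force a contradiction at $q_0$''. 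Nothing you have supports that: $q_0$ may contain future-round messages, the lemma only speaks about its conservative projection, and $f'$ is an arbitrary valid strategy, so membership of the projection in $f'$ says nothing about membership of $q_0$. Moreover, you cannot run the deadlock construction on $q_0$ itself: the common prefix property constrains $c_{\textit{block}}$ only up to round $r$, so you have no way to realize $q_0$'s future-message pattern inside a collection of $\DEL$ while withholding all further deliveries; and your ``auxiliary execution that defers every late delivery past the problematic $next_j$'' never exercises $f'$ on $q_0$ at all, so it cannot yield a contradiction about $q_0$. As written, the final inference ``hence $t_0 \in \execs_{f'}(\DEL)$'' is unjustified.

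The repair is short and uses only your own ingredients: you do not need $t_0$ to be an execution of $f'$, only \emph{some} execution of $f'$ on $\DEL$ generating $ho$. Take the deferred-delivery execution itself as the witness: starting from $t_0$, postpone each delivery of a round-$\rho$ message to a process that occurs while that process is still below round $\rho$ until just after the $next$ bringing it to round $\rho$ (and before its round-$\rho$ $next$). This is still an execution of the same collection of $\DEL$ (deliveries are only moved later), it generates the same heard-of collection $ho$ (each message is on time in the new execution iff it was in $t_0$), fairness is vacuous because every process still takes infinitely many $next$s, and now every $next$ occurs in a state with no future messages whose projection equals the corresponding state of $t_0$, hence matches a $\DEL$-prefix by Definition~\ref{minCons}; your lifting lemma then places all these states in $f'$, so the reordered execution is an execution of $f'$ and $ho \in \HO_{f'}(\DEL)$, giving $f' \prec_{\DEL} f_{\min}$ as required. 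Note that the paper instead takes the canonical execution $can(ho)$ as the witness and applies the same deadlock argument to its blocked states; your witness, built from the generating execution of $f_{\min}$, only needs the prefixes actually visited by $t_0$ rather than prefixes of the form $(\Pi,\dots,\Pi,ho(r,j))$, so once the conclusion is routed through it the argument goes through.
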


\begin{proof}
  The proof is the same as for the oblivious case, except that
  the common prefix property gives us a delivered collection where everyone has
  the same exact prefix as the blocked process in the canonical execution.
\end{proof}

\begin{thm}[Domination by Conservative for Operations]%
    \label{invarSuffReac}
    Let $\DEL, \DEL_1, \DEL_2$ be delivered predicates
    with the common prefix property.
    Then $\DEL_1 \cup \DEL_2$, $\DEL_1 \combi \DEL_2$,
    $\DEL_1 \leadsto \DEL_2$, $\DEL^{\omega}$
    satisfy the common prefix property.
\end{thm}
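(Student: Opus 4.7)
The plan is to mirror the proof of Theorem~\ref{invarCommon} almost verbatim, replacing the one-round witness given by the common round property with a prefix witness given by the common prefix property. For each of the four operations I would take an arbitrary $c$ in the composite predicate, a process $p \in \Pi$ and a round $r > 0$, decompose $c$ according to the operation that built it, apply Definition~\ref{defCommonPref} to each building block to obtain a collection whose first $r$ rounds agree on every process with $c(\cdot, p)$, and finally re-compose to produce the required witness $c' \in \DEL_1 \cup \DEL_2$, $\DEL_1 \combi \DEL_2$, $\DEL_1 \leadsto \DEL_2$ or $\DEL^{\omega}$.

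For union, if $c \in \DEL_i$ I would apply the common prefix property of $\DEL_i$ and use the inclusion $\DEL_i \subseteq \DEL_1 \cup \DEL_2$. For combination, writing $c = c_1 \combi c_2$ by Definition~\ref{defOpsPred}, I would apply the common prefix property to each $c_i$ at $(p,r)$ to get $c'_i \in \DEL_i$ with $c'_i(r',q) = c_i(r',p)$ for all $q \in \Pi$ and $r' \leq r$, and take $c' = c'_1 \combi c'_2$. The verification is immediate: for $r' \leq r$ and $q \in \Pi$, $c'(r',q) = c'_1(r',q) \cap c'_2(r',q) = c_1(r',p) \cap c_2(r',p) = c(r',p)$.

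For succession, writing $c = c_1[1,r_{change}].c_2$, I would split on whether $r \leq r_{change}$ or $r > r_{change}$. In the first case, apply the common prefix property to $c_1$ at $(p,r)$ to get $c'_1$ and take $c' = c'_1[1,r_{change}].c_2$. In the second case, apply it to $c_1$ at $(p,r_{change})$ yielding $c'_1$, and to $c_2$ at $(p, r - r_{change})$ yielding $c'_2$; then $c' = c'_1[1,r_{change}].c'_2$ is in $\DEL_1 \leadsto \DEL_2$ and agrees with $c(\cdot,p)$ on all processes up to round $r$ by case analysis on whether $r' \leq r_{change}$ or $r_{change} < r' \leq r$. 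For repetition, given the defining sequences $(c_i)$ and $(r_i)$ of $c$, I would locate the unique $k \geq 1$ with $r \in [r_k+1, r_{k+1}]$, apply the common prefix property to each $c_i$ (at round $r_{i+1}-r_i$ for $i < k$, at round $r - r_k$ for $i = k$) to obtain $c'_i$, and assemble $c'$ by concatenating the relevant prefixes of these $c'_i$ and extending with any tail of collections drawn from $\DEL$ (which is nonempty since $c$ witnesses it).

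I expect no conceptual obstacle here: the statement is a direct prefix-analogue of Theorem~\ref{invarCommon} and the proof of that theorem already carries out the same bookkeeping for single rounds. The only slightly delicate step is succession with $r > r_{change}$, where one has to match the prefix on both sides of the switch point, but this is handled by applying Definition~\ref{defCommonPref} twice with carefully chosen round arguments, exactly as in the corresponding case of the proof of Theorem~\ref{invarCommon}.
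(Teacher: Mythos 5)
Your proposal is correct and takes essentially the same route as the paper's proof: for each operation, decompose $c$ into its building blocks, apply Definition~\ref{defCommonPref} to each block, and recompose the witnesses ($c'_1 \combi c'_2$ for combination, $c'_1[1,r_{change}].c'_2$ for succession, a concatenation of the $c'_i$ prefixes for repetition). The only cosmetic differences are your explicit case split on whether $r \leq r_{change}$ for succession (the paper instead disposes of the degenerate $c \in \DEL_2$ case up front and then assumes $r_{change} > 0$) and the particular round arguments you feed to the common prefix property in the repetition case; neither affects the argument.
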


\begin{proof}\hfill
    \begin{itemize}
        \item Let $c \in \DEL_1 \cup \DEL_2$. Thus $c \in \DEL_1$
            or $c \in \DEL_2$ by Definition~\ref{defOpsPred}.
            We fix $c \in \DEL_1$ (the other
            case is symmetric). Then for $p \in \Pi$ and $r > 0$,
            we get a $c' \in \DEL_1$
            satisfying the condition of Definition~\ref{defCommonPref}.
            Since $\DEL_1 \subseteq \DEL_1 \cup \DEL_2$,
            we get $c' \in \DEL_1 \cup \DEL_2$.
            We conclude that the common prefix property still holds
            for $\DEL_1 \cup \DEL_2$ by Definition~\ref{defCommonPref}.
        \item Let $c \in \DEL_1 \combi \DEL_2$. Then
            $\exists c_1 \in \DEL_1, \exists c_2 \in \DEL_2:
            c = c_1 \combi c_2$. For $p \in \Pi$ and $r > 0$,
            our hypothesis on $\DEL_1$ and $\DEL_2$ ensures
            that there are $c'_1 \in \DEL_1$ satisfying the
            condition of Definition~\ref{defCommonPref}
            for $c_1$ and $c_2' \in \DEL_2$ satisfying the
            condition of Definition~\ref{defCommonPref}
            for $c_2$.
            We argue that $c' = c_1' \combi c'_2$ satisfies the
            condition of Definition~\ref{defCommonPref}
            for $c$. Indeed, $\forall r' \leq r,
            \forall q \in \Pi: c(r',q) = c_1'(r',q) \combi c_2'(r',q)
            = c_1(r',p) \combi c_2(r',p) = c(r',p)$.
            We conclude that the condition of Definition~\ref{defCommonPref} still holds for
            $\DEL_1 \combi \DEL_2$.
        \item Let $c \in \DEL_1 \leadsto \DEL_2$. Since if $c \in
            \DEL_2$, the condition holds by hypothesis,
            we study the case where succession actually happens.
            Hence $\exists c_1 \in \DEL_1, \exists c_2 \in \DEL_2, \exists
            r_{change} > 0: c = c_1[1,r_{change}].c_2$.
            For $p \in \Pi$ and $r > 0$,
            our hypothesis on $\DEL_1$ and $\DEL_2$ ensures
            that there are $c'_1 \in \DEL_1$ satisfying the
            condition for $c_1$ at $r$ and $c_2' \in \DEL_2$ satisfying
            the condition for $c_2$ at $r - r_{change}$.
            We argue that $c' = c_1'[1,r_{change}].c_2'$
            satisfies the condition for $c$. Indeed,
            $\forall r' \leq r, \forall q \in \Pi$, we have:
            if $r' \leq r_{change}: c'(r',q) = c_1'(r',q)
            = c_1(r',p) = c(r',p)$; and if $r' > r_{change}:
            c'(r',q) = c_2'(r'-r_{change},q) = c_2(r'-r_{change},p)
            = c(r',p)$.
            We conclude that the condition still holds for
            $\DEL_1 \leadsto \DEL_2$.

        \item Let $c \in \DEL^{\omega}, p \in \Pi, r > 0$ and let's prove that $\exists c' \in \DEL^{\omega}, \forall q \in \Pi, \forall r' \leq r: c'(r',q)=c(r',p)$.
 $c \in \DEL^{\omega}$, so by Definition~\ref{defOpsPred}, there exist $(c_i)$ and
            $(r_i)$ such that: $r_1=0 \land \forall i \in \mathbb{N}^*:
          (c_i \in \DEL \land r_{i} < r_{i+1} \land c[r_i+1,r_{i+1}]=c_i[1,r_{i+1} - r_i])$.
          $\forall i \in \mathbb{N}^*:c_i \in \DEL$, so by hypothesis and Definition~\ref{defCommonPref}, $\forall i \in \mathbb{N}^*, \exists c'_i \in \DEL, \forall q \in \Pi, \forall r' \leq r: c'_i(r',q)=c_i(r',p)$.

Let $c' = \prod\limits_{i > 0} c'_i[1,r_{i+1}-r_i]$.
 By definition, $c' \in \DEL^{\omega}$.
 Let $q \in \Pi$ and $ r' \leq r$ and let's prove that $c'(r',q)=c(r',p)$. Let $i \in \mathbb{N}^*$ such that $r_i+1 \leq r'\leq r_{i+1}$. Then, $c'(r',q) = c'_{i}(r'-r_{i},q) = c_{i}(r'-r_{i},p) = c(r',p)$.
We conclude that the condition of Definition~\ref{defCommonPref} still holds for $\DEL^{\omega}$.
            \qedhere
    \end{itemize}
\end{proof}

\noindent
  Therefore, as long as the initial building blocks satisfy the common prefix property,
  so do the results of the operations. Thus the latter is dominated
  by its minimal conservative strategy --- a strategy that can be computed
  from the results of this section.

\section{Looking at Future Rounds}%
\label{sec:chap2future}

In the above, the dominating strategy was at most conservative:
only the past and present rounds were useful
for generating heard-of collections. Nevertheless, messages
from future rounds serve in some cases. This section provides a preliminary exploration
of the topic, with an example and a conjecture.

Let's go back to $\DEL^{loss}_1$, the delivered predicate for
at most one message loss presented in Section~\ref{sec:chap2defPdel}.
The minimal oblivious
strategy for this predicate is $f_{n-1}$. The minimal conservative strategy
is a similar one, except that when it received a message from $p$ at round $r$,
it waits for all messages from $p$ at previous rounds. However, this does not
change which messages the strategy waits for in the current round: $n-1$ messages,
because one can always deliver all the messages from the past,
and then the loss might be a message from the current round.
However, if the strategy considers messages from the next round, it
can ensure that at each round, at most one message among all processes
is not delivered on time. The strategy presented here waits for either all messages
from the current round, or for all but one messages from the current round and
all but one message from the next round.

\begin{defi}[Strategy for One Loss]
  Let $\textit{after}: Q \mapsto \mathcal{P}(\Pi)$ such that $\forall q \in Q:
  \textit{after}(q) = \{ k \in \Pi \mid \langle q.round + 1, k \rangle \in q.mes\}$.
  Then $f_{loss} \triangleq {}$
  \[ \left\{
  q \in Q ~\middle|~
  \begin{array}{ll}
      & \card(\obliv(q)) = \card(\Pi)\\
      \lor & (\card(\textit{after}(q)) = \card(\Pi)-1 \land \card(\obliv(q)) = \card(\Pi)-1) \}
  \end{array}
  \right\}\]
\end{defi}

Intuitively, this strategy is valid for $\DEL^{loss}_1$ because at each round and for each
process, only two cases exist:
\begin{itemize}
\item Either no message for this process at this round is lost, and
  it receives a message from every process;
\item Or one message for this process is lost at this round, and
    it only receives $n-1$ messages. However, all the other processes
    receive $n$ messages (because only one can be lost), thus they change round and send
    their messages for the next round. Since the one loss already
    happened, all these messages are delivered, and the original
    process eventually receives $n-1$ messages from the next round.
\end{itemize}

\begin{lem}[Validity of $f_{loss}$]%
  \label{asymValid}
  $f_{loss}$ is valid for $\DEL^{loss}_1$.
\end{lem}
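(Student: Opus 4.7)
The plan is to proceed by contradiction, following the template of Lemma~\ref{valid}. Assume $f_{loss}$ is invalid for $\DEL^{loss}_1$, so by Definition~\ref{defVal} there is an invalid execution $t \in \execs_{f_{loss}}(\DEL^{loss}_1)$. Let $r^*$ be the smallest round at which some process stays blocked forever in $t$, and let $p^*$ be one such process; by minimality of $r^*$, every process eventually reaches round $r^*$ in $t$. Fix $c \in \DEL^{loss}_1$ with $t \in \execs(c)$; the bound $\sum_{r,p}(n - \card(c(r,p))) \leq 1$ forces every $\card(c(r,p))$ to lie in $\{n-1, n\}$, with at most one $(r,p)$ attaining $n-1$.

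The argument then splits on $\card(c(r^*,p^*))$. If $\card(c(r^*,p^*)) = n$, minimality of $r^*$ ensures that every sender has reached round $r^*$, so Definition~\ref{defExecColl} delivers all $n$ such messages to $p^*$; from that point the first disjunct of $f_{loss}$ holds continuously, and the fairness clause of Definition~\ref{defExecStrat} forces a subsequent $next_{p^*}$, contradicting the blockage. If $\card(c(r^*,p^*)) = n-1$, this pair realises the unique loss permitted by $c$, so $\card(c(r,p)) = n$ for every other $(r,p)$. Applying the Case~1 reasoning to each $p' \neq p^*$ at round $r^*$ shows that each such $p'$ unblocks, reaches round $r^*+1$, and broadcasts its round-$(r^*+1)$ message. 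Since $\card(c(r^*+1,p^*)) = n$, each of these $n-1$ messages is eventually delivered to $p^*$ by Definition~\ref{defExecColl}, so the local state of $p^*$ eventually contains the $n-1$ messages tagged with round $r^*$ together with $n-1$ messages tagged with round $r^*+1$; this matches the second disjunct of $f_{loss}$, and fairness once more produces a $next_{p^*}$, the desired contradiction.

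The main obstacle is the bookkeeping in the second case. One has to notice that even though $\card(c(r^*+1,p^*)) = n$, process $p^*$ is itself still stuck at round $r^*$ and has therefore not yet broadcast its own round-$(r^*+1)$ message, so only $n-1$ of those messages are ever in flight; the threshold $\card(\Pi)-1$ in the second disjunct of $f_{loss}$ is calibrated precisely for this situation. A secondary subtlety is that the progress of the other $n-1$ processes past round $r^*$ is really an internal reuse of the Case~1 argument applied to a different process, which is legitimate because minimality of $r^*$ already furnishes all the round-$r^*$ senders that this internal invocation needs.
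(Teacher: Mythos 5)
Your proof is correct and follows essentially the same argument as the paper's: contradiction via a minimal blocked round, the observation that the single permitted loss must then be the current-round message to the blocked process, and the conclusion that all other processes advance and their next-round messages eventually satisfy the second disjunct of $f_{loss}$, contradicting fairness. Your explicit case split on $\card(c(r^*,p^*)) \in \{n-1,n\}$ and the remark about $p^*$'s own unsent round-$(r^*+1)$ message merely spell out steps the paper's proof leaves implicit.
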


\begin{proof}
  We proceed by contradiction: \textbf{Assume} $f_{loss}$ is invalid
  for \DEL$^{loss}_1$. Thus there exists
  $t \in \execs_{f_{loss}}(\DEL^{loss}_1)$ invalid.
  There is a smallest round $r$ where some process $j$ cannot change round from that point onwards.
  Let also $c$ be a delivered collection of $\DEL^{loss}_1$
  such that $t \in \execs(c)$.
  Minimality of $r$ entails that every process reaches round $r$ and thus sends
  its messages to $j$, and $c \in \DEL^{loss}_1$ entails that
  at most one of these messages can be lost. Thus $j$ eventually receives $n-1$ messages
  from round $r$.
  By hypothesis, it doesn't receive all $n$ messages, or it could change round. Thus $j$
  receives exactly $n-1$ messages from round $r$, which means that the only loss allowed
  by $\DEL^{loss}_1$ happens at round $r$.
  For $j$ to block, it must never receive $n-1$ messages from round $r+1$.
  Yet the only loss is a message to $j$; thus every other process receives $n$
  messages at round $r$, changes round, and sends its message to $j$ without loss.
  Hence $j$ eventually receives $n-1$ messages from round $r+1$.
  This \textbf{contradicts} the fact that $j$ cannot change round at this point in $t$.
\end{proof}

This strategy also ensures that at most one process per round receives
only $n-1$ messages on time --- the others must receive all the messages. This
vindicates the value of messages from future rounds for some delivered predicates,
such as the ones with asymmetry in them.

\begin{thm}[Heard-Of Characterization of $f_{loss}$]%
  \label{asymCharac}~\\
  $\HO_{f_{loss}}(\DEL^{loss}_1) = \{ h, \text{ a heard-of collection} \mid \forall r > 0, \sum\limits_{p \in \Pi} \card(\Pi \setminus ho(r,p)) \leq 1\}$.
\end{thm}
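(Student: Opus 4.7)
The plan is to prove both inclusions. Writing $H^\star$ for the right-hand side, I split the proof into the two directions.

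For $\HO_{f_{loss}}(\DEL^{loss}_1) \subseteq H^\star$: take $h$ generated by some $t \in \execs_{f_{loss}}(\DEL^{loss}_1)$. Each $next_p$ at round $r$ in $t$ is played from a state $q$ in $f_{loss}$, which forces either $\obliv(q) = \Pi$ (contributing $0$ to $\sum_p \card(\Pi\setminus h(r,p))$) or $\card(\obliv(q)) = n-1$ (contributing $1$). The key claim is that at most one process per round can be in the second case. Suppose for contradiction $p_1, p_2$ are both in case (b) at round $r$ and $p_1$ does its $next_{p_1}$ first. Since $p_1 \notin \textit{after}(q_{p_1})$ (as $p_1$ has not yet sent any round-$(r+1)$ message) and $\card(\textit{after}(q_{p_1})) = n-1$, we must have $\textit{after}(q_{p_1}) = \Pi \setminus \{p_1\}$. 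So $p_2$ has already broadcast its round-$(r+1)$ message and thus already transitioned from round $r$, contradicting the choice of $p_1$.

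For $H^\star \subseteq \HO_{f_{loss}}(\DEL^{loss}_1)$: take $h \in H^\star$ and build an execution $t$ on $c_{total}$ (which lies in $\DEL^{loss}_1$) generating $h$. Process rounds in order. At a round $r$ with no miss, deliver every round-$r$ message to every process, then play every $next_p$ (case (a)). At a round $r$ with one miss, say $p_r$ misses $k_r$, first deliver the $n-1$ round-$r$ messages other than $k_r$'s to $p_r$, and deliver all $n$ round-$r$ messages to each of the other $n-1$ processes; play their $next$ transitions (all case (a)), which broadcasts their round-$(r+1)$ messages; deliver those to $p_r$ so that $\textit{after}(q_{p_r}) = \Pi \setminus \{p_r\}$; play $next_{p_r}$ (case (b)); and finally schedule the delayed $deliver(r, k_r, p_r)$. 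The self-delivery $deliver(r+1, p_r, p_r)$ is placed either before or after $p_r$'s transition from round $r+1$, matching whichever of $\Pi$ or $\Pi \setminus \{p_r\}$ corresponds to $h(r+1, p_r)$.

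The main obstacle is the global consistency of this construction: checking that the infinite concatenation really satisfies Definition~\ref{defExec}, that every $next_p$ it contains is played from a state in $f_{loss}$, that the fairness condition holds, and that Lemma~\ref{canHO}-style reasoning yields $h_t = h$. The verifications are mechanical but more delicate than in Lemma~\ref{stCorrect} because the construction is no longer canonical in the sense of Definition~\ref{defStdExec}: it must carefully interleave the late delivery of $k_r$'s round-$r$ message with the round-$(r+1)$ activity to avoid blocking any other process, while still ensuring that the state at each $next_{p_r}$ is precisely the case-(b) state needed to produce the prescribed heard-of set.
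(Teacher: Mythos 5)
Your ($\subseteq$) direction is correct and is essentially the paper's own argument: every $next_p$ ending round $r$ is played from a state $q$ of $f_{loss}$ with $h(r,p)=\obliv(q)$, so each heard-of set is $\Pi$ or of size $n-1$, and your ``first deficient process'' argument (delivery-after-sending gives $p_1\notin\textit{after}(q_{p_1})$, hence $\textit{after}(q_{p_1})=\Pi\setminus\{p_1\}$, hence $p_2$ had already left round $r$) is the same reasoning the paper uses to exclude two deficient processes in one round.

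The genuine gap is in ($\supseteq$), precisely at the sentence where you place the self-delivery $deliver(r+1,p_r,p_r)$ ``matching whichever of $\Pi$ or $\Pi\setminus\{p_r\}$ corresponds to $h(r+1,p_r)$''. Membership in the right-hand side does not guarantee that $h(r+1,p_r)$ has one of these two forms: the predicate constrains each round separately, so it contains collections where the same process is deficient at two consecutive rounds with a missing sender other than itself at the second one (e.g.\ $n=3$ and $h(1,a)=\{a,c\}$, $h(2,a)=\{a,b\}$, everything else $\Pi$). Your construction cannot realize such an $h$, and in fact no execution of $f_{loss}$ can: since $\card(h(r,p_r))=n-1$, the $next_{p_r}$ ending round $r$ must use clause (b) of $f_{loss}$ (clause (a) would force $\card(h(r,p_r))=n$), and since $p_r$ cannot have received its own round-$(r+1)$ message before its $r$-th $next$, clause (b) forces $\textit{after}(q)=\Pi\setminus\{p_r\}$; these $n-1$ round-$(r+1)$ messages arrive while $p_r$'s counter is still $r$, hence on time for round $r+1$, so $h_t(r+1,p_r)\supseteq\Pi\setminus\{p_r\}$ in every execution of $f_{loss}$. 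So the missing step is not a schedulability detail you could repair by interleaving more carefully; it marks collections of the right-hand side that are not generated at all. The paper's own proof of this inclusion (move each incriminating $next_j$ after $n-1$ deliveries of next-round messages in the canonical execution) glosses over exactly the same case, so your proposal reproduces rather than fills this hole; a complete argument would need either an extra condition linking $h(r,p)$ and $h(r+1,p)$ in the right-hand side or a correspondingly weakened characterization.
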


\begin{proof}\hfill
  \begin{itemize}
    \item ($\subseteq$)
  Let $\textit{ho} \in \HO_{f_{loss}}(\DEL^{loss}_1)$
  and $t \in \execs_{f_{loss}}(\DEL^{loss}_1)$ an execution of $f_{loss}$
  generating \textit{ho}.
  By definition of the executions of $f_{loss}$, processes change
  round only when they received either $n$ messages from the current round,
  or $n-1$ messages from the current round (and $n-1$ messages from the next one,
  but that is irrelevant to the heard-of predicate). Moreover, $t$ is valid by
  definition, as it generates \textit{ho}.
  Let's now assume that at least one process $j$ receives only $n-1$ messages on time
  for some round $r$ in $ho$. By definition of $f_{loss}$ and validity of $t$,
  we deduce that $j$ also received $n-1$ messages from round $r+1$ while
  it was at round $r$. Hence, every other process ended its own round $r$ before
  $j$; the only possibility is that they received $n$ messages from round $r$, because
  the alternatives require the reception of the message from $j$ at round $r+1$.
  We conclude that for each round, at most one process receives only $n-1$ messages on time,
  which can be rewritten as
  $\forall r \in \mathbb{N}^*: \sum\limits_{j \in \Pi} \card(\Pi \setminus ho(r,j)) \leq 1$.

  \item ($\supseteq$)
  Let \textit{ho} a heard-of collection over $\Pi$ such that
  $\forall r \in \mathbb{N}^*: \sum\limits_{j \in \Pi} \card(\Pi \setminus ho(r,j)) \geq 1$.
  The difficulty here is that the canonical execution of $ho$ fails to be
  an execution of $f_{loss}$: when only $n-1$ messages from the current round
  are delivered to some process $j$, then the corresponding $next_j$ for this
  round will not be allowed by $f_{loss}$.
  One way to deal with this issue is to start from the canonical execution
  of $ho$ and move these incriminating $next_j$ after the deliveries of
  $n-1$ messages from the next round, and before the deliveries of
  messages from $j$ in the next round.
  In this way, every $next_j$ will happen when either $n$ messages have been
  received from the current round, or $n-1$ messages from the current round
  and $n-1$ from the next one.
  We conclude that $\textit{ho} \in \HO_{f_{loss}}(\DEL^{loss}_1)$.
  \qedhere
  \end{itemize}
\end{proof}

\noindent
Does $f_{loss}$ dominate $\DEL^{loss}_1$? It would seem so, but proving it is
harder than for $f_{n-F}$ and $\DEL^{crash}_F$. The reason is that the common round property
of the latter allows the creation of deadlocks where every process is blocked
in the same local state, which forces any valid strategy to accept this state.
Whereas the whole reason the future serves in $\DEL^{loss}_1$ is because the latter
doesn't have this property, and thus the local state of a process constrains
the possible local states of other processes.

\begin{conj}[Domination of $f_{loss}$ on $\DEL^{loss}_1$]
$f_{loss}$ is a dominating strategy for $\DEL^{loss}_1$.
\end{conj}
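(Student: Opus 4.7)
The plan is to follow the structure of Theorem~\ref{fnfDom}: fix any valid strategy $f'$ for $\DEL^{loss}_1$, take an arbitrary $ho \in \HO_{f_{loss}}(\DEL^{loss}_1)$, and construct an execution $t$ of $f'$ on some collection of $\DEL^{loss}_1$ with $h_t = ho$. By Theorem~\ref{asymCharac}, $ho$ has at most one loss per round, so every round is either total or has a single distinguished missing message, which bounds how much $t$ may deviate from the total behaviour.

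First, I would take $c_{total}$ as the underlying delivered collection, which belongs to $\DEL^{loss}_1$ since zero losses fit the budget. The execution $t$ would be a reordered variant of $can(ho)$: at every round $r$ and every process $j$ with $ho(r,j) = \Pi \setminus \{p\}$ for some $p$, the event $next_j$ for round $r$ is moved past $dels_{r+1}$, so that when $j$ changes round it has received the $n-1$ messages of $ho(r,j)$ tagged with $r$ together with the $n-1$ messages of $ho(r+1,j)$ tagged with $r+1$. This mirrors the accepting condition of $f_{loss}$ and is exactly the reordering sketched in the proof of Theorem~\ref{asymCharac}; by construction $h_t = ho$.

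Next, I would argue by contradiction that $t$ is an execution of $f'$. If not, let $r$ be the smallest round at which some process $j$ performs $next_j$ in $t$ while its local state $q$ lies outside $f'$, and split on $ho(r,j)$. When $ho(r,j) = \Pi$, the state $q$ contains all $n$ round-$r$ messages and no future messages, and I would exhibit a delivered collection $c_{block} \in \DEL^{loss}_1$ so that $st(f',c_{block})$ deadlocks all processes at round $r$ in the same state as $j$, contradicting validity through Lemma~\ref{stCorrect}. When $ho(r,j) = \Pi \setminus \{p\}$, the state $q$ contains $n-1$ messages from round $r$ and $n-1$ from round $r+1$; the candidate $c_{block}$ would be the collection whose sole loss is the message from $p$ to $j$ at round $r$, scheduled so that $f'$ has no acceptable state to let $j$ escape round $r$.

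The main obstacle is precisely the one flagged in the paragraph above the conjecture: $\DEL^{loss}_1$ satisfies neither the common round property nor the common prefix property, so the usual trick of replicating the problematic local state across all processes is unavailable --- the one-loss budget simply does not permit a symmetric deadlock. Worse, a valid strategy $f'$ can in principle look arbitrarily far ahead (for instance demanding all $n$ messages from round $r+1$, or even more from round $r+2$, once a round-$r$ message is missing), since $\DEL^{loss}_1$ guarantees that every round after the single permitted loss is total. Such a strategy may legitimately refuse the precise state $q$ that $f_{loss}$ accepts, which threatens to make the reordering above incompatible with $f'$. Overcoming this will either require showing that any admissible look-ahead can be absorbed by delivering additional future messages to $j$ before its $next_j$ (at the cost of enlarging later HO sets, which must then be checked not to break $h_t = ho$), or constructing an \emph{asymmetric} blocking witness for $f'$ that couples $j$'s frozen local state with the forced progress of the other $n-1$ processes. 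I expect this asymmetric witness construction to be the crux, and plausibly the reason the author leaves this statement as a conjecture rather than a theorem.
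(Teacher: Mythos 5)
There is a fundamental mismatch to point out first: the paper does not prove this statement at all --- it is explicitly left as a conjecture, precisely because the technique you describe (and which works for Theorem~\ref{fnfDom} and Theorem~\ref{suffCfreeDom}) breaks down for $\DEL^{loss}_1$. Your proposal is an honest plan rather than a proof, and its honest part is exactly the part that remains open: you reproduce the canonical-execution/blocking-collection argument and then correctly observe that it cannot be completed. So the proposal does not establish the statement, and it does not go beyond what the paper already says in the paragraph preceding the conjecture.

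To be concrete about where your sketch would fail: in the case $ho(r,j) = \Pi \setminus \{p\}$, your candidate $c_{block}$ (sole loss being the round-$r$ message from $p$ to $j$) cannot produce an invalid execution of $f'$. Under that collection every process other than $j$ receives all messages forever, so in $st(f',c_{block})$ (or any execution of $c_{block}$) all future-round messages to $j$ except the single lost one are eventually delivered; validity of $f'$ then only forces $f'$ to let $j$ advance in \emph{some} state that may contain arbitrarily many messages from rounds $r+1, r+2, \dots$ --- not in the specific state $q$ (with exactly $n-1$ messages from round $r$ and $n-1$ from round $r+1$) occurring in your reordered execution $t$. Hence no contradiction with $q \notin f'$ is obtained, and as you note, a valid $f'$ may legitimately demand more look-ahead than $f_{loss}$ does. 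Absorbing that look-ahead by delivering extra future messages to $j$ before its $next_j$ is also not free: the heard-of collection generated would then contain those extra messages in later heard-of sets, and $h_t = ho$ can be lost (e.g.\ when $j$ also misses a message at round $r+1$). The missing idea is exactly the asymmetric blocking witness you name at the end --- an argument that constrains $f'$ using the fact that $j$'s frozen state forces the other $n-1$ processes to progress --- and neither your proposal nor the paper supplies it; new proof techniques beyond standard/canonical executions would be needed, which is why the statement stands as a conjecture.
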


This example demonstrates two important points about strategies using the future.
First, for some delivered predicates (for example $\DEL^{loss}_1$), they dominate
    conservative strategies.
  Secondly, because any additional message might influence whether the strategy allows
    a change of round or not, proof techniques based on standard executions don't
    work for these strategies.

\section{Conclusion}

In this article, we propose a formalization of the heard-of predicate to use
when studying a given operational model through the Heard-Of model. This formalization
comes with the following methodology:
\begin{itemize}
  \item Extract a delivered predicate from the informal model, either through
    direct analysis or by building the predicate from simpler building blocks through
    operations.
  \item Compute a dominating strategy for this delivered predicate, either
    by direct analysis or by combining dominating strategies for the simpler
    building blocks.
  \item Compute the heard-of predicate generated by this strategy on this
    delivered predicate.
\end{itemize}

\noindent
This result captures the most constrained predicate which can be implemented
on top of the initial model. If a round-based algorithm is proved correct
on this predicate, its correctness follows on the original model; and if no algorithm
exists for a given problem on this predicate, this entails that no round-based
algorithm solves the problem on the original model.

The obvious follow-up to this research is to tackle strategies which look
into the future, and predicates like $\DEL^{loss}_1$ that are useful for such strategies.
Doing so will require completely new proof techniques, as the ones presented
here implicitly rely on the ability to make only the messages in the past
and present rounds matter.
A more straightforward extension of this work would be the study of more
delivered predicates, both for having more building blocks to use with operations,
and to derive more heard-of predicates.

\paragraph{Acknowledgement} This work was supported by project PARDI ANR--16--CE25--0006.
The content of this article was partially published in
\emph{Characterizing Asynchronous Message-Passing Models Through
Rounds}~\cite{ShimiOPODIS18} published at OPODIS 2018, and
\emph{Derivation of Heard-Of Predicates From Elementary Behavioral
Patterns}~\cite{ShimiFORTE} published at FORTE 2020.

\bibliographystyle{plain}
\bibliography{biblio}

\appendix

\section{Omitted proofs}

\begin{lem}[Lemma~\ref{stCorrect} Correctness of Standard Execution]%
  \label{app:proofStCorrect}
  Let $c$ be a delivered collection and $f$ be a strategy.
  Then $st(f,c) \in \execs_f(c)$.
\end{lem}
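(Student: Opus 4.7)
The plan is to verify, in turn, the three sets of properties that together define $\execs_f(c)$: being an execution (Definition~\ref{defExec}), being an execution of the collection $c$ (Definition~\ref{defExecColl}), and being an execution of the strategy $f$ (Definition~\ref{defExecStrat}). The proof is essentially mechanical bookkeeping around the counter $round_r^k$, so my plan is to set up that accounting first, then read off each property from Definition~\ref{defStdExec}.

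First, I would show $st(f,c)$ is an execution. For \emph{delivery-after-sending}, every $deliver(r',k,j)$ appearing in the construction lies in some $dels_r$ where $r' = round_r^k$ equals $1$ plus the number of $next_k$ transitions that have occurred in $changes_1, \dots, changes_{r-1}$; by an induction on $r$ this is exactly $r'-1$ prior $next_k$ transitions, giving the condition of Definition~\ref{defExec}. For \emph{unique delivery}, I would note that $round_r^k$ strictly increases with each iteration in which $k$ changes round, so a given triple $(r',k,j)$ can be introduced in at most one $dels_r$. For \emph{once stopped, forever stopped}, I would argue by induction that once a $changes_r$ produces the empty set of $next_j$ transitions (and hence $ord(\{stop\})$), every subsequent $dels_{r+1}$ is empty (no process has changed round to send new messages), so the same $stop$-clause triggers again, and this propagates forever.

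Second, I would verify $st(f,c) \in \execs(c)$ using Definition~\ref{defExecColl}. One direction is immediate: the construction only produces deliveries for $k \in c(round_r^k, j)$, and by the $round_r^k$ analysis above, $k$ has already made $round_r^k - 1$ next transitions. Conversely, if $k \in c(r',j)$ and $k$ does make at least $r'-1$ changes of round in $st(f,c)$, I would identify the unique iteration $r$ at which $k$'s round counter reaches $r'$ (if $r'=1$, then $r=1$; otherwise it is right after the $(r'-1)$-th $next_k$), and check that the construction places $deliver(r',k,j)$ in $dels_r$.

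Third, I would verify the two conditions of Definition~\ref{defExecStrat}. ``All nexts allowed'' is direct from the definition of $changes_r$, which only inserts $next_j$ when the local state of $j$ at that point (exactly $\langle round_r^j, \{(r'',k) \mid deliver(r'',k,j) \in \bigcup_{r''' \in [1,r]} dels_{r'''}\}\rangle$) lies in $f$. For fairness, suppose $p$ makes only finitely many $next_p$ transitions in $st(f,c)$; then after some iteration, $p$ is never selected into any $changes_r$, which by construction means $q_p^{st(f,c)}[i]$ is not in $f$ at the beginning of each such $changes_r$, giving infinitely many indices with $q_p^{st(f,c)}[i] \notin f$. The main obstacle throughout is not conceptual but purely the careful bookkeeping linking $round_r^k$ with the number of prior $next_k$ transitions in the concatenated execution; once that correspondence is stated as a small lemma by induction on $r$, every clause above follows directly from Definition~\ref{defStdExec}, which is why the paper relegates the full proof to the appendix.
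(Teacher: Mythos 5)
Your proposal is correct and follows essentially the same route as the paper's appendix proof: a clause-by-clause verification of Definitions~\ref{defExec}, \ref{defExecColl} and \ref{defExecStrat}, with the bookkeeping identity linking $round_r^k$ to the number of prior $next_k$ transitions doing the work for delivery-after-sending, uniqueness, and the converse direction of the execution-of-$c$ check. Your phrasing of uniqueness via the strict increase of $round_r^k$ and of fairness via non-selection into $changes_r$ matches the paper's argument in substance.
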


\begin{proof}[Proof idea]
First, the proof shows that $st(f,c)$ is indeed an execution by verifying
the three properties of Definition~\ref{defExec}. Then, it shows that
$st(f,c)$ is an execution of the delivered collection $c$
(Definition~\ref{defExecColl}: the delivered messages in $st(f,c)$ are
exactly those from $c$). Lastly the verification of the two conditions of
Definition~\ref{defExecStrat} ensures that $st(f,c)$ is an execution of
strategy $f$.
\end{proof}

\begin{proof}
  Let us first show that $st(f,c)$ is an execution by showing each
  point of Definition~\ref{defExec}.
  \begin{itemize}
    \item Delivered after sending:
      By Definition~\ref{defStdExec} of the standard execution
      there are $r-1$ transitions $next_p$ before
      the messages of $p$ sent at round $r$ are delivered.
    \item Delivered only once:
      If a message sent at round $r$ by $p$ is delivered,
      we know from the previous point that $p$ reaches
      round $r$ before the delivery. Let $r'$ be such that
      $changes_{r'}$ contains the ${(r-1)}^{\mbox{th}}$ $next_p$ of
      $st(f,c)$.
      Then the message is delivered in $dels_{r'+1}$
      by Definition~\ref{defStdExec} of the standard execution.
      For all
      $r'' > r'+1$, if there are deliveries from $p$
      in $dels_{r''}$, this entails that $next_p \in changes_{r''-1}$,
      and thus that $p$ is not anymore at round $r$. By
      definition of $dels_{r''}$, it only delivers messages
      sent at the current rounds of processes.
      We conclude that there is only one delivery of the
      message.
    \item Once stopped, forever stopped:
      By Definition~\ref{defStdExec} of the standard execution,
      if $changes_r$ contains only $stop$,
      this means that $f$ does not allow any process to change
      rounds with the currently received messages.
      By definition of $dels_r$, it only delivers messages
      from processes that changed round in $changes_{r-1}$.
      If there is some smallest $r_{stop}$ such that
      $changes_{r_{stop}} = \{stop\}$, then $dels_{r_{stop}+1} = \emptyset$.
      This means the local states of processes do not change,
      and thus $changes_{r_{stop}+1} = \{stop\}$.
      By induction, if there is some $stop$ in $st(f,c)$,
      the rest of the execution contains only $stop$ transitions.
  \end{itemize}

  Hence $st(f,c)$ is an execution.
  Next, let's show that $st(f,c)$ is an execution of $c$. By
  Definition~\ref{defExecColl}, this means the delivered messages
  are exactly those from $c$, for processes that reached the round
  where they send the message.
  By Definition~\ref{defStdExec} of the standard execution,
  all deliveries are from messages in $c$.
  By Definition~\ref{defStdExec} of the standard execution, if
  $p$ reaches round $r$, there is a smallest $r' > 0$ such
  that $round_{r'}^p = r$. This means that $dels_{r'}$ contains
  the deliveries of all the messages $(r,p,q)$ such that $p \in c(r,q)$.
  Hence $st(f,c)$ is an execution of $c$.
  Finally, we show that $st(f,c)$ is an execution of $f$. We check the two
  conditions of Definition~\ref{defExecStrat}:
  \begin{itemize}
    \item (All Nexts Allowed) By Definition~\ref{defStdExec} of the
      standard executions, changes of round only occur
      when the local state is in $f$.
    \item (Fairness) If some process is blocked forever at some round,
      this means by Definition~\ref{defStdExec} of the standard execution
      that its local state was not in $f$ for an infinite number of $changes_r$,
      and so for an infinite number of times.
  \end{itemize}
  We conclude that $st(f,c) \in \execs_f(c)$.
\end{proof}


\end{document}